\numberwithin{equation}{section}
\theoremstyle{definition}
\def\be{\begin{equation}}
\def\ee{\end{equation}}
\def\barr{\begin{array}}
\def\earr{\end{array}}
\def\al{\alpha}
\def\1{\tilde{1}}
\def\2{\tilde{2}}
\def\3{\tilde{3}}
\newtheorem{theorem}{Theorem}[section]
\newtheorem{Proposition}{Proposition}[section]
\newtheorem{Corollary}{Corollary}[section]
\newtheorem{remark}[theorem]{Remark}
\newtheorem{lemma}[theorem]{Lemma}
\newtheorem{definition}[theorem]{Definition}
\numberwithin{equation}{section}
\newcommand{\ba}{\begin{equation}\begin{aligned}}
\newcommand{\ea}{\end{aligned}\end{equation}}
\newcommand{\bml}{\begin{multline}}
\newcommand{\eml}{\end{multline}}
\newcommand{\nablab}{\nabla \!\!/}
\newcommand{\CC}{\mathbb{C}}
\newcommand{\RR}{\mathbb{R}}
\newcommand{\ZZ}{\mathbb{Z}}
\newcommand{\dd}{\mathrm{d}}
\newcommand{\pd}{\partial}
\newcommand{\TT}{\mathcal{T}}
\newcommand{\BB}{\mathcal{B}}
\newcommand{\OO}{\mathcal{O}}
\newcommand{\MM}{\mathscr{M}}
\newcommand{\LL}{\mathscr{L}}
\newcommand{\HH}{\mathscr{H}}
\begin{document}

\title {Primitive Forms without Higher Residue Structure\\ and\\ \ \ \ Integrable Hierarchies\ (I)}

\author{Konstantin Aleshkin}
\address{Columbia University \\ 2990 Broadway New York, NY 10027}
\email{aleshkin@math.columbia.edu}

\author{Kyoji Saito}
\address{
Research Institute for Mathematical Sciences, Kyoto University\\
 Sakyoku Kitashirakawa, Kyoto, 606-8502, Japan\\
  Institute for Physics and Mathematics of the Universe \\
 the University of Tokyo, 5-1-5 Kashiwanoha, Kashiwa, 277-8583, Japan \\ 
 Laboratory of AGHA, Moscow Institute of Physics and Technology\\
 9 Institutskiy per., Dolgoprudny, Moscow Region, 141700, Russian Federation.}
\email{kyoji.saito@ipmu.jp}

\subjclass[2000]{Primary}

\date{}

\begin{abstract} We introduce {\it primitive forms with or  without higher residue
structure} and explore their connection with
the flat structures with or without a metric~\cite{Sabbah} and integrable hierarchies of KdV type. Just as the classical case of primitive forms with metric~\cite{LLS}, the primitive forms without metrics are constructed as the positive part of the Birkhoff decomposition of formal oscillatory integrals with respect to the descendent variable. The oscilating integrals of a primitive form without metric give rise to a hierarchy of commuting  PDE of the KdV type as inthe case of primitive forms with metric. This shall be studied in (II).

\end{abstract}

\maketitle


  \section{Introduction}
  We introduce {\it primitive forms without higher residue
  structure} (or the same to say, {\it primitive forms without metric structure})\footnote
  {In the present paper,  we shall use a convention that the expression ``without metric" includes also the cases  ``with metric".}
  and explore their connection with
  the flat structures without a metric and integrable hierarchies of the KdV type.

  We start with a historical overview on primitive forms.
  Primitive forms were introduced by the second named author as the integrands for vanishing cycles in a universal unfolding family of a hypersurface isolated singularity. They are formulated in terms of higher residue pairings~\cite{HigherResidue83}. Namely, a primitive form is a top degree relative de Rham
  cohomology class on a certain family of (open) hypersurfaces, satisfying a system of
  bilinear relations described by the higher residue pairings~\cite{HigherResidue83, LLS}. In particular, the conditions include a purity condition of the higher residue pairing on the first derivatives of the primitive form. Then, those
  bilinear relations on a primitive form induce a geometric structure on the parameter space of the family, called the flat
  structure~\cite{FromPrimitiveToFrobenius}. That is, the tangent bundle of the parameter space obtains an algebra structure compatible with a
  flat metric on it. This implies the existence of a flat coordinate system on the parameter space and a potential function with respect to the triple tensor product of vector fields. Actually,
  such structure on the parameter space has been found already before on the orbit space
  of a finite reflection group (\cite{SReflection, SYS}) and was called the flat structure. Later a similar structure was found in 2D-topological field theories~\cite{Dubrovin}, and was 
  axiomatized by Dubrovin as a Frobenius manifold structure~\cite{Dubrovin, Manin, Hertling}.

  Sabbah~\cite{Sabbah} and Manin~\cite{Manin} considered a generalized version of that flat 
  
flat structure without metric got a new interest and impetus since the structure appeared on the deformation parameter space of Okubo-type ordinary differential equations, and consequently, one finds a Saito structure without metric on the orbit space of a unitary reflection
group~\cite{KMS1, KMS2, KMS3} (c.f. also~\cite{KoMi}). Similar structures appear in the study of relative and open Gromov-Witten invariants~\cite{Solomon:2019lzb, BuryakOpenSaitoTheory}

On the other hand, the primitive forms, as mentioned already, to be considered as integrands for vanishing cycles, can be re-considered also as the integrands over Lefschetz thimbles for oscillatory integrals for the unfolding family of defining function of the hypersurface singularity.\footnote{
  To be exact, there is a one dimensional difference between vanishing cycles and Lefschetz thimles, and accordingly a degree one difference between the corresponding integrands. This can be explained by the relation of a top-form on an ambient space and its Poincare residues  on hypersurface-fibers. Here, in the universal unfolding, the initial function $f$ for the unfolding and its defining domain $X$, is called the {\it generating center} according to R.\! Thom \cite{Thom}.}
Then, primitive forms can be captured as the positive part in the {\it descendent variable}\footnote
{We mean by the {\it descendent variable} the denominator variable $z$ in the exponential of the oscillatory integral factor.
} of the Birkhoff decomposition of the evolution image by an oscillatory integral factor action 
on the top degree form on the generating center belonging to a  suitable section of filtered relative De Rham cohomology group of the generating center~(\cite{LLS} ). This view point naturally reduces the defining conditions on a primitive form to the conditions on the pair of the top form and the section on the generating center to be {\it good}. In particular, the purity condition of higher residue pairings on the primitive forms is reduced to a purity condition of the classical residue pairing defined on the section.  

\bigskip
The present paper  is the first part in a series of two papers. In the present  Part I, we introduce a concept of a primitive form without higher residue structure (that is, we remove a purity condition on the primitive forms) \footnote{  Actually this version of primitive forms without higher residue 
structure was already discussed by  some authors, but was not written explicitly (Lectures by the second author in
Ecole Polytechnique, Neimehen  and Nice (1980). See also~\cite{BaKo}).  
} and discuss their connection to the flat structure without metric.

We first recall in \S2-4 some
basic frame works and techniques on primitive forms.  Namely, in \S2, for a
given unfolding $F(\underline{x},\underline{t})$ of a function
$f(\underline{x})$ with an isolated critical point (=the generating center), we attach the filtered De
Rham cohomology groups (that is, the cohomology theory, which contains, so
called, the {\it descendent variable $z$}). Then in \S3 the concepts of sections and
opposite filtrations on the filtered de-Rham cohomology groups are recalled. In \S4, we recall
the technique of formal analysis (i.e.\ perturbation theory) of the deformation
of oscillatory integrals, developed in \cite{LLS}. That is, we formally study
the actions of the oscillatory integral factor
$e^{(f(x)-F(\underline{x}.\underline{t}))/z}$ which intertwines  the twisted de Rham cohomology group for $f$  with that for $F$. The
main difficulty here is the {\it mixture of the descent variable $z$ and the
deformation variable $\underline{t}$}, since, in the oscillatory integral
factor, the descendent variable $z$ has essential singularity at $z=0$. However,
the perturbative approach allows us to treat only finite order of poles in the
descent variable $z$ so far as we remain in a formal neighborhood of the origin of the deformation space $S$ of bounded order. 

After these preparations, the definitions of primitive
forms without metric structures are given in \S5.  As in the classical case \cite{LLS}, their formal construction is achieved as follows. We first consider a good section without metric (which is equivalent data of a good opposite filtration without metric)\footnote
{We mean by ``without metric" that we remove the purity condition on the residue pairing on the section in the definition.} to the filtered De Rham cohomology group of the generating center. Then a pair of such section without metric and a top degree De Rham cohomology class in the section is called a good pair without metric if they satisfy certain primitivity and homogeneity conditions. Then the primitive form without metric is given by the 
positive power part w.r.t.~the descendent variable, i.e.\! the positive part of the Birkhoff decomposition, of the evolution image by the oscillatory integral factor action on the top form. We show in \S6 that there is a one to one correspondence between (a) the set of primitive forms without  metric structure  and (b) the set of  good pairs without metric condition.  


The construction of a flat structure without metric from a primitive form without metric is  an easy formal procedure which is parallel to the classical case \cite{S2} just taking care of the lack of the purity condition on the residue pairing, given  in \S7.  
We compare the flat structure without metric with the already known definitions
of Saito structure without metric and F-manifolds.


\medskip
In the coming Part II, we explore the connection of these primitive forms without metric structure with integrable hierarchies
of the KdV type.  
Let us outline how the integrable hierarchy is constructed from a primitive form and the
corresponding flat structure. 
We note that the evolution by the oscillatory integral factor preserves opposite filtration both in the filtered De Rham cohomology groups for the generating center and for the unfolding. Therefore, the inverse evolution of the primitive form gives the prinicipal part of the Laurent expansion with respect to the descendent variable in the generating center. That is, the oscillatory integral of the primitive form given by the inverse evolution map applied to the primitive form is a formal power series in the inverse powers of the descendent variable $z$. As a consequence of the properties
of the primitive form, those coefficients of  the inverse oscillatory integral satisfies a certain system of differential equations. These equations are crucial in the construction of integrable hierarchies.
%
%
The decomposition coefficients are called deformed flat coordinates, since the first coefficients in
the descendent variable coincide with actual flat coordinates.
The particular choice of such a set of deformed flat coordinates defines a calibration of
the flat structure (without metric).

In the case where the primitive form has a compatible metric structure, the corresponding flat structure is a Frobenius manifold.
There is a well-known construction that associates an integrable system to a Frobenius manifold
with an additional data called calibration~\cite{Dubrovin}.
Such an integrable system is called the principle
hierarchy associated to a given calibrated Frobenius manifold. The equations of the principle
hierarchy are defined on an appropriate version of the loop space of the Frobenius manifold.
The most notable example is the (dispersionless) KdV hierarchy which is constructed on the
one-dimensional deformation space of the $A_1$-singularity. The KdV hierarchy is a system of PDE
on one function. The range of this function is identified with the deformation space of the
singularity.


In the case of primitive forms without metric the flat structure on the deformation
space is basically a Saito structure without metric or a conformal flat F-manifold.
It is known in the literature~\cite{Lorenzoni1, Lorenzoni2}, that such structures give rise to integrable
systems of PDE as well via a very similar construction. However, the resulting PDE are
not necessarily Hamiltonian which is due to the fact that the flat structure has no natural metric.
We remark that there is another construction of integrable systems from generalized
Frobenius-type structures which arise from quantum K-theory~\cite{TodorK}. In the quantum K-theory
case a Frobenius structure has a flat metric but lacks flat identity and Euler vector fields (see
the main text).

The second paper is devoted the explanation of the structure of
the non-Hamiltonian infinite commuting PDE that can be constructed from primitive forms
without metric structure.
The defining equations of the integable hierarchy
are also constructed using the oscillatory integrals of the
primitive form. We also discuss generalizations of the
properties of integrable hierarchies from the case of primitive forms with metric structure.




\section{Filtered De Rham cohomology modules $\HH_F$}

In this section, we introduce the semi-infinite filtered De Rham cohomology
modules associated with an isolated critical point of a holomorphic function
(cf.~\cite{Hertling,S1,HigherResidue83,FromPrimitiveToFrobenius,LLS}).

\medskip Let $X$ be an open Stein and contractible neighborhood of the origin
$0$ in the complex space $\CC^{n+1}$ ($n\in \ZZ_{\ge0}$). Let $f$ be a holomorphic
function on $X$ whose critical point is only at the origin.  

\begin{definition}
\label{Unfolding} A tuple $(Z,S,p)$ is called a \textit{frame} of an unfolding $F$ of
$f$ (or, $F$ is an unfolding of $f$ over a frame $(Z,S,p)$), if it satisfies the following conditions 1., 2. and 3.

1.  The space $Z$ is a Stein open neighborhood of $0\in \CC^{n+m+1}$
($m\in\ZZ_{\ge0}$), $S$ is a Stein open neighborhood of the base point $0\in
\CC^m$, $p:Z\to S$ is the natural projection with $p^{-1}(0)=X$.

2.  The $F$ is a holomorphic function on $Z$ such that the restriction $F\mid
_{X=p^{-1}(0)}$ coincides with $f$.

3. The restriction $p\mid_{C_F}$ of the projection $p$ to the relative critical
set $C_F$ (see \eqref{eq:Critical} below) is finite.

\smallskip In particular, $F=f$ itself is a trivial unfolding defined over the
frame $(X,S=\{0\},p_0)$ where $p_0$ is the trivial projection $X\to \{0\}$.
\end{definition}

We allow to change and shrink $X, \ S$ and $Z$ suitably if necessary. For a
convenience, we assume further an auxilary data  that there is a projection 
$$
\pi_X\ : \ Z\ \longrightarrow\  X
$$ 
so that $Z$ is embedded into $X\times S\subset \CC^{n+1}\times \CC^{m}$ as a domain by the map $(\pi_X,p)$. 
We shall denote by
$(x_0,\ldots,x_n)$ the coordinate system of $\CC^{n+1}$ and by $t_1,\ldots,t_m$ the
coordinate system of $\CC^\mu$. Partial derivatives with respect to the
coordinate system are denoted by $\partial_{x_0},\ldots,\partial_{x_n}$ and
$\partial_{t_1},\ldots,\partial_{t_m}$, respectively.  Under this notation, the
relative critical set is defined by 
\be
\label{eq:Critical} 
C_F:=\{\partial_{x_0}F=\ldots=\partial_{x_n}F=0\} \quad
\text{and}\quad \OO_{C_F}:=\OO_Z/(\partial_{x_0}F, \ldots,\ \partial_{x_n}F).
\ee 
The relative critical set $C_F$ is Cohen-Macaulay variety such that the
projection $p\mid_{C_F}$ is finite $\mu$-ramified covering over $S$. Actually,
the direct image $p_*(\OO_{C_F})$ is locally free $\OO_S$-module of rank
$\mu:=\dim_\CC \OO_X/(\partial f)$, called the Milnor number of $f$. We define a
natural $\OO_S$-homomorphism, 
\be
\label{eq:KS} 
\mathcal{T}_S\ \to\ p_*\mathcal{O}_{C_F}, \qquad v \ \mapsto \
\tilde{v}F|_{C_F} 
\ee
where $\mathcal{T}_S$ is the sheaf of holomorphic tangent vector fields on $S$ and $\tilde v$ is any lifting of a vector
field $v$ on $U\subset S$ to that on $p^{-1}(U)\subset Z$ such that $p_*(\tilde{v})=v$. Then, the restriction
$\tilde{v}F|_{C_F\cap p^{-1}(U)}$ does not depend on the choice $\tilde v$.

\medskip
\noindent
\begin{definition}
\label{UniversalUnfolding} An unfolding $F$ is called \textit{universal} if the morphism
\eqref{eq:KS} is an $\OO_S$-isomorphism (\cite{Thom}).  
\be
\label{eq:KS2} 
\mathcal{T}_S\ \simeq \ p_*\mathcal{O}_{C_F} 
\ee 
\end{definition}

The definition implies, in
particular, the dimension $\dim S=:m$ is equal to the Milnor number $\mu$ of
$f$.  
Using this isomorphism, we particularly introduce global vector fields on $S$ 
\be 
\label{eq:UnitEuler0}
  \qquad   \partial_1 \quad \text{and} \quad E \quad  \in \quad \Gamma(S,\TT_S)
\ee    
called the {\it unit vector field} and the {\it Euler vector field}, respectively, as those
corresponding to the unit
1 and the class of the function $F$ in the RHS of  \eqref{eq:KS2}, respectively.  That is, we have 
\be
\label{eq:UnitEuler} 
\tilde{\partial}_1 F\mid_{C_F} = 1 \quad {\text and}\quad
\tilde{E}F\mid_{C_F} = F \mid_{C_F}.  
\ee
(See \cite{QH} for a justification of this notion).

We usually choose $ \tilde{\partial}_1$ in such manner that $ \tilde{\partial}_1
F=1$.  We shall often use coordinates system $t_1,\ldots,t_m$ such that
$\partial_1=\partial_{t_1}$.  The tangent bundle $\TT_S$ of $S$ gets an
$\OO_S$-algebra structure through the isomorphism \eqref{eq:KS2}, where we
denote by $*$ the product, and $\partial_1$ plays the role of the unit element.  
\be
\label{eq:*}
\begin{array}{cl} * 
\quad:\quad \TT_S\ \times \TT_S \ \longrightarrow \ \TT_S &
\\ \\ (\widetilde{v_1* v_2} ) (F)\mid_{C_F} := \tilde{v_1}(F)\mid_{C_F} \cdot \
\tilde{v_2}(F)\mid_{C_F} \quad& \text{for } v_1,v_2 \in \TT_S.
\end{array} 
\ee 

Next, we introduce the filtered de-Rham cohomology groups
\eqref{eq:FilteredDeRham1} and \eqref{eq:FilteredDeRham2} associated with any
unfolding $F$ of $f$. Let $\Omega^*_{Z/S} = \oplus_{i=0}^{n+1}\Omega_{Z/S}^i$,
where $\Omega_{Z/S}^i:=\Omega^i_Z/p^*\Omega_S^1\wedge \Omega^{i-1}_Z$ ($i=0,\ldots,n+1$), be the
sheaf of the exterior algebra of holomorphic relative de Rham differentials on
$Z$ relative to the map $p$. We consider two structures of complex on it: the relative de
Rham complex $(\Omega^*_{Z/S}, \dd=\dd_{Z/S})$ over the base space $S$ and the
Poincare complex $(\Omega^*_{Z/S}, dF\wedge)$, which are anti-commuting to each
other.
Combining the two complexes linearly, we introduce the key complex (\cite{HigherResidue83})
of the present paper: 
\be
\label{eq:D_F} (\Omega_{Z/S}^*((z)),D_F).  
\ee 
Here (1) $\Omega^*_{Z/S}((z)):=
\underset{\infty\leftarrow k}{\lim} (\Omega^*_{Z/S}\otimes\CC[z,z^{-1}])/z^k
(\Omega^*_{Z/S}\otimes\CC[z])$ ($z$ is a formal variable commuting with the
differentials $\dd$ and $dF\wedge$. See Footnote 1), and (2) the differential $D_F$ of the
complex is the linear combination 
\begin{equation}
  D_F\ := \ z\dd_{Z/S} + dF\wedge \quad \text{such
    that} \quad (D_F)^2=0 
\end{equation}

Since (i) the relative critical points of $F$ are isolated, (ii) the following
sequence \eqref{eq:Poincare} is exact 
\begin{equation}
  \label{eq:Poincare} 
  0\to \OO_Z \overset{dF\wedge}{\longrightarrow}
  \Omega_{Z/S}^1 \overset{dF\wedge}{\longrightarrow}\Omega_{Z/S}^2
  \overset{dF\wedge}{\longrightarrow} \cdots \overset{dF\wedge}{\longrightarrow}
  \Omega_{Z/S}^{n} \overset{dF\wedge}{\longrightarrow}\Omega_{Z/S}^{n+1} \rightarrow
  \Omega_{Z/S}^{n+1} /\Omega_{Z/S}^{n}\to 0 
\end{equation}
(De Rham Lemma
\cite{deRham}\cite{S4}) and (iii) the morphism $p$ is Stein, we see that the direct
image $\RR p_*$ of the complex \eqref{eq:D_F} is pure $n+1$-dimensional
over $\OO_S((z))$ \cite{HigherResidue83, FromPrimitiveToFrobenius}. That is, only $n+1$-th cohomology group is non-trivial.  Thus we set
\begin{equation}
  \label{eq:FilteredDeRham1} 
  \HH_F:= \RR^{n+1} p_*(\Omega_{Z/S}^*((z)),D_F)
  =p_*(\Omega^{n+1}_{Z/S}((z)))/D_F(p_*(\Omega^{n}_{Z/S}((z)))) 
\end{equation}
which is
naturally a $\OO_S((z))$-module. The cohomology class of $s\in
p_*(\Omega^n_{Z/S}((z)))$ is denoted by $[s]$.

We also define the $k$-th filter $(F^k\Omega_{Z/S}^*((z)),D_F)$ of the complex
\eqref{eq:D_F} by $F^k\Omega_{Z/S}^*((z)):=z^{-k}\Omega_{Z/S}^*[[z]]$ for $k\in \ZZ$.
\footnote{
In the early formulations~\cite{S1, HigherResidue83, FromPrimitiveToFrobenius} of the filtered de-Rham module $\HH_F$, instead of the variable $z$ as in the present paper, we used the variable $\delta_1^{-1}$ in order to indicate that it is a pseudo differential operator (or  Fourier integral operator) associated to the inverse of the primitive derivation $\delta_1=\partial_1$. We switch the view point and regard $\HH_F$ as a module over the space of  the Fourier dual variable $z$ of the primitive derivation $\delta_1$ (this  is the same for the variable $t$ in~\cite{LLS}). For this reason, we have an unfortunate opposite sign convention  between the index $k$ for the filtration $F^k\Omega_{Z/S}^*((z))$ and the exponent $-k$ of $z$ in $z^{-k}\Omega_{Z/S}^*[[z]]$.
}
 Their
cohomology groups are also pure $n+1$-dimensional over $\OO_S[[z]]$ ([ibid]), and we set 
\begin{equation}
  \label{eq:FilteredDeRham2} 
  \HH_F^{(k)}:= \RR^{n+1} p_*(F^k\Omega_{Z/S}^*((z)),D_F) =
  z^{-k}p_*(\Omega^{n+1}_{Z/S}[[z]])/D_F(z^{-k}p_*(\Omega^{n}_{Z/S}[[z]]) 
\end{equation}
which is
naturally a $\OO_S[[z]]$-module.
Then, we see the following semi-infinitely filtered structure on $\HH_F$
together with the Gauss-Manin connection and the higher residue pairings $K_F$
(\cite{S1,S3})

\begin{Proposition} \label{GMHR} 1. The inclusion $F^k\Omega^*_{Z/S}((z))\subset
\Omega^*_{Z/S}((z))$ for $k\in\ZZ$ induces an embedding (injection) $\HH^{(k)}_F
\subset \HH_F $ as $\OO_S[[z]]$-submodule so that we obtain an increasing
filtration in $\HH_F$ : 
\begin{equation}
  \label{eq:Filtration} 
  \cdots\subset \HH^{(-1)}_F\subset \HH_F^{(0)} \subset
  \HH_F^{(1)} \subset \cdots\subset \HH_F
\end{equation}
with the regularity property:
\begin{equation}
  \label{eq:Filter1'} \cup_{k\in\ZZ} \ \HH^{(k)}_F = \HH_F \text{\quad and \quad }
  \cap_{k\in \ZZ} \HH^{(k)}_F = \{0\}.
\end{equation}
The multiplication by $z$ induces $\OO_S[[z]]$-isomorphisms between the filters:
\begin{equation}
  \label{eq:Filter1} z \ : \ \HH_F^{(k+1)} \ \simeq \ \HH_F^{(k)}.  
\end{equation}
That is, $z$ is homogeneous of degree -1 with respect to the filtration (see Footnote 1).

The graded
pieces of the filtration are given by the exact sequence:
\begin{equation}
\label{eq:Filter2} 0\to \HH^{(k-1)}_F \longrightarrow \HH_F^{(k)}
\overset{r_F^{(k)}}{-\!\!\!-\!\!\!\longrightarrow} \Omega_F \to 0
\end{equation}
where
$$ 
\Omega_F:=p_*(\Omega_{Z/S}^{n+1} /dF\wedge\Omega_{Z/S}^{n}) \simeq p_*(\Omega_{Z/S}^{n+1})
/ p_*(dF)\wedge p_*(\Omega_{Z/S}^{n})
\
\space
$$ 
is an $\OO_S$-free module of rank $\mu$, and $r^{(k)}$ is the natural quotient morphism modulo $dF\wedge\Omega_{Z/S}^{n}$ on the leading term of 
$z^{-k}p_*(\Omega_{Z/S}^{n+1})[[z]]$ to $\Omega_F$ such that $r_F^{(k)}=r_F^{(0)}\circ
z^{-k}$.

\medskip 2.  There is a covariant differentiations acting on $\HH_F$, called the
{\it Gauss-Manin connection}:
\begin{equation*}
  \nabla \ :\ \mathcal{T}_{\mathbf{A}_z^1\times S} \times \HH_F \ \longrightarrow
  \ \HH_F
\end{equation*}
given, for $\zeta=[\phi(\underline{x},z)dx_1\wedge\cdots\wedge dx_n]\in \HH_F$,
by
\begin{equation}
  \begin{aligned}
    \label{eq:Gauss-Manin} 
    \nabla_{\frac{d}{dz}}\zeta
    =\Big[\Big(\frac{d\phi}{dz}-z^{-2}F\phi\Big)dx_1\wedge\cdots\wedge dx_n\Big],\\
    \nabla_{v}\zeta =\Big[\Big(\tilde{v}\phi +
    z^{-1}\tilde{v}F\phi\Big)dx_1\wedge\cdots\wedge dx_n\Big] 
  \end{aligned}
\end{equation}
for
$\frac{d}{dz}\in \mathcal{T}_{\mathbf{A}_z^1}$ and $v\in \mathcal{T}_{S}$, such
that

i) The connection $\nabla$ is integrable: $\nabla^2=0$. In particular, $\nabla_{\frac{d}{dz}}$ and $\nabla_{v}$ commute each other.

ii) The connection $\nabla$ over $\mathcal{T}_S$ satisfies the transversality:
$\nabla_{\mathcal{T}_S}: \HH_F^{(k-1)} \to \HH_F^{(k)}$.

iii) The covariant action of $\nabla_{\partial_z}$ has pole of order 2 at $z=0$ \\
\qquad\qquad  (that is, $z^2\nabla_{\partial_z}=-\nabla_{\partial_{z^{-1}}}$ preserves the filteration \eqref{eq:Filtration}).

iv) The operator $\nabla_{z\partial_z+E}$ on $\HH_F$ preserves the filtration \eqref{eq:Filtration}.

\medskip 3.  There is a skew-$\OO_S((z))$-bilinear form, called the higher
residue pairing: \be
\label{eq:HRP} K_F\ : \ \HH_F \ \times \ \HH_F \ \longrightarrow \ \OO_S((z))
\ee such that

i) For $\zeta_1,\zeta_2 \in \HH_F$, we have $K_F(\zeta_1,\zeta_2)=K_F(\zeta_2,
\zeta_1)^*$.

ii) For $\zeta_1,\zeta_2 \in \HH_F$ and $P\in \OO_S((z))$, we have
$$
PK_F(\zeta_1,\zeta_2)=K_F(P\zeta_1, \zeta_2)= K(\zeta_1,P^*\zeta_2).
$$
where we denote by the upper script of $*$ an $\OO_S$-involution of $\OO_S((z))$
defined by $z\mapsto -z$.

iii) For $\zeta_1,\zeta_2 \in \HH_F$ and $v\in \TT_S$, we have
\begin{equation}
  \label{eq:HRv}
  \partial_v K_F(\zeta_1,\zeta_2)=K_F(\nabla_v\zeta_1, \zeta_2) + K_F(\zeta_1,
  \nabla_v\zeta_2) .
\end{equation}

iv) For $\zeta_1,\zeta_2 \in \HH_F$, we have 
\begin{equation}
  \label{eq:HRz}
  z\partial_z K_F(\zeta_1,\zeta_2)=K_F(\nabla_{z\partial_z}\zeta_1,
  \zeta_2) + K_F(\zeta_1, \nabla_{z\partial_z}\zeta_2).
\end{equation}

v) The restriction $K_F \mid_{\HH_F^{(0)}\times \HH_F^{(0)}}$ gives an $\OO_S[[z]]$-linear  map $ \HH_F^{(0)}  \times \HH_F^{(0)} \rightarrow z^{n+1}\OO_S[[z]] $
which makes the following commutative diagram:
\begin{equation}
  \label{eq:HRdeg}
  \begin{array}{rcll}
    \HH_F^{(0)} \ \times \ \HH_F^{(0)}\!\!\!\! & \overset{K_F}{ \longrightarrow } & z^{n+1}\OO_S[[z]] \\
    \\
    r_F^{(0)}\times r_F^{(0)}\  \downarrow   \qquad\qquad &&\quad \quad \downarrow \quad \bmod \ z^{n}\OO_S[[z]] \\
    \\
    \Omega_F \ \times \ \Omega_F\qquad   & \overset {Res}{\longrightarrow} & \quad \ \ \OO_S,
  \end{array}
\end{equation} 
where the $Res$ in the second row is given by the residue symbol (\cite{Hertling})
$$ {\rm Res}\Big[\frac{
  \phi_1(\underline{x},z) \phi_2(\underline{x},z) dx_1\wedge\cdots \wedge
  dx_{n+1}}{\partial_{x_1}F, \cdots, \partial_{x_{n+1}}F} \Big]  ,
$$
for  $\zeta_i=[\phi_i(\underline{x},z) dx_1\wedge\cdots \wedge dx_{n+1}] \in \HH_F^{(0)}$
($i=1,2$).
\end{Proposition}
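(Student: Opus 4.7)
The plan is to prove the three parts in order by working directly with the complex $(\Omega^*_{Z/S}((z)), D_F)$ and exploiting the De Rham Lemma \eqref{eq:Poincare}.

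For Part 1, I would start from the short exact sequence of subcomplexes
$$0 \to F^{k-1}\Omega^*_{Z/S}((z)) \to F^k\Omega^*_{Z/S}((z)) \to z^{-k}(\Omega^*_{Z/S}, dF\wedge) \to 0,$$
whose graded quotient, after reducing the differential $D_F = z\dd + dF\wedge$ modulo $z$, is a shifted copy of the Koszul complex for $dF\wedge$. By \eqref{eq:Poincare} this Koszul complex has cohomology concentrated in top degree $n+1$ equal to $\Omega_F$, so the long exact sequence of $\RR p_*$ yields both the injectivity of $\HH_F^{(k-1)} \hookrightarrow \HH_F^{(k)}$ (hence of $\HH_F^{(k)} \hookrightarrow \HH_F$ by passing to the inductive limit) and the quotient sequence \eqref{eq:Filter2}. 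The isomorphism \eqref{eq:Filter1} is tautological from $F^{k+1}\Omega^* = z \cdot F^k\Omega^*$, and the regularity \eqref{eq:Filter1'} follows from the Laurent series construction together with the $\OO_S$-freeness of each graded piece $\Omega_F$.

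For Part 2, I would first verify that the formulas \eqref{eq:Gauss-Manin} descend to cohomology by a direct computation showing they commute with $D_F$ modulo $D_F$-coboundaries; the key identity is the Leibniz rule applied to both $z\dd$ and $dF\wedge$. Integrability $\nabla^2=0$ then reduces to commutator checks for $[\nabla_{v_1},\nabla_{v_2}]$ and $[\nabla_{d/dz},\nabla_v]$, both of which cancel after expansion. Transversality (ii) is manifest from the $z^{-1}\tilde v F$ factor in \eqref{eq:Gauss-Manin}. For (iii) one rewrites $z^2\nabla_{d/dz} = z^2\pd_z - F$, both of whose terms preserve $z^{-k}\Omega^*[[z]]$. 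For (iv), the only terms in $\nabla_{z\pd_z + E}$ that could obstruct filtration preservation are $-z^{-1}F$ from $\nabla_{z\pd_z}$ and $z^{-1}\tilde E F$ from $\nabla_E$; their sum equals $z^{-1}(\tilde E F - F)$, and by the universality identity \eqref{eq:UnitEuler} the numerator lies in the ideal $(\pd_{x_i}F)$, so we may write $\tilde E F - F = \sum g_i \pd_{x_i} F$ and absorb $z^{-1} dF \wedge (\text{form})$ into a $D_F$-coboundary modulo a filtration-preserving term.

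The main obstacle is Part 3, the construction of the higher residue pairing, which is not visible on $(\Omega^*_{Z/S}((z)), D_F)$ alone and must be built up order by order in $z$. I would follow the microlocal approach of \cite{HigherResidue83}: pair the leading $(n+1)$-form representatives via the Grothendieck residue symbol \eqref{eq:HRdeg} to produce the top $z^{n+1}$ coefficient, then extend to all $z$-orders by imposing simultaneous compatibility with the Gauss-Manin derivations \eqref{eq:HRv} and \eqref{eq:HRz}. Independence of representatives modulo $D_F$-coboundaries is verified by integration by parts on $dF\wedge$, and it is precisely the $z\dd$ contribution that produces the sign flip $z \mapsto -z$ in the involution of i); properties ii), v) and the skew-symmetry i) are then direct structural consequences of the construction. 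The hardest step is verifying that these order-by-order constraints are self-consistent and uniquely determine a $K_F$ satisfying (i)--(v) simultaneously; this is where the genuine content of the higher residue theorem lies and amounts to the existence of a self-dual structure on the filtered complex in the sense of Saito.
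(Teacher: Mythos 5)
The paper itself does not prove this proposition: Parts 1, 2.i)--iii) and 3 are quoted from \cite{S1,HigherResidue83,FromPrimitiveToFrobenius}, and the only argument the authors actually write out is the sketch of 2.iv) in Note 3 following the statement. Your treatment of Parts 1 and 2 is correct and, where it overlaps with the paper, essentially identical: for 2.iv) you use exactly the paper's device of writing $\tilde{E}F-F=\sum_i g_i\partial_{x_i}F$ via universality \eqref{eq:KS2} and absorbing the resulting $z^{-1}dF\wedge(\cdot)$ term into a $D_F$-coboundary modulo a filtration-preserving piece; for Part 1 your short exact sequence of complexes with Koszul graded quotient, combined with the De Rham Lemma \eqref{eq:Poincare}, is the standard argument behind \eqref{eq:Filter2} and the injectivity of the filtration.

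The genuine gap is in Part 3. You correctly identify that the existence of $K_F$ is the real content, but the mechanism you propose --- fix the leading $z^{n+1}$ coefficient by the Grothendieck residue and then ``extend to all $z$-orders by imposing compatibility with \eqref{eq:HRv} and \eqref{eq:HRz}'' --- does not by itself produce the pairing. Those two conditions are first-order differential constraints in $(\underline{t},z)$; they propagate a given pairing along $S$ and rescale it in $z$, but they do not determine the subleading coefficients of $K_F$ from the residue pairing at a point: any $\nabla$-flat skew modification of higher order in $z$ would satisfy them equally well, so the order-by-order constraints are consistent but not rigid. In \cite{HigherResidue83} the pairing is instead constructed from the self-duality of the filtered complex $(\Omega^*_{Z/S}((z)),D_F)$ under the involution $z\mapsto -z$ together with the trace map (Grothendieck duality) for the Stein map $p$, and properties i)--v) are then verified as consequences; the uniqueness is a separate characterization theorem whose hypotheses go beyond \eqref{eq:HRv}--\eqref{eq:HRz} alone. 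As written, your Part 3 restates what must be proved rather than proving it; since the paper also defers to the references here you are no worse off than the authors, but the proposed order-by-order construction should not be presented as the actual route to existence.
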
 

\bigskip
\noindent
{\it Note.} 1.  The explicit formula \eqref{eq:Gauss-Manin} implies that the leading terms of the covariant actions of $\nabla_v$ for $v\in\TT_S$ and of $\nabla_{z\partial_z}$ with respect to the filtration on $\HH_F$  are given by the following commutative diagrams, respectively.
\be
\label{eq:initial}
\begin{array} {rclcrcl}
  \HH_F^{(k-1)}\!\! & \overset{\nabla_v}{-\!\!\!-\!\!\! \longrightarrow} & \HH_F^{(k)} 
  &\qquad\qquad  &   \HH_F^{(k-1)}\!\! & \overset{\nabla_{z\partial_z}}{-\!\!\!-\!\!\! \longrightarrow} & \HH_F^{(k)} a
  \\
  \\
 r_F^{(k-1)}\downarrow\ \ \ &  &\ \  \downarrow r_F^{(k)}    
 &\qquad &   r_F^{(k-1)}\downarrow\ \ \ &  &\ \  \downarrow r_F^{(k)}    
\\
 \\
 \Omega_F\ & \overset{\tilde{v}F|_{C_F}}{-\!\!\!-\!\!\! \longrightarrow} &\  \Omega_F  
   &\qquad &  
    \Omega_F\ & \overset{-F|_{C_F}}{-\!\!\!-\!\!\! \longrightarrow} &\  \Omega_F  
 \end{array}
 \ee
 
2.  The following elementary fact shall be used later in the uniqueness of the evolution map $e^{(f-F)/z}$ in \S4. 
The restriction $\HH_F\otimes_{\OO_S}\OO_S/m$ ($m$: the maximal ideal of $\OO_{S,0}$) of $\HH_F$ at the origin $0\in S$ is naturally isomorphic to $\HH_f$ since the homomorphism of $\HH_F\to \HH_F\otimes_{\OO_S}\OO_S/m$ is naturally induced from the pull-back morphism $(\Omega_{Z/S}^*((z)),D_F)
\to (\Omega_{X}^*((z)),D_f)$ of the complexes induced from the inclusion
morphism $X\subset Z$.
\medskip

3.  For a later use in \S6, we give a sketch of a proof of
2. iv). We prove only that the filter $\HH_F^{(0)}$ is preserved by
$\nabla_{z\partial_z+E}$.
 Let $\tilde{E}$ be a lifting of $E$. Then by Definition
\ref{UniversalUnfolding}, there exists $g_i\in \Gamma(Z,\OO_Z)\ \
(i=1,\ldots,n)$ such that $F-\tilde{E}F= \sum_i g_i\partial_{x_i}F$. Therefore,
for any $[s]\in \HH_F^{(0)}$, we calculate
\begin{multline} \nabla_{z\partial_z+E}[s]=\big[(z\partial_z-F/z+\tilde{E}
  +\tilde{E}F/z) s\big] 
  \\ =  \big[(z\partial_z+\tilde{E} -(\sum_i g_i\partial_i F)/z) s\big] =
  \big[(z\partial_z+\tilde{E})s +(d_{Z/S}-z^{-1}D_F)(s')\big] = \\
  =\big[(z\partial_z+\tilde{E})s +d_{Z/S}s'\big] \in \HH_F^{(0)}
\end{multline}
where $s'= (s/(dt_1\wedge\cdots\wedge t_{n+1}))\sum_i(-1)^ig_idt_1\wedge \cdots
\check{dt_i} \cdots \wedge dt_{n+1} \in p_*(\Omega_{Z/S}^{n})$.

\begin{remark}
  \label{Oscillatory} Let us give an analytic interpretation of the modules
  \eqref{eq:FilteredDeRham1} and \eqref{eq:FilteredDeRham2}. 
  
  Choose  $\al\in \Omega_{X}^{n+1}\{z\} \subset \Omega_X^{n+1}[[z]]$ and consider a $z$-dependent
  family of oscillatory integrals for $z \in \CC^*$:
  \begin{equation}
    \int_{\Gamma_z} \ e^{f/z}\ \al
  \end{equation}
  for a continuous family $\{\Gamma_z\}$ of relative cycles $\Gamma_z\in {\rm H}_{n+1}(X, Re(f/z)\!<\!\!<\!
  0)$.  If $\al=D_{z}(\beta)$ for $\beta\in \Omega_{X}^{n}\{z\}$ then 
  \begin{equation}
    \int_{\Gamma_z} \ e^{f/z}\ \al= \int_{\Gamma_z} \ e^{f/z} (zd\beta+df\wedge \beta)
    =z \int_{\Gamma_z} \ d(e^{f/z}\beta) =0,
  \end{equation}
  and we obtain a perfect pairing:
  \begin{equation}
    {\rm H}_{n+1}(X, Re(f/z)\!<\!\!<\! 0) \ \times \
    \Omega_X^{n+1}/D_{z}\Omega_X^{n} \ \rightarrow \ \CC
  \end{equation}
  (\cite{AVG}).  In the present paper, we do not use such analytic
  theory, however we may regard the splitting factor $\hat{\BB}[[z]]$ in \eqref{eq:V((z))}
  as the space of representatives of integrants for ``formal oscillatory
  integrals'' $ \int_{\Gamma_z} \ e^{f/z}\ \al$ out of their equivalence classes. 
\end{remark}

\begin{remark} The notation $\HH_F$ and $\HH^{(k)}_F$ were already used in~\cite{HigherResidue83}
for slightly different objects. Therefore, let us clarify the relationship
between them and \eqref{eq:FilteredDeRham1} and \eqref{eq:FilteredDeRham2}.

 First, let us give a heuristic interpretation of the variable $z$ and the complex
$D_F$. We regard $F$ as a family of functions parametrized by $S$ so that we
have a fibration $\varphi=(F,p): Z\to \mathbf{A}^1\times S$.  Then the equation $D_F=0$ together with the relation \eqref{eq:UnitEuler} should ``imply'' $z= - \big(\frac{d_{Z/S}}{dF}\big)^{-1} \equiv \frac{\partial}{\partial t_1} \bmod dF$.

Therefore, in~\cite{S1,S2}, the variable $z$ was denoted by $\delta_1^{-1}$ in order to
identify it with the pseudo differential operator $\partial_{t_1}^{-1}$ where
$t_1$ the coordinate for the range of $F$ so that the module is defined on the
relative-cotangent space of the range space of $F$. In the present paper, we
regard it simply a formal variable.

\end{remark}

The modules $\HH_f$ and $\HH^{(k)}_f$ were sometimes called the module of
semi-infinite Hodge structure for the vanishing cycles of $f$. However, in the
present paper, we regard them as the spaces of integrants for oscillatory integrands with respect to the exponential
factor $\exp(f/z)$ (see
Remark\ref{Oscillatory}). It is of our main interests in \S4 how they evolve to the spaces of
 integrants for the oscillatory integrals with respect to the exponential
factor $\exp(F/z)$ for an unfolding $F$ of $f$.

\medskip
\section{Section and opposite filtration of $\HH_F$}

Let the setting be as in \S2, i.e.\ $F$ is an unfolding define on $Z\subset X\times S$ of a function $f$ having an isolated critical point at  $0\in X\! \subset\! \CC^{n+1}$, where $S$ is the parameter space for the unfolding. We analyze the filtered $\OO_S$-module $\HH_F$ by introducing concepts of
sections and opposite filtrations (see Remark 3.1.).

\begin{definition} 1.  A $\OO_S$-submodule $\mathcal{B}$ of $\HH_F^{(0)}$ is
called a \textit{section} of $\HH_F^{(0)}$, if the restriction of the projection $r_F^{(0)}$ (recall 
\eqref{eq:Filter2}) induces  an $\OO_S$-isomorphism  $\mathcal{B}\simeq \Omega_F$. In view of Proposition
\ref{GMHR}, the condition on $\mathcal{B}$ is equivalent to that we have natural $\OO_S[[z]]$-isomorphisms
\be
\label{eq:Section} \BB[[z]] \ \simeq \ \HH_F^{(0)} \qquad \text{and} \qquad
\BB((z)) \ \simeq \ \HH_F .  \ee The set of all sections of $\HH_F^{(0)}$ is
denoted by $\MM_{s,F}$.

2. A $\OO_S$-submodule $\mathcal{L}$ of $\HH_F$ is called an opposite filtration
of $\HH_F$, if it satisfies (1) the $\OO_S$-splitting condition:
$\HH_F=\HH_F^{(0)}\oplus \mathcal{L}$ and (2) the opposite condition:
$z\LL\supset \LL$. The set of all opposite filtrations of $\HH_F$ is denoted by
$\MM_{o,F}$.
\end{definition}

For a later application in the proof of Theorem 6.1, below we describe the isomorphism
\eqref{eq:Section} for the case of  the trivial unfolding down to earth.  

Consider any
$\mu$-dimensional $\CC$-subspace $\hat{\BB}$ of $\Gamma(X,\Omega_X^{n+1})[[z]]$ such that it is
isomorphic to $\Omega_f$ by the projection $\Omega_X^{n+1}[[z]] \bmod \big(dF\wedge\Omega_X^n[[z]]+ z\Omega_X^{n+1} [[z]]\big)  \to \Omega_f$. 
In particular, the condition implies 
\be
\label{eq:linearIndependence}
\hat{\BB} \  \cap \ z \Gamma(X,\Omega^{n+1}_X)[[z]] \ = \ 0.
\ee
I.e.\ the projection of $\hat{\BB}$ into $\Gamma(X,\Omega^{n+1}_X)$ is an embedding.
This implies that we get a sequence of injective morphisms: $ \hat{\BB}[z]/(z^i) \to \Gamma(X,\Omega^{n+1}_X)[z]/(z^i)$ for $i=1,2, 3,\cdots$ (since a graded piece of the morphisms is equivalent to the embedding for $i=1$). Taking the inverse limit of the sequence, we see that  {\it $\hat{\BB}[[z]]$ is embedded into $\Gamma(X,\Omega^{n+1}_X)[[z]]$}. 
The condition on $ \hat{\BB}$ implies further the splittting:
\be
\label{eq:split-1}
\Gamma(X,\Omega^{n+1}_X) \quad = \quad \text{the projection image of }\hat{\BB} \text{ in } \Gamma(X,\Omega^{n+1}_X) \ \oplus \ \
\Gamma(X,df\wedge \Omega_X^{n}).
\ee
Then, repeating the similar arguments above, we can show the following splitting.
\be
\label{eq:split0} 
\Gamma(X,\Omega^{n+1}_X)[[z]] \quad = \quad \hat{\BB}[[z]] \ \oplus \ \
\Gamma(X,df\wedge \Omega_X^{n})[[z]].
\ee 
Note also that the modules
$\hat{\BB}[[z]]$ and $\hat{\BB}((z))$ are equal to $\hat{\BB}\otimes_{\CC}\CC[[z]]$ and $\hat{\BB}\otimes_{\CC}\CC((z))$, respectively, due to the fact $\dim_\CC \hat{\BB}=\mu <\infty$, and are free of rank $\mu$ over $\CC[[z]]$ and $\CC((z))$, respectively.

\begin{Proposition} 
\label{eq:split1} 
Let $ \hat{\BB}\subset \Gamma(X,\Omega_X^{n+1})[[z]]$ be as above. Then, there are splittings 
\begin{equation}
  \begin{array}{ccccc}
    \label{eq:V((z))} 
    &\Gamma(X,\Omega^{n+1}_X)((z)) \qquad &=& \quad \hat{\BB}((z)) \
                                              \oplus \ D_f\big(\Gamma(X,\Omega^{n}_X)((z))\big) 
    \\
    &z^{-k}\Gamma(X,\Omega^{n+1}_X)[[z]]\quad \qquad &=& \quad z^{-k}\hat{\BB}[[z]] \
                                                         \oplus \ D_f\big(z^{-k}\Gamma(X,\Omega^{n}_X)[[z]]\big) .
  \end{array}
\end{equation}
\end{Proposition}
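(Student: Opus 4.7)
The plan is to reduce both displayed splittings to the single $[[z]]$ statement $\Gamma(X,\Omega^{n+1}_X)[[z]] = \hat\BB[[z]] \oplus D_f(\Gamma(X,\Omega^n_X)[[z]])$, and to prove that by iterating the already-established splitting~\eqref{eq:split0} for surjectivity and by a $z$-adic induction using the De Rham Lemma for injectivity. The reduction uses that the exterior derivative $d=d_X$ treats $z$ as a constant, so $D_f(g\gamma)=g\,D_f(\gamma)$ for $g\in\CC[[z]]$, i.e., $D_f$ is $\CC[[z]]$-linear; multiplying the $z^{-k}$-version by $z^k$ converts it to the $k=0$ version, and the $((z))$-version follows by taking the union $\bigcup_k z^{-k}(-)[[z]]$.

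For surjectivity, given $\omega\in\Gamma(X,\Omega^{n+1}_X)[[z]]$, the splitting~\eqref{eq:split0} yields $\omega=b_0+df\wedge\gamma_0$ with $b_0\in\hat\BB[[z]]$ and $\gamma_0\in\Gamma(X,\Omega^n_X)[[z]]$. Rewriting $df\wedge\gamma_0=D_f(\gamma_0)-z\,d\gamma_0$ gives a remainder $\omega-b_0-D_f(\gamma_0)=-z\,d\gamma_0\in z\Gamma(X,\Omega^{n+1}_X)[[z]]$. Applying the same procedure to the remainder (and using $\CC[[z]]$-linearity of~\eqref{eq:split0} to ensure at the $i$-th stage $b_i\in z^i\hat\BB[[z]]$ and $\gamma_i\in z^i\Gamma(X,\Omega^n_X)[[z]]$), and summing $z$-adically, I obtain $\omega=b+D_f(\gamma)$ with $b=\sum_i b_i\in\hat\BB[[z]]$ and $\gamma=\sum_i\gamma_i\in\Gamma(X,\Omega^n_X)[[z]]$.

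For injectivity, suppose $b+D_f(\gamma)=0$ and write $b=\sum_{i\ge 0}z^i\sigma_i$ with $\sigma_i\in\hat\BB$ (each itself a series $\sigma_i=\sum_{j\ge 0}\sigma_i^{(j)}z^j$, $\sigma_i^{(j)}\in\Gamma(X,\Omega^{n+1}_X)$) and $\gamma=\sum_{i\ge 0}z^i\gamma_i$ with $\gamma_i\in\Gamma(X,\Omega^n_X)$. The coefficient of $z^k$ reads $\sum_{i+j=k}\sigma_i^{(j)}+d\gamma_{k-1}+df\wedge\gamma_k=0$, with $\gamma_{-1}:=0$. I induct on $k$ the claims that $\sigma_k=0$ in $\hat\BB$ and that $\gamma_k=d\delta_{k-1}+df\wedge\delta_k$ for some $\delta_k\in\Gamma(X,\Omega^{n-1}_X)$ (with $\delta_{-1}:=0$, and $\Omega^{-1}_X:=0$ when $n=0$). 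Under the hypothesis $\sigma_i=0$ for $i<k$, the double sum collapses to $\sigma_k^{(0)}$ and $d\gamma_{k-1}=-df\wedge d\delta_{k-1}$ (by $d\circ d=0$ applied to $\gamma_{k-1}=d\delta_{k-2}+df\wedge\delta_{k-1}$), so the equation becomes $\sigma_k^{(0)}=-df\wedge(\gamma_k-d\delta_{k-1})\in df\wedge\Gamma(X,\Omega^n_X)$. The defining isomorphism $\hat\BB\simeq\Omega_f$ from the paragraph preceding the proposition forces $\sigma_k=0$, and $df\wedge(\gamma_k-d\delta_{k-1})=0$ together with exactness of the Poincar\'e complex~\eqref{eq:Poincare} at $\Omega^n_X$ supplies the required $\delta_k$.

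The main obstacle is the bookkeeping inside the injectivity step: an element of $\hat\BB[[z]]$ carries two distinct ``$z$-gradations'', the outer from $\hat\BB[[z]]=\hat\BB\otimes_\CC\CC[[z]]$ and the inner because each element of $\hat\BB$ is itself a power series in $z$; the $z^k$-coefficient in $\Gamma(X,\Omega^{n+1}_X)[[z]]$ mixes both. The inductive hypothesis $\sigma_i=0$ for $i<k$ is exactly what kills every mixed contribution except the diagonal $\sigma_k^{(0)}$, which is the component that the isomorphism $\hat\BB\simeq\Omega_f$ sees; once this collapse occurs, the remaining step is a direct application of the De Rham Lemma.
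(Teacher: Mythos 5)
Your proof is correct and follows essentially the same route as the paper's: surjectivity by iterating the splitting \eqref{eq:split0} with the rewriting $df\wedge\gamma = D_f(\gamma)-z\,d\gamma$ and summing $z$-adically, and injectivity by induction on the $z$-degree using the isomorphism $\hat{\BB}\simeq\Omega_f$ together with exactness of the Koszul complex \eqref{eq:Poincare}. Your explicit handling of the two $z$-gradations on $\hat{\BB}[[z]]$ (the collapse of the mixed sum to the diagonal term $\sigma_k^{(0)}$) is exactly the paper's double-index bookkeeping with $v_{ij}$, just stated more transparently.
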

\begin{proof}
  It is sufficient to show the second formula. 
  
  \smallskip
  1.  Since each factor in the RHS of \eqref{eq:V((z))} is embedded in the LHS, we consider the natural homomorphism from the RHS to the LHS.
  We first show that it is surjective. 
  
  Pick an element $\alpha$ from the $k$th filter $z^{-k}
  \Gamma(X,\Omega^{n+1}_X)[[z]]$.
  Due to \eqref{eq:split0}, we find $v^1\in z^{-k} \hat{\BB}[[z]]$ and $\beta^1 \in z^{-k} \Gamma(X,\Omega_X^{n})[[z]]$ such that


$$
     \alpha = \dd f \wedge \beta^1 + v^1. 
$$
 Setting  $\alpha^1:=\alpha$ and $\alpha^2:= -d\beta^1$,  we obtain
 $$
    \alpha^1 = v^1 + D_f \beta^1 + z \alpha^2,
$$
Since $\alpha^2\in z^{-k} \Gamma(X,\Omega^{n+1}_X)[[z]]$, we continue inductively    
 $$
 \begin{array}{cc}
    &\alpha^2 = v^2 + D_f \beta^2 + z \alpha^3, \\
    \\
    &\alpha^3 = v^3  + D_f \beta^3 + z \alpha^4, \\
    &  \qquad \ldots
    \end{array}
 $$
    where $\alpha^i := -\dd \beta^{i-1} \in z^{-k} \Gamma(X,\Omega^{n+1}_X)[[z]]$ for $i\ge2$.
 Multiplying the $i$-th line by $z^{i-1}$ and summing up all lines, we obtain the decomposition:
 \begin{equation} \label{dec1}
   \alpha =  v  + D_f \beta.
 \end{equation}
where $v=\sum_{i=1}^\infty v^i z^{i-1}\in z^{-k}\hat{\BB}[[z]]$ and
$\beta=\sum_{i=1}^\infty \beta^i z^{i-1}\in z^{-k}\Omega_X^{n}[[z]]$.  \quad $\Box$

\medskip

2.  Next, let us show that $ \hat{\BB}[[z]] \cap
D_f\big(\Gamma(X,\Omega_X^{n})[[z]]\big) =\{0\}$.

Suppose the contrary, and there exist $v=\sum_{i=0}^\infty v_i z^i
\in \hat{\BB}[[z]]$ ($v_i\in \hat{\BB}$), $ v_i=\sum_{j=0}^\infty v_{ij}z^j$ ($v_{ij}\in \Gamma(X,\Omega_X^{n+1})$)  and $\al=\sum_{i=0}^\infty \alpha_i z^i \in 
\Gamma(X,\Omega^{n}_X)[[z]]$ ($\al_i\in \Gamma(X,\Omega^{n}_X)$)  such that $v=D_f\al$.  Let us show by induction
on $i\in \ZZ_{\ge 0}$ the following assertion.

\medskip
\noindent $(*)_{i}$: the vanishing $v_i=0$ and existences of $\beta_i \in
\Gamma(X,\Omega_X^{n-1})$ such that $\al_i=df\wedge \beta_i+d\beta_{i-1}$.

\medskip 
Let us write the equality $v=D_f\al$ term-wisely: $\sum_{k+l=i} v_{kl}= d\al_{i-1} +df\wedge
\al_i$ for $i\in \ZZ_{\ge 0}$, where $\al_{-1}=0$. The equality for the case $i=0$ turns to be $v_{00}=df\wedge \alpha_0$. Then, in view of the
splitting \eqref{eq:split-1}, we see $v_{00}=0$ and $df\wedge \alpha_0=0$. Then, \eqref{eq:linearIndependence} implies $v_0=0$, and 
applying Poincare Lemma, we see $\alpha_0=df\wedge \beta_0$ for $\beta_0\in
\Gamma(X,\Omega_X^{n-1})$. That is $(*)_0$ is shown by setting
$\beta_{-1}=0$. Suppose that $(*)_i$ is shown for $i\in \ZZ_{\ge 0}$. Then we
have
$$
v_{i+1,0}=d(df\wedge\beta_i+d\beta_{i-1}) + df\wedge
\al_{i+1}=df\wedge(-d\beta_i+\al_{i+1}).
$$
Applying again the splitting \eqref{eq:split-1} to this equality, we get $v_{i+1,0}=0$ and hence  $v_{i+1}=0$,  and get the
existence of $\beta_{i+1}\in \Gamma(X,\Omega_X^{n-1})$ such that $\al_{i+1}=df\wedge \beta_{i+1}+ d\beta_i$,
implying $(*)_{i+1}$. The induction shows that $v=0$ and $\al= D_f\beta$.
\quad $\Box$
    
\smallskip    
   Above 1. and 2. together  give the proof of the splittings \eqref{eq:V((z))}.
\end{proof}


\smallskip In later sections, we shall abuse the notion of sections and opposite
filtrations for the cases when the coefficient is no-longer $\OO_S$ but is
either an Artinian ring $\OO_{S,0}/m^{k+1}$ or the formal rings
$\check{\OO}_{S,0}$.

\begin{remark}
\label{WithOrWithout} Present definitions of section and opposite filtration and
the coming Definition 6.1 of good pair are ``weaker" copies of those Definition
2.16 and Definition 4.13 in~\cite{LLS} in the sense that we do not assume
for them the metric condition discussed in the following Proposition
\ref{Metric}. except we explicitly require it, whereas in~\cite{LLS} the metric
condition is assumed for every section, opposite filtration and good
pair. Nevertheless, we shall see that the most of statements in~\cite{LLS} are recovered.
\end{remark}

\medskip
\begin{Proposition} 
\label{SectOppo} There is a natural bijection: 
\be
\label{eq:Opposite-Section} \MM_{o,F} \ \simeq \ \MM_{s,F} , \quad \LL \ \mapsto
\ \BB(\LL):=\HH_F^{(0)}\cap z\LL .  \ee The inverse correspondence is given by
$\BB\mapsto \LL(\BB):=z^{-1}\BB[z^{-1}]$.
\end{Proposition}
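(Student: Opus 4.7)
The plan is to verify three things in turn: that $\BB(\LL)$ is indeed a section, that $\LL(\BB)$ is indeed an opposite filtration, and that the two compositions return the identity. The crucial preliminary lemma is the refined splitting
\[
  z\LL \;=\; \BB(\LL)\,\oplus\,\LL.
\]
To see this, take $\gamma\in z\LL$ and decompose $\gamma=\gamma_0+\gamma_\LL$ via the opposite splitting $\HH_F = \HH_F^{(0)}\oplus\LL$; since $\LL\subset z\LL$, we get $\gamma_0=\gamma-\gamma_\LL\in z\LL+\LL=z\LL$, so $\gamma_0\in \HH_F^{(0)}\cap z\LL=\BB(\LL)$. The sum is direct because $\BB(\LL)\cap\LL\subset \HH_F^{(0)}\cap \LL=0$.

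For well-definedness of $\BB(\LL)$ as a section, I would show that $r_F^{(0)}\colon\BB(\LL)\to\Omega_F$ is an $\OO_S$-isomorphism. Injectivity follows from
\[
  \BB(\LL)\cap \HH_F^{(-1)} \;=\; \HH_F^{(0)}\cap z\LL \cap z\HH_F^{(0)} \;=\; z(\HH_F^{(0)}\cap\LL) \;=\; 0.
\]
For surjectivity, multiply $\HH_F=\HH_F^{(0)}\oplus\LL$ by $z$ to obtain $\HH_F=\HH_F^{(-1)}\oplus z\LL$, lift $\bar\alpha\in\Omega_F$ to $\alpha_0\in\HH_F^{(0)}$, and decompose $\alpha_0=a+b$ with $a\in\HH_F^{(-1)}$ and $b\in z\LL$. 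Then $b=\alpha_0-a\in \HH_F^{(0)}\cap z\LL=\BB(\LL)$ with $r_F^{(0)}(b)=r_F^{(0)}(\alpha_0)=\bar\alpha$. For $\LL(\BB)$, the inclusion $z\LL(\BB)=\BB[z^{-1}]\supset z^{-1}\BB[z^{-1}]=\LL(\BB)$ is tautological, and the splitting $\HH_F=\HH_F^{(0)}\oplus\LL(\BB)$ is the classical Laurent decomposition $\BB((z))=\BB[[z]]\oplus z^{-1}\BB[z^{-1}]$ under the section isomorphisms \eqref{eq:Section}.

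The composition $\BB(\LL(\BB))=\BB$ is immediate from the same Laurent identification: $\HH_F^{(0)}\cap\BB[z^{-1}]=\BB[[z]]\cap\BB[z^{-1}]=\BB$. The inclusion $\LL(\BB(\LL))\subset\LL$ is also easy, since $\BB(\LL)\subset z\LL$ combined with the iterated opposite condition $z^{-j}\LL\subset\LL$ ($j\ge0$) gives $z^{-k}\BB(\LL)\subset z^{-(k-1)}\LL\subset \LL$ for every $k\ge 1$.

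The main obstacle is the remaining inclusion $\LL\subset\LL(\BB(\LL))$, which requires one to produce a \emph{finite} Laurent expansion in $z^{-1}$ for each element of $\LL$. The natural device is induction on the least integer $M\ge 0$ for which $\alpha\in \HH_F^{(M)}$, which exists by the regularity $\HH_F=\bigcup_M \HH_F^{(M)}$ of Proposition~\ref{GMHR}. The base case $M=0$ is $\alpha\in\LL\cap\HH_F^{(0)}=0$. For $M\ge 1$, use the isomorphism $r_F^{(0)}\colon\BB(\LL)\simeq\Omega_F$ to pick $b\in\BB(\LL)$ with $r_F^{(0)}(b)=r_F^{(0)}(z^M\alpha)$; then $z^M\alpha-b\in\HH_F^{(-1)}$, equivalently $\alpha-z^{-M}b\in\HH_F^{(M-1)}$. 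Since $z^{-M}b\in z^{-M}\BB(\LL)\subset\LL$, the residue $\alpha-z^{-M}b$ lies in $\LL\cap\HH_F^{(M-1)}$, so by the inductive hypothesis it belongs to $z^{-1}\BB(\LL)[z^{-1}]$, and therefore so does $\alpha$. This termination, powered by the finite filtration index $M$ rather than any convergence argument, is the one genuinely nontrivial point in the proof.
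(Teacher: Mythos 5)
Your proposal is correct, and its skeleton coincides with the paper's: both directions use the same manipulations (multiplying the splitting $\HH_F=\HH_F^{(0)}\oplus\LL$ by powers of $z$ to get $\HH_F^{(0)}=\HH_F^{(-1)}\oplus\BB(\LL)$, and the Laurent decomposition $\BB((z))=\BB[[z]]\oplus z^{-1}\BB[z^{-1}]$ from \eqref{eq:Section} for the other direction). The one place you genuinely diverge is the inclusion $\LL\subset\LL(\BB(\LL))$, which you rightly single out as the nontrivial point. The paper dispatches it abstractly: having shown $\LL(\BB(\LL))\subset\LL$ and that \emph{both} are complements of $\HH_F^{(0)}$ in $\HH_F$ (the latter because $\LL(\BB)$ is an opposite filtration for any section $\BB$, applied to $\BB=\BB(\LL)$), the two must coincide, since a complement contained in another complement of the same submodule equals it. You instead construct the finite principal part explicitly by descending induction on the least $M$ with $\alpha\in\HH_F^{(M)}$, peeling off one term $z^{-M}b$ at a time; this is correct (the regularity \eqref{eq:Filter1'} guarantees such an $M$ exists, and the base case uses $\LL\cap\HH_F^{(0)}=0$) and has the merit of being constructive, exhibiting the Laurent coefficients of each element of $\LL$ in $\BB(\LL)$, at the cost of being longer than the paper's two-line sandwich argument. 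Your direct computation $\BB(\LL(\BB))=\BB[[z]]\cap\BB[z^{-1}]=\BB$ is also a mild streamlining of the paper's two-inclusion argument for that composition. No gaps.
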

\begin{proof} (c.f.~\cite{LLS} 2.16) Let $\LL\in \MM_{o,F}$ be given. Multiplying
 the decomposition in Definition 3.1.\ 2.\ (1) by $z^k$ for $k\in \ZZ$, we obtain
$\HH_F=\HH_F^{(-k)}\oplus z^k\mathcal{L}$, and hence
$\HH_F^{(-k)}=\HH_F^{(-k-1)}\oplus (\HH_F^{(-k)}\cap z^{k+1} \mathcal{L})$. For
$k=0$, we obtain $\HH_F^{(0)}=\HH_F^{(-1)}\oplus \mathcal{B}(\mathcal{L})$,
implying the isomorphism $\mathcal{B}(\mathcal{L})\simeq \Omega_F$ due to
\eqref{eq:Filter2} for $k=0$.

Conversely, let $\mathcal{B}\in \mathcal{M}_{s,F}$ be given. Then, applying
\eqref{eq:Filter1} and \eqref{eq:Filter2} repeatedly, we obtain
$\HH_F^{(k)}=\HH_F^{(0)} \oplus\bigoplus_{i=1}^k z^{-i}\BB$ for all $k\in
\ZZ_{>0}$. In view of \eqref{eq:Filter1'}, we see that
$\LL(\BB):=z^{-1}\BB[z^{-1}] \in \mathcal{M}_{o,F}$.

By the construction and the property Definition 3.1.\ 2.\ (2) of $\LL$, it is
clear that $\LL(\BB(\LL))\subset \LL$. Since we have the decompositions
$\HH_F=\HH_F^{(0)}\oplus \LL= \HH_F^{(0)}\oplus \LL(\BB(\LL))$, we get
$\LL(\BB(\LL))= \LL$.

Conversely, by the construction, it is clear that $\BB(\LL(\BB))\supset \BB$.
Since we have the decompositions $\HH_F^{(0)}=\HH_F^{(-1)}\oplus \BB=
\HH_F^{(-1)}\oplus \BB(\LL(\BB))$, we get $\BB(\LL(\BB))= \BB$.
\end{proof}

\begin{Proposition}
\label{z-stable} Let $\BB\in \MM_{s,F}$ and $\LL\in \MM_{o,F}$ correspond to
each other by \eqref{eq:Opposite-Section}. Then the following conditions a), a')
and b) are equivalent. We shall call these conditions the
$z\partial_z$-stability condition.

a) $\nabla_{z\partial_z}\BB \subset \BB+z^{-1}\BB$ , a')
$\nabla_{z\partial_z}\BB \subset \BB[z^{-1}]$ , b) $\nabla_{z\partial_z}\LL
\subset \LL$.
\end{Proposition}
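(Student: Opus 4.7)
The plan is to exploit the explicit $\BB$-basis decompositions that Proposition~\ref{SectOppo} furnishes: $\HH_F^{(k)} = z^{-k}\BB[[z]]$ for $k \geq 0$ and $\LL = \bigoplus_{i \geq 1} z^{-i}\BB$. Every element of $\HH_F \simeq \BB((z))$ has a unique $\BB$-Laurent expansion $\sum_{i \geq -N}\alpha_i z^i$ with $\alpha_i \in \BB$, and each of the three conditions is a restriction on which powers of $z$ may occur in $\nabla_{z\partial_z}\beta$ (or in $\nabla_{z\partial_z}(z^{-k}\beta)$). Two inputs drive the equivalences: the Leibniz rule $\nabla_{z\partial_z}(z^k\alpha) = kz^k\alpha + z^k\nabla_{z\partial_z}\alpha$, and the filter-shift deduced from Proposition~\ref{GMHR}~iii). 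Since $z^2\nabla_{\partial_z}$ preserves \eqref{eq:Filtration}, the operator $\nabla_{z\partial_z} = z\nabla_{\partial_z}$ shifts the filtration by $+1$; in particular $\nabla_{z\partial_z}(\BB) \subset \HH_F^{(1)} = z^{-1}\BB[[z]]$, so the $\BB$-expansion of $\nabla_{z\partial_z}\beta$ involves only powers $z^i$ with $i\geq -1$.

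For (a)$\Leftrightarrow$(a') one direction is trivial. Conversely, $\BB[z^{-1}]$ consists of elements with only non-positive powers of $z$, while $\HH_F^{(1)}$ consists of those with powers $\geq -1$, so their intersection is precisely $\BB \oplus z^{-1}\BB$; combining (a') with the preliminary containment above gives (a). For (b)$\Rightarrow$(a), take $\beta \in \BB$ and write $\beta = z \cdot (z^{-1}\beta)$ with $z^{-1}\beta \in \LL$; by Leibniz
\begin{equation*}
\nabla_{z\partial_z}\beta \;=\; \beta \;+\; z\,\nabla_{z\partial_z}(z^{-1}\beta),
\end{equation*}
and under (b) the second term lies in $z\LL = \BB[z^{-1}]$, yielding (a') and hence (a). For (a)$\Rightarrow$(b), it suffices to act on the generators $z^{-k}\beta$ of $\LL$ with $\beta\in\BB$ and $k\geq 1$. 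Leibniz again gives
\begin{equation*}
\nabla_{z\partial_z}(z^{-k}\beta) \;=\; -k\,z^{-k}\beta \;+\; z^{-k}\nabla_{z\partial_z}\beta,
\end{equation*}
where the first term is in $z^{-k}\BB \subset \LL$ and, by (a), the second term lies in $z^{-k}\BB \oplus z^{-k-1}\BB \subset \LL$ because $k \geq 1$.

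The main obstacle is really the preliminary step: translating Proposition~\ref{GMHR}~iii)'s ``pole of order two'' for $\nabla_{\partial_z}$ into the concrete filter-shift $\nabla_{z\partial_z}(\HH_F^{(k)}) \subset \HH_F^{(k+1)}$, since without this the trivial inclusion (a)$\Rightarrow$(a') cannot be reversed. Once that is secured, the rest is bookkeeping with the $\BB$-grading and the commutator $[\nabla_{z\partial_z}, z] = z$; one must only verify that the decompositions used are genuine direct sums (guaranteed by Proposition~\ref{SectOppo}) so that reading off the $z^i$-coefficient of $\nabla_{z\partial_z}\beta$ is unambiguous.
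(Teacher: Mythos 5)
Your proof is correct and uses essentially the same ingredients as the paper's: the Leibniz rule for $\nabla_{z\partial_z}$, the explicit descriptions $\LL=z^{-1}\BB[z^{-1}]$ and $\BB=\HH_F^{(0)}\cap z\LL$, and the filter shift $\nabla_{z\partial_z}\HH_F^{(0)}\subset z^{-1}\HH_F^{(0)}$ coming from Proposition~\ref{GMHR}~iii). The only difference is cosmetic: you close the cycle as a)${}\Leftrightarrow{}$a'), b)${}\Rightarrow{}$a'), a)${}\Rightarrow{}$b), whereas the paper argues a)${}\Rightarrow{}$a')${}\Rightarrow{}$b)${}\Rightarrow{}$a), but the underlying computations coincide.
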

\begin{proof} Clearly, a) implies a').

a')$\Rightarrow$b): $\nabla_{z\partial_z}\LL=\nabla_{z\partial_z}
z^{-1}\BB[z^{-1}]\subset (\nabla_{z\partial_z} \BB)z^{-1}\CC[z^{-1}] + \BB
z\partial_z(z^{-1}\CC[z^{-1}]) \subset \BB z^{-1}\CC[z^{-1}]) =\LL$.

b)$\Rightarrow$a): $\nabla_{z\partial_z}\BB=\nabla_{z\partial_z}
(\HH_F^{(0)}\cap z\LL) \subset \nabla_{z\partial_z} (\HH_F^{(0)}) \cap
\nabla_{z\partial_z} (z\LL) \subset z^{-1}\HH_F^{(0)} \cap z\LL =\BB +
z^{-1}\BB.$
\end{proof}

\begin{Proposition}
\label{Metric} Let $\BB\in \MM_{s,F}$ and $\LL\in \MM_{o,F}$ correspond to each
other by \eqref{eq:Opposite-Section}. Then the following two conditions c) and
d) are equivalent. We shall call these conditions the metric condition.

c) The set of values $K_F(\BB,\BB):=\{K_F(v_1,v_2)\mid v_1,v_2\in\BB\}$ is
contained in $z^n \OO_S $ \ (c.f.\ \eqref{eq:HRdeg}).

d) The set of values $K_F(\LL,\LL):=\{K_F(l_1,l_2)\mid l_1,l_2\in\LL\}$ is
contained in $z^{-2+n}\OO_S [z^{-1}]$.
\end{Proposition}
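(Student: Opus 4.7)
My plan is to deduce both implications by direct bilinear manipulation, exploiting the explicit formula $\LL=z^{-1}\BB[z^{-1}]$ supplied by Proposition~\ref{SectOppo} together with the $\OO_S((z))$-linearity of $K_F$ in its first argument and its linearity via the involution $z\mapsto -z$ in the second argument, as recorded in Proposition~\ref{GMHR}, item~3.~ii).

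For the implication c)~$\Rightarrow$~d), I would observe that a typical element of $\LL=z^{-1}\BB[z^{-1}]$ is a finite sum $\sum_{i\ge 1}z^{-i}b_i$ with $b_i\in\BB$, and bilinearity gives
\[
K_F\Bigl(\sum_i z^{-i}b_i,\,\sum_j z^{-j}b'_j\Bigr)
= \sum_{i,j\ge 1}(-1)^j z^{-i-j}K_F(b_i,b'_j).
\]
By c), each scalar $K_F(b_i,b'_j)$ lies in $z^n\OO_S$, so each summand lies in $z^{n-i-j}\OO_S\subset z^{n-2}\OO_S[z^{-1}]$ (using $i+j\ge 2$). Summing gives $K_F(\LL,\LL)\subset z^{n-2}\OO_S[z^{-1}]$, which is d).

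For the converse d)~$\Rightarrow$~c), I would first invoke item~3.~v) of Proposition~\ref{GMHR}, which already enforces the automatic inclusion $K_F(b_1,b_2)\in z^n\OO_S[[z]]$ for any $b_1,b_2\in\BB\subset\HH_F^{(0)}$. Now $z^{-1}b_1$ and $z^{-1}b_2$ both belong to $\LL$, so condition d) applied to this particular pair yields
\[
-z^{-2}K_F(b_1,b_2) \;=\; K_F(z^{-1}b_1,z^{-1}b_2) \;\in\; z^{n-2}\OO_S[z^{-1}],
\]
equivalently $K_F(b_1,b_2)\in z^n\OO_S[z^{-1}]$. Intersecting this with the automatic constraint produces
\[
K_F(b_1,b_2)\;\in\; z^n\OO_S[[z]]\,\cap\, z^n\OO_S[z^{-1}]\;=\;z^n\OO_S,
\]
which is exactly c).

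I do not anticipate a serious obstacle: the whole argument is one-line bookkeeping in powers of $z$. The only technical ingredient worth flagging is the identity $\OO_S[[z]]\cap\OO_S[z^{-1}]=\OO_S$, reflecting the fact that a formal power series which is simultaneously a polynomial in $z^{-1}$ must be constant in $z$. A minor wrinkle is that one must be attentive to the exact normalization in item~3.~v) of Proposition~\ref{GMHR} (i.e.\ the minimal $z$-degree of $K_F$ on $\HH_F^{(0)}\times\HH_F^{(0)}$); the present argument needs the inclusion to match the exponent $n$ appearing in c), and a uniform shift in both c) and d) would carry through unchanged if the convention is off by one.
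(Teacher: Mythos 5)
Your argument is correct and essentially coincides with the paper's own proof: the forward direction is the same bilinearity bookkeeping via $\LL=z^{-1}\BB[z^{-1}]$, and the converse rests on the same intersection $z^{n}\OO_S[[z]]\cap z^{n}\OO_S[z^{-1}]=z^{n}\OO_S$, combining the a priori bound on $K_F|_{\HH_F^{(0)}\times\HH_F^{(0)}}$ with condition d). The normalization caveat you flag is real (Proposition~\ref{GMHR}~3.~v) states the range as $z^{n+1}\OO_S[[z]]$ while the proof of Proposition~\ref{Metric} uses $z^{n}\OO_S[[z]]$), but, as you note, the argument is insensitive to this uniform shift.
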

\begin{proof} c)$\Rightarrow$ d) : $K_F(\LL,\LL)\subset
K_F(z^{-1}\BB[z^{-1}],z^{-1}\BB[z^{-1}])\subset
K_F(\BB,\BB)z^{-2}\OO_S[z^{-1}]\subset z^{-2+n}\OO_S[z^{-1}]$.

d)$\Rightarrow$ c) : $K_F(\BB,\BB)\subset K_F(\HH_F^{(0)}\cap
z\LL,\HH_F^{(0)}\cap z\LL)\subset K_F(\HH_F^{(0)},\HH_F^{(0)}) \cap
K_F(z\LL,z\LL) \subset z^{n}\OO_S[[z]] \cap z^{n}\OO_S[z^{-1}]=\OO_S\cdot
z^{n}$.
\end{proof}


\begin{remark}
\label{simplectic} According to Givental \cite{G}, we may regard $\LL$ as a
$\OO_S$-linear Lagrangean subspace by setting the symplectic form
$$
\omega :\HH_F \times \HH_F \to \OO_S, \quad \omega(s_1,s_2):=
\mathrm{Res}_{z=0}\big(z^{-n} K_F(s_1,s_2)\big) .
$$
Then, the initial property \eqref{eq:HRdeg} implies
$\omega(\HH_F^{(0)},\HH_F^{(0)})=0$, and the metric condition in Proposition
\ref{Metric} is equivalent to $\omega(\LL,\LL)=0$. That is,
$\HH_F=\HH_F^{(0)}\oplus \mathcal{L}$ is a {\it holonomic decomposition} in the
sense of Mikio Sato.
\end{remark}

\section{Formal analysis on deformation parameter space} \label{sec:Formal}

From this paragraph, we take a perturbative approach with respect to the
deformation variable $\underline{t}$ in $S$ at $0$, as was done in~\cite{LLS}.
Namely, we embed $\HH_F$ into its formal completion $\check{\HH}_F$ with
respect to the deformation variable $\underline{t}$.  We show that the formal
(infinite) bundle $\check{\HH}_F$ over the formal space $\rm{Spf}\
\check{\OO}_{S,0}$ is trivialized (w.r.t.\ the Gauss-Manin connection $\nabla$)
by the evolution of $\HH_f$ by the oscillatory integral factor $\exp{(f\!-\!F)/z}$ (Proposition 4.1).
In particular, sections $\BB$ in $\HH_f^{(0)}$ and opposite filtrations $\LL$ in
$\HH_f$ evolve to flat sections and flat opposite filtrations in
$\check{\HH}_F$, respectively. This fact leads to the construction of
formal primitive forms (without metric) in the next section.

\medskip Consider any unfolding $F$ defined over the frame $(Z,S,p,F)$ in 2.1.
Let $0\in S$ be the base point and let $m$ be the maximal ideal of the local ring $\OO_{S,0}$. Then  $m^{k+1}$ for $k\in \ZZ_{\ge0}$ form the $k$-th formal neighborhood of $0\in \OO_{S,0}$ with respect to the $m$-adic topology on  $\OO_{S,0}$. Set
$\check{\OO}_{S,0}:=\underset{\underset{k}{\leftarrow}}{\lim} \
\OO_{S,0}/m^{k+1}$ and $\check{\mathcal{F}}:=\mathcal{F}\hat{\otimes}_{\OO_S} \!
\check{\OO}_{S,0}:=\underset{\underset{k}{\leftarrow}}{\lim}\
\mathcal{F}\otimes_{\OO_S} \! \OO_{S,0}/m^{k+1}$
for a $\OO_S$-module $\mathcal{F}$.  Using a local coordinate system
$\underline{t}=(t_1,\cdots,t_m)$ of $S$ at the origin 0, we
sometimes denote $\check{\OO}_{S,0}$ also by $\CC[[\underline{t}]]$ and
$\check{\mathcal{F}}$ by $\mathcal{F}[[\underline{t}]]$.

Recall Proposition \ref{GMHR} that the $\OO_S$-module $\HH_F$ admits the
Gauss-Manin connection $\nabla$ and the higher residue pairings $K_F$. We extend
them to $\check{\HH}_F :=\underset{\leftarrow}{\lim}\
\mathcal{H}_F\otimes_{\OO_S} \! \OO_{S}/m^{k+1}$ as follows. Namely, one easily
shows (see \cite{LLS}, Lemma 4.3) that $\nabla$ induces a connection 
\be
\label{eq:FormalNabla} \check{\nabla} \ :\ \mathcal{T}_{\mathbf{A}_z^1\times S}
\times \check{\HH}_F \ \longrightarrow \ \check{\HH}_F 
\ee 
defined by the
projective limit of $\mathcal{T}_S \times \HH_F/m^{k+1}\HH_F \to \HH_F/m^k\HH_F$
and $\mathcal{T}_{\mathbf{A}_z^1} \times \HH_F/m^{k+1}\HH_F \to
\HH_F/m^{k+1}\HH_F$ for $k\in \ZZ_{>0}$, which we call again the Gauss-Manin
connection and denote again by $\nabla$.  The higher residue pairings $K_F$ on
$\HH_F$ also induces an $(\OO_S((z)))^\vee$-skew-symmetric pairing: \be
\label{eq:FormalHR} \check{K}_F: \check{\HH}_F \times \check{\HH}_F \
\longrightarrow \ (\OO_S((z)))^\vee.  \ee They satisfy the properties parallel
to those listed in 2. and 3. of Proposition \ref{GMHR}, whose detailed
formulations and verifications are left to the reader.

\begin{remark} Let us consider the formal unfolding $\check F$ (associated with an unfolding $F$) defined on the
formal frame $(\check{Z},\check{S}, p,\check{F})$ (that is,
$\check{S}=\mathrm{Spf} (\check{\OO}_{S,0})$ is the formal neighborhood of the
origin $0\in S$, $\check{Z}$ is a formal neighborhood of $X\times\{0\}$ in
$\CC^n\times \check{S}$ and $\check{F}$ is the formal power series expansion of $F$ with
respect to a local coordinate system $\underline{t}$ of $S$ at $0$. Detailed
descriptions are left to the reader).  Then, analogous to \S2, one can define
the $\check{\OO}_S((z))$-module
$\HH_{\check{F}}:=p_*(\Omega^n_{\check{Z}/\check{S}}((z))/D_{\check{F}}
(p_*(\Omega^{n-1}_{\check{Z}/\check{S}}((z))))$ which admits increasing
filtration $\HH^{(k)}_{\check{F}}$ satisfying the properties parallel to
\eqref{eq:FilteredDeRham1},\eqref{eq:FilteredDeRham2},\eqref{eq:Filter1} and
\eqref{eq:Filter2} (here, we note that
$\Omega_{\check{F}}=\check{\Omega}_F=\Omega_F\otimes_{\OO_S}\check{\OO}_{S,0}$).
In particular, we have isomorphisms $\HH_{\check{F}} \simeq
\HH_F\otimes_{\OO_S((z))}\check{\OO}_{S,0}((z))$, which admits the filtation
$\HH_{\check{F}}^{(k)} \simeq
\HH_F^{(k)}\otimes_{\OO_S[[z]]}\check{\OO}_{S,0}[[z]] \ (= \check{\HH}_F^{(k)})$
for $k\in\ZZ$ (regarded as the formal version of $\HH_F$).

However, due to the ``non-commutativity"
$\check{\OO}_{S,0}((z))=\CC[[\underline{t}]]((z)) \underset{\not=}{\subset}
\OO_{S,0}((z))^\vee=\CC((z))[[\underline{t}]]$ (since the RHS is a formal power
series in $(\underline{t})$ of coefficients in Laurent series of $z$ whose order
of poles may not necessarily bounded, e.g.\ $\sum_{k=0}^\infty
z^{-k}t^k=z/(z-t)=-\sum_{m=1}^\infty t^{-m}z^m$ is in the RHS but not in the
LHS), we have \be
\label{eq:inclusion} \HH_{F,0} \ \underset{\not=}{\subset} \ \HH_{\check{F}} \
\underset{\not=}{\subset} \ \check{\HH}_F \ .  \ee So, even if the module
$\check{\HH}_F$ has the increasing sequence of
$\CC[[z]][[\underline{t}]]=\CC[[z,\underline{t}]]$-submodules
$\check{\HH}^{(k)}_F\simeq \HH^{(k)}_{\check{F}}$ ($k\in\ZZ$), the union
$\cup_{k\in \ZZ} \check{\HH}^{(k)}_F$ covers only the submodule
$\HH_{\check{F}}$ consisting of elements which admit bounded Laurent series expansion in
$z$ with coefficients in $\CC[[\underline{t}]]$ (w.r.t.\ a fixed basis), but does
not cover the full-module $\check{\HH}_F$ which consists of power series in
$\underline{t}$ whose coefficient in a monomial in $\underline{t}$ is a Laurent
series in $z$ but the order of poles in $z$ may grow to infinity as the order of
the monomial increases. Thus, the analogy of the regularity \eqref{eq:Filter1'} does
not hold for $\check{\HH}_F$.
\end{remark}

\medskip We now compare the given unfolding $F$ over a frame $(Z,S,p)$ with the trivial unfolding
$f$. Precisely, recall the projection $\pi_X:Z\to X$ (\S2), and depending on
that, we define the trivial unfolding
$$
F_0\quad :=\quad \pi_X^{-1}(f)
$$ 
over the frame $(Z,S,p)$ so that $F-F_0\in m \OO_Z$.  Then following Propositions 4.1 and 4.2,
giving a trivialization of $\check{\HH}_F$ and an opposite filtration of $\check{\HH}_F$, are the
key results in the formal analysis in the present paper.

\begin{Proposition}
\label{FlatExtension1} \rm{ (\cite{LLS}, Lemma/Definition 4.7) } 1.  Consider a
correspondence \be
\label{eq:exponential} e^{(f-F)/z} \ : \ \HH_f \ \longrightarrow \check{\HH}_F.
\ee given by associating, for each $[s] \in \HH_f$, the projective system of
elements
\be
\label{eq:Exponential} \Big[ \sum_{0\le l\le k}\frac{(F_0-F)^l}{z^l l!}
\pi_X^{-1}(s)\Big] \ \in \ \HH_F \bmod m^{k+1}\HH_F \quad \text{for } k\in
\ZZ_{>0} , \ee
where $s\in p_{0*}(\Omega_{X}^n((z)))$ is any representative of $[s]$.  Then, it
defines a well-defined $\CC((z))$-module injective homomorphism, which we shall call the
evolution map.\footnote
{We use the notation  $e^{(f-F)/z}$ instead of  $e^{(F_0-F)/z}$, since we shall see immediately in Corollary 4.1 that the correspondence does not depend on the choice of the projection $\pi_X$ and, hence, of $F_0$.
}

\medskip
2. The composition of the evolution map $e^{(f-F)/z}$ \eqref{eq:exponential} with the restriction map 
$\check{\HH}_F\to \check{\HH}_F\otimes _{\check{\OO}_{S,0}}(\check{\OO}_{S,0}/m)\simeq \HH_f$ (c.f.\ \S2 Note. 2.) is the identity map on $\HH_f$.

\medskip
3.  The image $e^{(f-F)/z}[s]\in  \check{\HH}_F$ for $[s]\in \HH_f$ of the evolution
map \eqref{eq:exponential} is horizontal (flat) with respect to the
Gauss-Manin connection on $\check{\HH}_F$ over $\TT_S$-direction (but not
$\TT_{\mathbf{A}_z^1}$-direction). Precisely, 
\be
\label{eq:flatness}
\ker\big( \nabla_{\TT_S} \big)\ = \ e^{(f-F)/z} \big( \HH_f \big) .  
\ee
 

 \medskip
4. The evolution map \eqref{eq:exponential} induces an
$\CC((z))\hat{\otimes}_{\CC}\check{\OO}_{S,0}:=\underset{\underset{k}{\leftarrow}}{\lim}\
\CC((z)) \! \otimes_{\CC}\! \OO_{S,0}/m^{k+1}$-isomorphism: \be
\label{eq:evolve}
\begin{array}{c} \HH_f \ \hat{\otimes}_{\CC} \ \check{\OO}_{S,0} \quad \simeq
\quad \check{\HH}_F \\ \HH_f\otimes_\CC \OO_S/m^{k+1}
\overset{e^{(F_0-F)/z}}{-\!\!\!-\!\!\!-\!\!\!-\!\!\!-\!\!\!-\!\!\!\longrightarrow}
\HH_F/m^{k+1}\HH_F, \quad [s]\otimes [r_k] \mapsto e^{(F_0-F)/z} \pi_X^{-1}(s)
r_k \bmod m^{k+1}\HH_F.
\end{array} \ee

The inverse map of  \eqref{eq:evolve} 
is given by multiplying $e^{(F-F_0)/z} \bmod m^{k+1}$ at each $k$-th
neighborhood, and is denoted by $e^{(F-f)/z}$. We shall call the extended
isomorphism \eqref{eq:evolve} also the evolution map.

\medskip
5.   (i) The evolution isomorphsim \eqref{eq:evolve} is equivariant with the Gauss-Manin connection $\nabla_{z\partial_z}$ \eqref{eq:Gauss-Manin}, where its action in LHS is defined to be commutative with the action of $\check{\OO}_{S,0}$. 

 (ii)  The evolution isomorphsim \eqref{eq:evolve} is equivariant with the action of $\TT_S$, where $v\in \TT_S$ acts on LHS as the derivation on the coefficient $\check{\OO}_{S,0} $ and on RHS via Gauss-Manin connection $\nabla_v$ \eqref{eq:Gauss-Manin}.

\end{Proposition}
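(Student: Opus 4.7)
The plan is to build everything around the formal intertwining identity $D_F\circ e^{(F_0-F)/z}=e^{(F_0-F)/z}\circ D_{F_0}$, suitably interpreted at each finite $m$-adic truncation. Since $F_0-F\in m\OO_Z$, the partial sum $\sum_{l=0}^{k}(F_0-F)^l/(z^l l!)$ agrees with the full exponential modulo $m^{k+1}$, and a direct Leibniz computation shows that for any relative form $\beta$,
\[
D_F\Bigl(\sum_{l=0}^{k}\tfrac{(F_0-F)^l}{z^l l!}\beta\Bigr)\ \equiv\ \sum_{l=0}^{k}\tfrac{(F_0-F)^l}{z^l l!}\,D_{F_0}\beta \pmod{m^{k+1}},
\]
because the $z\,l\,(F_0-F)^{l-1}\,\dd(F_0-F)$ cross-term produced by $z\dd_{Z/S}$ is precisely what converts $dF\wedge$ into $dF_0\wedge$. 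This is the cornerstone of part 1: if $D_f s'=s$, then $\pi_X^{-1}(s)=D_{F_0}\pi_X^{-1}(s')$ and the truncated exponential sends this to a $D_F$-exact form modulo $m^{k+1}$, so \eqref{eq:Exponential} yields a well-defined class in $\HH_F/m^{k+1}\HH_F$ depending only on $[s]\in\HH_f$; compatibility as $k$ varies is automatic since increasing $k$ only adds terms in $m^{k+1}\HH_F$. The $\CC((z))$-linearity is immediate because $e^{(F_0-F)/z}$ acts by multiplication and commutes with scalars in $z$.

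Part 2 follows by reducing modulo $m$: every term with $l\ge 1$ already lies in $m^l\HH_F$, so the composition with the restriction sends $[s]\mapsto[\pi_X^{-1}(s)]\bmod m$, which matches $[s]$ under the identification $\HH_F\otimes_{\OO_S}\!(\OO_S/m)\simeq\HH_f$ recalled in Note 2 of \S2. Injectivity of \eqref{eq:exponential} in part 1 is then an immediate corollary. For part 3, the clean approach is to choose the lift $\tilde v$ of $v\in\TT_S$ so as to annihilate the fibre coordinates of $\pi_X$; then $\tilde v F_0=\tilde v\pi_X^{-1}(f)=0$ and $\tilde v\pi_X^{-1}(s)=0$, so the formula \eqref{eq:Gauss-Manin} reduces to
\[
\nabla_v\bigl(e^{(F_0-F)/z}\pi_X^{-1}(s)\bigr)\ =\ e^{(F_0-F)/z}\bigl(z^{-1}\tilde v F_0\cdot\pi_X^{-1}(s)+\tilde v\,\pi_X^{-1}(s)\bigr)\ =\ 0,
\]
after the $\pm z^{-1}\tilde v F$ contributions cancel. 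The reverse inclusion $\ker(\nabla_{\TT_S})\subset e^{(f-F)/z}(\HH_f)$ in \eqref{eq:flatness} drops out of the surjectivity established in part 4 combined with the restriction-at-$0$ identity from part 2.

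For part 4, I construct the inverse $e^{(F-F_0)/z}:\check{\HH}_F\to\HH_f\hat{\otimes}\check{\OO}_{S,0}$ by the symmetric truncated formula with $F$ and $F_0$ interchanged; the same intertwining with the roles of $F$ and $F_0$ swapped shows it is well-defined on cohomology modulo $m^{k+1}$ at every $k$. The two compositions reduce to the identity on representatives via the formal product $e^{(F-F_0)/z}\cdot e^{(F_0-F)/z}=1$, which truncates to a finite binomial identity modulo $m^{k+1}$ (the terms of order $>k$ in $F_0-F$ drop out automatically). $\CC((z))\hat{\otimes}_{\CC}\check{\OO}_{S,0}$-linearity again reduces to multiplicativity of the exponential factor, upgrading the set-theoretic bijection of part 1 to the module isomorphism \eqref{eq:evolve}.

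Part 5(ii) is a one-line Leibniz application: $\nabla_v(X\cdot r)=(\nabla_vX)\cdot r+X\cdot v(r)$ for $r\in\check{\OO}_{S,0}$, so part 3 kills the first term and leaves exactly the $\check{\OO}_{S,0}$-derivation action that defines the $\TT_S$-action on the LHS. For part 5(i), a direct computation gives
\[
\nabla_{z\partial_z}\bigl(e^{(F_0-F)/z}\pi_X^{-1}(s)\bigr)\ =\ e^{(F_0-F)/z}\bigl(z\partial_z-z^{-1}F_0\bigr)\pi_X^{-1}(s)\ =\ e^{(F_0-F)/z}\pi_X^{-1}\bigl(\nabla_{z\partial_z}s\bigr),
\]
where $z\partial_z$ acting on $e^{(F_0-F)/z}$ produces $-z^{-1}(F_0-F)$, which combines with the $-z^{-1}F$ from \eqref{eq:Gauss-Manin} to give exactly $-z^{-1}F_0$, and then $\pi_X^{-1}$ intertwines with the Gauss-Manin operator on $\HH_f$. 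Throughout, the main obstacle is the careful bookkeeping of $m$-adic truncations, so that every formal manipulation involving the infinite series $e^{(F_0-F)/z}$ is realised as a finite identity modulo $m^{k+1}$; this is forced to work because $F_0-F\in m\OO_Z$ and the intertwining identity respects the truncation at each order.
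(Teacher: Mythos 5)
Your proposal follows essentially the same route as the paper: the $m$-adic intertwining identity $D_F\circ e^{(F_0-F)/z}\equiv e^{(F_0-F)/z}\circ D_{F_0}\ \bmod\ m^{k+1}$ is exactly the computation \eqref{eq:DfDFcommutative} the paper uses for well-definedness; the flatness computation with a lift $\tilde v$ annihilating the fibre coordinates of $\pi_X$ is the paper's; and parts 5(i), 5(ii) are handled identically (cochain-level commutativity with $z\partial_z - F_0/z$, then Leibniz). The only organizational divergence is in part 4, where you build the inverse explicitly from the truncated $e^{(F-F_0)/z}$ and the binomial identity, whereas the paper first obtains the isomorphism from Nakayama's lemma applied to the residue modulo the maximal ideal and only then remarks that the inverse is formal multiplication by $e^{(F-F_0)/z}$; both work. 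The one step that is under-justified as written is the reverse inclusion $\ker(\nabla_{\TT_S})\subset e^{(f-F)/z}(\HH_f)$: surjectivity of \eqref{eq:evolve} together with the restriction-at-$0$ identity of part 2 only tells you that a flat $\zeta$ equals $e^{(f-F)/z}(\eta)$ for some $\eta\in\HH_f\hat{\otimes}_{\CC}\check{\OO}_{S,0}$ with $\eta|_0=\zeta|_0$; you still have to rule out $\underline{t}$-dependence of $\eta$. This is supplied either by your own part 5(ii) (flatness of $\zeta$ forces $v\eta=0$ for all $v\in\TT_S$, hence $\eta\in\HH_f\otimes\CC$ by injectivity at each truncation level), or, as the paper does, by the uniqueness of the solution of the linear system $\nabla_{\TT_S}\zeta=0$ with prescribed initial value at the origin. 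Either patch is one line, but it needs to be stated.
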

\begin{proof} The proof given below is inspired from the oscillatory integral
view point (see Remark \ref{Oscillatory}).

1. First, remark that the $k$-th neighborhood $\HH_F\otimes_{\OO_S}
\OO_{S}/m^{k+1}= \HH_F \bmod m^{k+1}\HH_F$ in the RHS of \eqref{eq:Exponential}
has an expression 
\be
\label{eq:Infinitesimal} \Big(p_*(\Omega^{n+1}_{Z/S}((z))) \bmod
m^{k+1}p_*(\Omega^{n+1}_{Z/S}((z)))\Big)\Big/\Big(D_F(p_*(\Omega^{n}_{Z/S}((z)))\bmod
m^{k+1}D_F(p_*(\Omega^{n}_{Z/S}((z))\Big) 
\ee 
that is, quotient out by the co-boundary operator $D_F$ and quotient out  by the higher terms $m^{k+1} $ are commutative to each other (this fact can be seen by the right-exactness of
the tensor product $\otimes _{\OO_S}\OO_{S}/m^{k+1}$). Then, in order to show
the well-definedness of the map \eqref{eq:exponential}, we need to show the
following two facts.

1) For a given class $[s] \ \in \ \HH_f$, fix its representative $s\in
p_{0*}(\Omega_{X}^n((z)))$.  Since the order of the pole in $z$ of the
expression in the bracket of \eqref{eq:Exponential} for each $k\in \ZZ_{>0}$ is
bounded by $k+$the order of the pole in $z$ of $s$, the expression defines an
element in the numerator of \eqref{eq:Infinitesimal}. Further more, the sequence
of the elements for $k\in\ZZ_{>0}$ forms a projective system, since we have
$e^{(F_0-F)/z} =\sum_{l=0}^\infty \frac{(F_0-F)^l}{z^l l!}$ such that $
\frac{(F_0-F)^l}{z^l l!} \in m^{l} p_*\OO_Z((z))$ (recall Definition 2.1, 2. so
that $F_0-F\in m \cdot p_*\OO_Z$).

2) The class of the image in \eqref{eq:Exponential} depends not on the choice of
the representative $s\in p_{0*}(\Omega_{X}^n((z)))$ but only on the class
$[s]\in \HH_f$. ({\it Proof.} Consider $s'$ such that $s'-s=D_f(r)$ for $r\in
p_{0*}(\Omega_{X}^{n-1}((z)))$.  Then, in the module of the numerator of
\eqref{eq:Infinitesimal}, we calculate the difference of the representatives of
$e^{(f-F)/z}[s']-e^{(f-F)/z}[s]$: 
\ba 
\label{eq:DfDFcommutative}
e^{(f-F)/z}\pi_X^{-1}(D_{f}(r)) \equiv
e^{(F_0-F)/z}D_{F_0}(\pi_X^{-1}(r)) \equiv D_{F}( e^{(F_0-F)/z}(\pi_X^{-1}(r)) )
\quad \bmod m^{k+1} 
\ea 
where in the RHS, the order of poles in $z$ is again
bounded by $k$ so that the expression defines an element in the denominator of
\eqref{eq:Infinitesimal}, i.e.\ $e^{(f-F)/z}[s']-e^{(f-F)/z}[s] =0$. So the
evolution map is well-defined. It commutes with the action of $\CC((z))$ since
it is so for $\pi_X^{-1}$ and in $k$th neighborhood \eqref{eq:Exponential}.

2.  This is trivial due to the initial term of the explicit expresssion \eqref{eq:Exponential} of the evolution map.

3.  We first show that RHS of \eqref{eq:flatness} is included in LHS. This follows formally by the calculation:
$$
\begin{array}{rll} (\nabla_v (e^{(f-F)/z}[s]) \bmod m^k & \equiv & \nabla_v
(e^{(F_0-F)/z}\pi_X^{-1}(s) \bmod m^{k+1}) \\ & \equiv & \big[
(L_{\tilde{v}}+\tilde{v} F/z)(e^{(F_0-F)/z} \pi_X^{-1}(s))\big] \bmod m^k
\end{array}
$$
where $L_{\tilde{v}}$ means the Lie derivative with respect to a lifting
$\tilde{v}$ of $v$ and the lifting $\tilde{v}$ may be chosen such that
$\tilde{v} (\pi_X^{-1}(\OO_X))=0$. Then, we have $ L_{\tilde{v}}(e^{(F_0-F)/z}
\pi_X^{-1}(s))=(-\tilde{v}F/z)e^{(F_0-F)/z} \pi_X^{-1}(s)$ so that the two terms
in the RHS of the above expression cancel to each other and the sum vanishes.

The opposite inclusion is shown as follows. Let $\zeta$ be any element of LHS. Consider the evolution $e^{(f-F)/z}(\zeta|_0)$ where $\zeta|_0=\zeta\bmod m \in \HH_f$ is the restriction of $\zeta$ at the origin $0\in S$ (recall \S2 Note.2.).
Then the uniqueness of the solution of the linear equation $\nabla_{\TT_S}\zeta=0$ with the same initial value $\zeta|_0$ (recall 2.\ of present Proposition) implies the coincidence of $\zeta$ with $e^{(f-F)/z}(\zeta|_0) \in$ RHS.


4. The evolution map $\HH_f \to \check{\HH}_F$ \eqref{eq:exponential} is a
$\CC((z))$-morphism, and the correspondence \eqref{eq:evolve} naturally extends
the map to a $\big(\OO_{S,0}((z))\big)^\vee$-morphism $\HH_f\hat{\otimes}_\CC
\check{\OO}_{S,0} \to \check{\HH}_F$ between the free modules. Its residue map
modulo the maximal ideal $\CC((T))\check{m}$ is the natural identity $\HH_f
\overset{\sim}{\ \to}\ \HH_{F} |_{\underline{t}=0}=\HH_f$. Then, due to
Nakayama's lemma, the morphism is isomorphic.

It is formal to see that the formal multiplication by $e^{(F-F_0)/z}$ gives the
inversion map to $e^{(f-F)/z}$.

\smallskip
5. (i) In the cochain complex level, the evolution morphism statisfies the following commutativity:
$$
e^{(f-F)/z} \cdot  \big(z\partial_z - f/z \big) \ =\ \big(z\partial_z -F/z) \cdot  e^{(f-F)/z}
$$
which is shown by a direct calculation. Then, in view of the formula \eqref{eq:Gauss-Manin} for $\nabla_{z\partial_z}$, we obtain the commutativity in the cohomological level.

(ii) This follows from the flatness 3.\ of present Proposition and the fact that the evolution morphism is an $\big(\OO_{S,0}((z))\big)^\vee$-linear morphism, since 
$\nabla_v 
e^{(f-F)/z} ([s]\hat{\otimes} a)=\nabla_v  (e^{(f-F)/z} [s]) \hat{\otimes} a
=\nabla_v (e^{(f-F)/z} [s]) \hat{\otimes} a+(e^{(f-F)/z} [s]) \hat{\otimes} va
= (e^{(f-F)/z} [s]) \hat{\otimes} va= e^{(f-F)/z}( [s] \hat{\otimes} va)$ for $[s]\in\HH_f$ and $a\in \big(\OO_{S,0}((z))\big)^\vee$.


This completes a proof of Proposition \ref{FlatExtension1}
\end{proof}

\begin{Corollary}
The evolution map $e^{(f-F)/z}$ \eqref{eq:exponential} is defined independent of a choice of the projection $\pi_X$ and, hence, of $F_0=\pi_X^{-1}f$. This justifies the use of $f$ in the notation $e^{(f-F)/z}$.
\end{Corollary}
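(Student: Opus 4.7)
The plan is to reduce the claimed independence to the uniqueness of horizontal extensions established in the proof of Proposition \ref{FlatExtension1}, 3.

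Let $\pi_X$ and $\pi_X'$ be two choices of projection $Z\to X$, with associated trivial unfoldings $F_0 := \pi_X^{-1}(f)$ and $F_0' := (\pi_X')^{-1}(f)$. For $[s]\in\HH_f$, denote by $\zeta,\zeta'\in\check{\HH}_F$ the elements produced by the two corresponding constructions \eqref{eq:Exponential}. I will show that $\zeta$ and $\zeta'$ necessarily coincide by verifying that each satisfies the same characterizing properties.

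First, I observe that each of $\zeta,\zeta'$ is horizontal under $\nabla_{\TT_S}$ by part 3 of Proposition \ref{FlatExtension1}, whose proof does not rely on any particular feature of the projection used. Second, by part 2 of Proposition \ref{FlatExtension1}, both $\zeta$ and $\zeta'$ restrict to $[s]\in\HH_f$ under the residue map $\check{\HH}_F\to\check{\HH}_F\otimes_{\check{\OO}_{S,0}}\check{\OO}_{S,0}/\check{m}\simeq\HH_f$; this is because the only term in the sum \eqref{eq:Exponential} surviving modulo $m$ is the $l=0$ term $\pi_X^{-1}(s)$ (respectively $(\pi_X')^{-1}(s)$), and the latter reduces to $s$ upon restriction to $X=p^{-1}(0)$, independently of the lift.

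It therefore suffices to invoke the uniqueness of flat extensions of a given initial value along $\TT_S$ in $\check{\HH}_F$: if $\eta:=\zeta-\zeta'\in\check{\HH}_F$ satisfies $\nabla_{\TT_S}\eta=0$ and $\eta|_0=0$, then $\eta=0$. This is exactly the ingredient used at the end of the proof of Proposition \ref{FlatExtension1}, 3; it follows by induction on the $m$-adic order, since $\nabla_v\eta\equiv v(\eta)\bmod m^{k}\check{\HH}_F$ for $v\in\TT_S$, so the vanishing of $\eta\bmod m^{k+1}$ forces the vanishing of the principal symbol of $\eta\bmod m^{k+2}$ with respect to every coordinate direction, hence of $\eta\bmod m^{k+2}$ itself.

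The only subtlety worth checking is that the well-definedness argument for \eqref{eq:exponential} given in part 1 of Proposition \ref{FlatExtension1} was purely formal and applied to any projection $\pi_X$, so both constructions indeed land in $\check{\HH}_F$ and are eligible inputs to the uniqueness argument. There is no real obstacle; the corollary is a direct consequence of the flatness and initial-value statements already proved. The statement then justifies the notation $e^{(f-F)/z}$, in which only the intrinsic datum $f=F|_{p^{-1}(0)}$ appears.
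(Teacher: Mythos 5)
Your proposal is correct and follows essentially the same route as the paper's own proof: both characterize $e^{(f-F)/z}[s]$ as the unique $\nabla_{\TT_S}$-horizontal element of $\check{\HH}_F$ with initial value $[s]$ at $0\in S$ (Proposition \ref{FlatExtension1}, parts 2 and 3) and conclude by uniqueness of solutions of the resulting linear system. Your additional sketch of the $m$-adic induction for uniqueness is a harmless elaboration of what the paper leaves implicit.
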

\begin{proof}
For any element $[s] \in \HH_f$, the image $e^{(f-F)/z}[s]$ is uniquely characterized by the two conditions: (1) it satisfies the system of linear equations $\nabla_v (e^{(f-F)/z}[s])=0$ for all $v\in \TT_S$ (Proposition \ref{FlatExtension1}, 3.), and (2) its initial value $(e^{(f-F)/z}[s])|_0$ at the origin $0\in S$ is equal to the original $[s]\in \HH_f$ (Proposition \ref{FlatExtension1}) due to the uniqueness of the solution of the system of linear partial differential equation. Obviously, these two conditions do not depend on the choice of the projection $\pi_S$. So the solution $e^{(f-F)/z}[s]$, i.e.\ the image of $[s]$ by the evolution map, does not depend on the choice of $\pi_X$.
\end{proof}


\begin{remark} The Proposition \ref{FlatExtension1} says that the ``infinite
dimensional bundle" $\HH_f$ over a point $\{0\}$ flatly evolves to the trivial
bundle $\check{\HH}_F$ over the infinitesimal neighborhood $\check{S}=Spf
(\check{\OO}_{S,0})$ of $\{0\}$.

 Note that the lattice $\HH_f^{(0)}\subset \HH_f$ does not evolute into the lattice
$\check{\HH}_F^{(0)}\subset \check{\HH}_F$ by the formal evolution map
\eqref{eq:exponential} (since the oscillatory integral factor $e^{(f-F)/z}$ has
pole in $z$). However, as we see in the next Proposition that an opposite
filtration $\LL$ of $\HH_f$ evolves to a formal opposite filtration in
$\check{\HH}_F$.
\end{remark}

\smallskip We now want to apply this formal evolution map to the sections and
opposite filtrations on $\HH_f$.

Consider a section $\BB\in \MM_{s,f}$ and a opposite filtration $\LL\in
\MM_{o,f}$ related as \eqref{eq:Opposite-Section} for the trivial unfolding
$f$. We first formally extend them as ${\BB}\hat{\otimes}_{\CC}
\check{\OO}_{S,0}:= \underset{\leftarrow}{\lim} \ \BB\otimes_\CC \OO_S/m^k
\subset {\HH}_f^{(0)}\hat{\otimes}_\CC\check{\OO}_{S,0}$ (actually, this is equal to ${\BB}{\otimes}_{\CC}
\check{\OO}_{S,0}$ due to the finite dimensionality of $\BB$) and
${\LL}\hat{\otimes}_{\CC} \check{\OO}_{S,0}:= \underset{\leftarrow}{\lim} \
\LL\otimes_\CC \OO_S/m^k\subset \HH_f\hat{\otimes}\check{\OO}_{S,0}$. Then, we
consider the images $e^{(f-F)/z}({\BB}\hat{\otimes}_{\CC} \check{\OO}_{S,0})$
and $e^{(f-F)/z}({\LL}\hat{\otimes}_{\CC} \check{\OO}_{S,0})$ in $\check{\HH}_F$
by the evolution morphism \eqref{eq:exponential}, and call them the formal
section and the formal opposite filtration on $\check{\HH}_F$ associated with
$\BB$ and $\LL$, respectively.

\smallskip Let us first justify the namings: the formal section and the formal
opposite filtration.

\begin{Proposition}
\label{FormalSection} \rm{(~\cite{LLS} Lemma 4.9)}
Let $\LL\in \MM_{o,f}$ be an opposite filtration.
Then,

1.  The formal evolution image $e^{(f-F)/z}({\LL}\hat{\otimes}_{\CC}
\check{\OO}_{S,0})$ of ${\LL}\hat{\otimes}_{\CC} \check{\OO}_{S,0}$ gives a
formal opposite filtration of $\check{\HH}_F$ in the sense that it defines a
splitting of $\check{\OO}_{S,0}$-module \be
\label{eq:Splitting} \check{\HH}_F\ = \ \check{\HH}_F^{(0)} \ \oplus \
e^{(f-F)/z}({\LL}\hat{\otimes}_{\CC} \check{\OO}_{S,0}) \ee satisfying the
formal opposite condition: \be
\label{eq:FormalOpposit} z^{-1} (e^{(f-F)/z}({\LL}\hat{\otimes}_{\CC}
\check{\OO}_{S,0}) ) \ \subset \ e^{(f-F)/z}({\LL}\hat{\otimes}_{\CC}
\check{\OO}_{S,0}).  \ee

2.  The intersection $\check{\HH}_F^{(0)} \cap
ze^{(f-F)/z}({\LL}\hat{\otimes}_{\CC} \check{\OO}_{S,0}) $ is a formal section
of $\check{\HH}_F^{(0)}$ in the sense that the natural projection to the
$z$-graded piece, \eqref{eq:Filter2} for $k=0$, induces an isomorphism:
\be
\label{eq:FormalSection} \check{\HH}_F^{(0)} \cap
ze^{(f-F)/z}({\LL}\hat{\otimes}_{\CC} \check{\OO}_{S,0}) \quad \simeq \quad
\check{\HH}_F^{(0)}/\check{\HH}_F^{(-1)} \quad (\simeq \check{\Omega}_F) .  \ee
\end{Proposition}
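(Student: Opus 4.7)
The plan is to reduce both assertions to algebraic identities on $\HH_f$ via $m$-adic induction, exploiting the key observation that since $f-F\in m\OO_Z$ implies $(f-F)^l\in m^l\OO_Z$, the operator $e^{(f-F)/z}$ acts as the identity on each associated graded piece $\mathrm{gr}^k_m\check\HH_F\simeq (m^k/m^{k+1})\otimes_\CC\HH_f$: for any $\eta\in m^k\check\HH_F$ we have $e^{(f-F)/z}(\eta)-\eta\in m^{k+1}\check\HH_F$. Throughout I abbreviate $\check\LL := \LL\hat{\otimes}_\CC\check\OO_{S,0}$ and $\check\LL_F := e^{(f-F)/z}(\check\LL)$. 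The opposite condition \eqref{eq:FormalOpposit} is then immediate from Proposition 4.1.4: the evolution map is $\CC((z))$-linear, so combined with $z^{-1}\LL\subset\LL$ we obtain $z^{-1}\check\LL_F = e^{(f-F)/z}(z^{-1}\check\LL)\subset\check\LL_F$.

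For the splitting \eqref{eq:Splitting} I would argue directness and spanning separately. \emph{Directness}: given $\zeta\in\check\HH_F^{(0)}\cap\check\LL_F$, write $\zeta=e^{(f-F)/z}(\eta)$ with $\eta\in\check\LL$; if $\eta\neq 0$, take $k_0\geq 0$ minimal with $\eta\notin m^{k_0+1}\check\LL$, so $\bar\eta\in\LL\otimes(m^{k_0}/m^{k_0+1})$ is nonzero. The graded-piece observation gives $\bar\zeta=\bar\eta$, but $\bar\zeta$ also lies in $\HH_f^{(0)}\otimes(m^{k_0}/m^{k_0+1})$, contradicting $\HH_f^{(0)}\cap\LL=0$. \emph{Spanning}: for $\zeta\in\check\HH_F$, inductively construct $\alpha_k\in\check\HH_F^{(0)}$ and $\beta_k\in\check\LL_F$ with $\zeta-\alpha_k-\beta_k\in m^{k+1}\check\HH_F$ and $\alpha_{k+1}-\alpha_k\in m^{k+1}\check\HH_F^{(0)}$, $\beta_{k+1}-\beta_k\in m^{k+1}\check\LL_F$. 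The base step uses $\HH_f=\HH_f^{(0)}\oplus\LL$ together with Proposition 4.1.2; the induction step decomposes the remainder on $(m^{k+1}/m^{k+2})\otimes\HH_f$, lifting the $\HH_f^{(0)}$-summand directly into $\check\HH_F^{(0)}$ and lifting the $\LL$-summand through $e^{(f-F)/z}$, where the lift has the correct reduction by the graded-piece observation. The $m$-adic completeness of $\check\HH_F^{(0)}$ and $\check\LL_F$ then produces $\alpha=\sum\alpha_k$, $\beta=\sum\beta_k$ with $\zeta=\alpha+\beta$.

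For Part 2 I would apply a formal analog of the bijection of Proposition 3.2 to the splitting from Part 1. Multiplying $\check\HH_F=\check\HH_F^{(0)}\oplus\check\LL_F$ by $z$ and using $z\check\HH_F^{(0)}=\check\HH_F^{(-1)}$ gives $\check\HH_F=\check\HH_F^{(-1)}\oplus z\check\LL_F$. The modular law (with $\check\HH_F^{(-1)}\subset\check\HH_F^{(0)}$), together with $\check\HH_F^{(-1)}\cap z\check\LL_F=z(\check\HH_F^{(0)}\cap\check\LL_F)=0$ from Part 1, yields $\check\HH_F^{(0)}=\check\HH_F^{(-1)}\oplus(\check\HH_F^{(0)}\cap z\check\LL_F)$. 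Combined with the formal analog of the exact sequence \eqref{eq:Filter2} at $k=0$, the projection $r_F^{(0)}$ restricts to the desired isomorphism \eqref{eq:FormalSection}.

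The main obstacle will be reconciling this $m$-adic argument with the fact — highlighted in the remark after Proposition 4.1 — that the regularity \eqref{eq:Filter1'} \emph{fails} for $\check\HH_F$: the evolution image of $\HH_f^{(0)}$ does \emph{not} lie in $\check\HH_F^{(0)}$, so one cannot simply transport the decomposition $\HH_f=\HH_f^{(0)}\oplus\LL$ via evolution. The layer-by-layer construction above circumvents this because at each fixed $m^k/m^{k+1}$ layer only finitely many inverse powers of $z$ from $e^{(f-F)/z}$ actually contribute, and in fact none do modulo $m^{k+1}$ on an element of order $k$.
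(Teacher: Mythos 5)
Your proposal is correct, and it reaches both conclusions by a route that is related to but concretely different from the paper's. The paper first builds an auxiliary \emph{non-formal} opposite filtration: it takes a $\CC$-basis $\phi_j$ of $\BB=\HH_f^{(0)}\cap z\LL$, shows via Nakayama's lemma and coherence that $\BB_F=\oplus_j\OO_S[\pi_X^{-1}(\phi_j)]$ is a genuine section of $\HH_F^{(0)}$ on a neighborhood of $0\in S$, hence that $z^{-1}\BB_F[z^{-1}]$ splits $\HH_F$ there; it then proves that the composition of $e^{(f-F)/z}$ with the projection onto $z^{-1}\BB_F[z^{-1}]\otimes\OO_S/m^{k+1}$ is an isomorphism (free modules, iso mod $m$, Nakayama), so that $e^{(f-F)/z}(\LL\otimes\OO_S/m^{k+1})$ is a complement to $\HH_F^{(0)}\otimes\OO_S/m^{k+1}$ at each level, and passes to the limit. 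You dispense with the auxiliary object $\BB_F$ entirely and argue directly on the associated graded of the $m$-adic filtration, using the observation that $e^{(f-F)/z}$ is the identity on $\mathrm{gr}^k_m$ (which is exactly the content of the $l\ge1$ terms of \eqref{eq:Exponential} lying in $m^l$); directness comes from a minimal-order argument and spanning from successive approximation plus completeness. The two mechanisms are deformation-theoretically equivalent (Nakayama over the Artinian rings $\OO_{S,0}/m^{k+1}$ versus explicit layer-by-layer induction), but yours is more self-contained, while the paper's yields the reference filtration $z^{-1}\BB_F[z^{-1}]$ that it reuses afterwards (e.g.\ in Remark 4.3 on why the quotient in \eqref{eq:polarPart} is needed, and in the proof of Proposition \ref{zetaminus}). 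Two small points you should make explicit: the identity $\check{\HH}_F^{(0)}\cap m^{k}\check{\HH}_F=m^{k}\check{\HH}_F^{(0)}$ used in your directness step (it follows from $\HH_F^{(0)}$ being an $\OO_S[[z]]$-free lattice in the $\OO_S((z))$-free module $\HH_F$, hence an $\OO_S$-direct summand), and the implicit identification of $\HH_f\hat{\otimes}_\CC\check{\OO}_{S,0}$ with $\check{\HH}_F$ via the constant trivialization $\pi_X^{-1}$ that underlies your statement ``$e^{(f-F)/z}(\eta)-\eta\in m^{k+1}\check{\HH}_F$''. Your Part 2 (modular law applied to the $z$-multiple of the splitting) is exactly the argument the paper invokes by reference to the first half of Proposition \ref{SectOppo}.
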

\begin{proof} 1. Let a $\CC$-basis of $\BB:=\HH^{(0)}_f\cap z\LL$ be presented
by forms $\phi_j\in p_{0*}(\Omega_{X}^n)[[z]]$ ($j=1,\cdots,\mu$). Consider the
$\OO_S$-morphism $r_F^{(0)}: \BB_F:=\oplus_{j=1}^\mu \OO_S [\pi_X^{-1}(\phi_j) ]
\to \Omega_F$ given in \eqref{eq:Filter2}.  Its restriction $r_f^{(0)}$ to the
origin $0\in S$ (i.e.\ to tensor with $\OO_{S,0}/m$) is the isomorphism
$\BB\simeq \Omega_f$ (since $\BB$ is a section). Then by Nakayama's lemma, the morphism  is isomorphic at the stalks at the origin.  Then the $\OO_S$-isomorphism between coherent sheaves at the  origin extends to an $\OO_S$-isomorphism in a neighborhood of the origin of
$S$. This implies that $\BB_F = \oplus_{j=1}^\mu \OO_S [\pi_X^{-1}(\phi_j) ] \
(\simeq\BB\otimes_\CC\OO_S$ depending on $\phi_j$'s) on the neighborhood is a
section of $\HH_F^{(0)}$. So,
$z^{-1}\BB_F[z^{-1}]$ is an opposite filtration of $\HH_F$ in a neighborhood of
$0\in S$.  In particular, we get a splitting
\begin{equation}
\label{eq:OL} \HH_F\otimes_{\OO_S}\OO_S/m^{k+1}
=\Big(\HH_F^{(0)}\otimes_{\OO_S}\OO_S/m^{k+1}\Big) \oplus
\Big(z^{-1}\BB_F[z^{-1}]\otimes_{ O_S} \OO_S/m^{k+1} \Big) \quad k\in\ZZ_{>0}
\end{equation} On the other hand, recall the isomorphism: $e^{(f-F)/z} :
\BB((z))\otimes_\CC\OO_S/m^{k+1} \to \HH_F\otimes_{\OO_S} \OO_S/m^{k+1}$
(Proposition \ref{FlatExtension1}). Let us show that the composition of this
isomorphism with the embedding $\LL\otimes_\CC\OO_S/m^{k+1}
(=z^{-1}\BB[z^{-1}]\otimes_\CC\OO_S/m^{k+1}) \subset
\BB((z))\otimes_\CC\OO_S/m^{k+1}$ and with the projection to the second factor
$z^{-1}\BB_F[z^{-1}]\otimes_{ O_S} \OO_S/m^{k+1}$ of \eqref{eq:OL} (i.e.\
quotient by the submodule $\HH_F^{(0)}\otimes_{\OO_S}\OO_S/m^{k+1}$, see the
following Remark 4.3) is an $\OO_S/m^{k+1}[z^{-1}]$-isomorphism
 \be
 \label{eq:polarPart}
 e^{(f-F)/z}\LL\otimes_\CC\OO_S/m^{k+1} \ \simeq \ z^{-1}\BB_F[z^{-1}]\otimes_{
O_S} \OO_S/m^{k+1}.
\ee
({\it Proof.} We note the facts

i) The map is  an $\OO_S/m^{k+1}[z^{-1}]$-homomorphism. 

ii) Both hand sides of the map are $\OO_S/m^{k+1}[z^{-1}]$-free modules.

iii) The map induces an isomorphism if one consider the map modulo the maximal
ideal $m$.

\noindent Then it is clear that the map itself is an
$\OO_S/m^{k+1}[z^{-1}]$-isomorphism.  \ $\Box$\ )

 Thus, the image $e^{(f-F)/z}(\LL\otimes_\CC\OO_S/m^{k+1})$ in
$\HH_F\otimes_{\OO_S} \OO_S/m^{k+1}$ defines a splitting 
$$
\HH_F\otimes_{\OO_S}
\OO_S/m^{k+1} = (\HH_F^{(0)}\otimes_{\OO_S}\OO_S/m^{k+1}) \oplus
e^{(f-F)/z}(\LL\otimes_\CC\OO_S/m^{k+1}). 
$$
Clearly these splittings for $k\in
\ZZ_{\ge0}$ form a projective system, and hence taking the limit for
$k\to\infty$, we obtain
the splitting 
\eqref{eq:Splitting}. The opposite condition (i.e.\ $z^{-1}$-stability) on
$e^{(f-F)/z}(\LL\otimes_\CC\OO_S/m^{k+1})$ obviously follows from that on $\LL$.

\smallskip 2. The formula \eqref{eq:FormalSection} follow from the same logic in
the first half of Proposition \ref{SectOppo} .
\end{proof}

\begin{remark} This quotient process in the proof of \eqref{eq:polarPart} is essentially necessary, since the evolution map
$e^{(f-F)/z}$ may not bring elements of $\LL\otimes_\CC\OO_S/m^{k+1} $ into
$z^{-1}\BB_F[z^{-1}]\otimes_{ O_S} \OO_S/m^{k+1}$. This is caused, since, in
order to express elements $(F_0-F)^l \phi_j$'s in a linear combination of
elements in $\phi_j\otimes \OO_S/m^{k+1}$, one need to calculate modulo the
image of $D_F$, which may create some terms having coefficients of positive
powers in $z$.
\end{remark}
 
 \begin{remark} The evolution image $e^{(f-F)/z}({\BB}\hat{\otimes}_{\CC}\check{\OO}_{S,0})$ of the section $\BB$ for $\LL$ does not give the section $\check{\HH}_F^{(0)} \cap ze^{(f-F)/z}({\LL}\hat{\otimes}_{\CC} \check{\OO}_{S,0})$ for the opposite filtration $e^{(f-F)/z}({\LL}\hat{\otimes}_{\CC} \check{\OO}_{S,0})$, since the image as a submodule of the direct sum  \eqref{eq:Splitting} may not be contained in the first factor, but just trancating its second factor part, it does give the section $\check{\HH}_F^{(0)} \cap ze^{(f-F)/z}({\LL}\hat{\otimes}_{\CC} \check{\OO}_{S,0})$. Actually, on the contrary, the evolution image of $\BB$ contains highly interesting information in the second factor.
  \end{remark}

\begin{remark} The naive relation $z^{-1}\BB[z^{-1}]=\LL$ inside $\HH_f$ does
not extends to the formal level, but we only have
$z^{-1}\big(\check{\HH}_F^{(0)} \cap ze^{(f-F)/z}({\LL}\hat{\otimes}_{\CC}
\check{\OO}_{S,0})\big)[z^{-1}] \underset{\not=}{\subset}
e^{(f-F)/z}({\LL}\hat{\otimes}_{\CC} \check{\OO}_{S,0})$ inside $\check{\HH}_F$
due to the phenomenon in Remark 4.1. However, the opposite filtration
$e^{(f-F)/z}({\LL}\hat{\otimes}_{\CC} \check{\OO}_{S,0}) $ can be recovered from
the section $\check{\HH}_F^{(0)} \cap ze^{(f-F)/z}({\LL}\hat{\otimes}_{\CC}
\check{\OO}_{S,0})$ by a ``formal completion" (see \S6 proof of Theorem 6.1).
\end{remark}

\medskip
The flatness of the formal evolution morphism $e^{(f-F)/z}$ (Propostion 4.1, 2.)
implies the following important stability property of the formal section and the
formal opposite filtration.

\begin{Proposition}
\label{t-stability} {\rm (\cite{LLS} Lemma 4.11.)}  The Gauss-Manin connection
$\nabla$ on $\check{\HH}_F$ over $\TT_S$ preserves 
the formal opposite filtration given in Proposition \ref{FormalSection}. 
That is, 
$$
\begin{array}{rcl}
\nabla_{\TT_S} \big( e^{(f-F)/z}({\LL}\hat{\otimes}_{\CC} \check{\OO}_{S,0}) \big)  \subset   e^{(f-F)/z}({\LL}\hat{\otimes}_{\CC} \check{\OO}_{S,0}). 
\end{array}
$$
We  call this property $\underline{t}$-stability of  the formal opposite filtration. 
\end{Proposition}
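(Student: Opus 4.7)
The plan is to reduce the statement to the commutativity (equivariance) between the evolution map and the $\TT_S$-action that was already established as part 5(ii) of Proposition \ref{FlatExtension1}. Under the isomorphism $e^{(f-F)/z}: \HH_f\hat{\otimes}_\CC \check{\OO}_{S,0} \simeq \check{\HH}_F$, the Gauss-Manin connection $\nabla_v$ on the right corresponds to the trivial connection $1\hat{\otimes} v$ on the left (i.e.\ the derivation acting only on the coefficient factor $\check{\OO}_{S,0}$). Since the subspace $\LL \hat{\otimes}_\CC \check{\OO}_{S,0}$ is manifestly preserved by $1\hat{\otimes} v$ (the first factor $\LL$ is left untouched and $v$ maps $\check{\OO}_{S,0}$ to itself), transporting this stability through the isomorphism immediately yields the claim.

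More concretely, I would work modulo $m^{k+1}$ for each finite $k\in\ZZ_{\ge 0}$ first, where the tensor product $\LL\otimes_\CC \OO_S/m^{k+1}$ is spanned by finite sums $\sum_i [s_i]\otimes a_i$ with $[s_i]\in\LL$ and $a_i\in\OO_S/m^{k+1}$. For any such sum and any $v\in\TT_S$, Proposition \ref{FlatExtension1}.5(ii) gives
\begin{equation*}
\nabla_v\!\left(e^{(f-F)/z}\!\left(\textstyle\sum_i [s_i]\otimes a_i\right)\right)
\ = \ e^{(f-F)/z}\!\left(\textstyle\sum_i [s_i]\otimes v(a_i)\right),
\end{equation*}
and the right-hand argument still lies in $\LL\otimes_\CC \OO_S/m^{k+1}$. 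Taking the projective limit in $k$ then gives $\nabla_v\bigl(e^{(f-F)/z}(\LL\hat{\otimes}_\CC \check{\OO}_{S,0})\bigr)\subset e^{(f-F)/z}(\LL\hat{\otimes}_\CC \check{\OO}_{S,0})$, which is the desired $\underline{t}$-stability.

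There is no real technical obstacle here; the content has been front-loaded into the construction of the evolution map in Proposition \ref{FlatExtension1}. One small point worth checking explicitly is that the formula in 5(ii), which is stated for pure tensors $[s]\hat{\otimes} a$ and for $a\in(\OO_{S,0}((z)))^\vee$, restricts correctly to coefficients in $\check{\OO}_{S,0}\subset(\OO_{S,0}((z)))^\vee$ and extends by continuity to the projective limit; this is immediate because the $\nabla_v$ on $\check{\HH}_F$ was defined precisely as the projective limit of the $\nabla_v$ on $\HH_F/m^{k+1}\HH_F$ (see \eqref{eq:FormalNabla}), and $\TT_S$ acts on $\check{\OO}_{S,0}$ by preserving the $m$-adic filtration up to one degree drop. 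Thus the proof amounts to unwinding the definitions and invoking 5(ii), without any further cohomological computation.
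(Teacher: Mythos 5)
Your proposal is correct and follows essentially the same route as the paper: the paper's proof writes an element as a projective system of finite sums $\sum_i e^{(f-F)/z}a_i\otimes b_i \bmod m^{k+1}$ and applies the flatness $\nabla_v(e^{(f-F)/z}a_i)=0$ together with the Leibniz rule, which is exactly the content of Proposition 4.1.5(ii) that you invoke. Your packaging via the equivariance statement and the explicit passage to the projective limit (with the one-degree drop $m^{k+1}\to m^{k}$) matches the paper's argument in substance.
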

\begin{proof} Let an element $s$ of $e^{(f-F)/z}({\BB}\hat{\otimes}_{\CC}
\check{\OO}_{S,0})$ or $e^{(f-F)/z}({\LL}\hat{\otimes}_{\CC} \check{\OO}_{S,0})$
be given by a sequence $s_k=\sum_i e^{(f-F)/z} a_i\otimes b_i \bmod m^{k+1}$ for
$a_i\in \BB$ or $\in \LL$ and $b_i\in \OO_{S,0}/m^{k+1}$ for $k\in \ZZ_{\ge
0}$. Since $e^{(f-F)/z} a_i$ is flat \eqref{eq:flatness}, we have $\nabla_v s
\equiv \sum_i e^{(f-F)/z} a_i\otimes \partial_{\tilde v} b_i \bmod m^{k}$ where
the RHS converges to an element in $e^{(f-F)/z}({\BB}\hat{\otimes}_{\CC}
\check{\OO}_{S,0})$ or in $e^{(f-F)/z}({\LL}\hat{\otimes}_{\CC}
\check{\OO}_{S,0})$, respectively.
\end{proof}

The $z\partial_z$-stability property and the metric property of sections $\BB$
and opposite filtrations $\LL$ in $\HH_f$ are inherited to formal section and
formal opposite filtrations in $\check{\HH}_F$ as follows.

\begin{Proposition}
\label{FlatExtension} Let $\LL\in \MM_{o,f}$ be an opposite filtration
associated with a section $\BB\in \MM_{s,f}$.

1.  If $\LL$ is $z\partial_z$-stable (recall Proposition \ref{z-stable}), then
the formal opposite filtration $e^{(f-F)/z}({\LL}\hat{\otimes}_{\CC}
\check{\OO}_{S,0})$ is formally $z\partial_z$-stable in the sense: \be
\label{eq:z-stable} \nabla_{z\partial_z}
\big(e^{(f-F)/z}({\LL}\hat{\otimes}_{\CC} \check{\OO}_{S,0})\big) \ \subset \
e^{(f-F)/z}({\LL}\hat{\otimes}_{\CC} \check{\OO}_{S,0}).  \ee

2. If $\LL$ satisfies the metric condition (see Proposition \ref{Metric}), then
the formal opposite filtration $e^{(f-F)/z}({\LL}\hat{\otimes}_{\CC}
\check{\OO}_{S,0})$ satisfies the formal metric condition: \be
\label{eq:Metric2} \check{K}_F\big(e^{(f-F)/z}({\LL}\hat{\otimes}_{\CC}
\check{\OO}_{S,0}), \ e^{(f-F)/z}({\LL}\hat{\otimes}_{\CC}
\check{\OO}_{S,0})\big) \ \subset \ z^{-2+n} \big(\OO_S[z^{-1}])^\vee \ee and
the formal section $\check{\HH}_F^{(0)} \cap
ze^{(f-F)/z}({\LL}\hat{\otimes}_{\CC} \check{\OO}_{S,0}) $ satisfies the formal
purity condition: \be
\label{eq:Metric3} \check{K}_F\big(\check{\HH}_F^{(0)} \cap
ze^{(f-F)/z}({\LL}\hat{\otimes}_{\CC} \check{\OO}_{S,0}), \ \check{\HH}_F^{(0)}
\cap ze^{(f-F)/z}({\LL}\hat{\otimes}_{\CC} \check{\OO}_{S,0})\big ) \quad
\subset \quad z^n \check{\OO}_{S,0} .  \ee
\end{Proposition}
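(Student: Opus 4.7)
\medskip
\noindent\textbf{Proof plan.} The plan is to handle Part 1 as a direct consequence of the $\nabla_{z\partial_z}$-equivariance of the evolution map, to derive \eqref{eq:Metric2} via a $\TT_S$-flatness argument for $\check K_F$, and finally to deduce the formal purity \eqref{eq:Metric3} by sandwiching $\check K_F(\BB_F,\BB_F)$ between two independent bounds and intersecting. For Part 1, I invoke Proposition~\ref{FlatExtension1}.5(i): on $\HH_f\hat\otimes_\CC\check\OO_{S,0}$ the operator $\nabla_{z\partial_z}$ acts only on the first factor and is intertwined by the evolution map with $\nabla_{z\partial_z}$ on $\check\HH_F$. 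The hypothesis on $\LL$ reads $\nabla_{z\partial_z}\LL\subset\LL$ (Proposition~\ref{z-stable}), so the intertwining sends $e^{(f-F)/z}(\LL\hat\otimes_\CC\check\OO_{S,0})$ back to itself, giving \eqref{eq:z-stable} in essentially one line.

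For \eqref{eq:Metric2} the essential idea is that $\check K_F$ is constant along flat sections. Given $l_1,l_2\in\LL$, the images $e^{(f-F)/z}l_i$ are $\nabla_{\TT_S}$-flat by Proposition~\ref{FlatExtension1}.3. Combining the Leibniz rule (iii) of Proposition~\ref{GMHR}, extended to the formal setting as announced after \eqref{eq:FormalHR}, with this flatness shows that $\partial_v\check K_F(e^{(f-F)/z}l_1,e^{(f-F)/z}l_2)=0$ for every $v\in\TT_S$. Hence the pairing is determined by its restriction at $\underline t=0$, which equals $K_f(l_1,l_2)\in z^{-2+n}\CC[z^{-1}]$ by the metric hypothesis on $\LL$. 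Extending $\check\OO_{S,0}$-bilinearly then yields \eqref{eq:Metric2}.

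For \eqref{eq:Metric3}, set $\BB_F:=\check\HH_F^{(0)}\cap z\,e^{(f-F)/z}(\LL\hat\otimes_\CC\check\OO_{S,0})$ and exploit its two defining inclusions. From $\BB_F\subset\check\HH_F^{(0)}$ together with property (v) of Proposition~\ref{GMHR} I obtain $\check K_F(\BB_F,\BB_F)\subset z^n\check\OO_{S,0}[[z]]$, while from $\BB_F\subset z\,e^{(f-F)/z}(\LL\hat\otimes_\CC\check\OO_{S,0})$, $\OO_S((z))$-bilinearity of $\check K_F$, and the just-proved \eqref{eq:Metric2} I obtain $\check K_F(\BB_F,\BB_F)\subset z^2\cdot z^{-2+n}\check\OO_{S,0}[z^{-1}]=z^n\check\OO_{S,0}[z^{-1}]$. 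Intersecting the two containments inside $(\OO_S((z)))^\vee$ leaves exactly $z^n\check\OO_{S,0}$, which is \eqref{eq:Metric3}.

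The main obstacle, though expected to be mild, is the bookkeeping required to certify that the formal pairing $\check K_F$ really inherits $\OO_S((z))$-bilinearity, the Leibniz rule (iii), and the leading-order property (v) from $K_F$. These facts are asserted but left to the reader after \eqref{eq:FormalHR}; making them precise amounts to taking projective limits of the corresponding statements over the Artinian quotients $\OO_{S,0}/m^{k+1}$. I would establish them as a short preliminary step before invoking them in the two middle paragraphs.
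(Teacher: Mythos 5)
Your proposal is correct and follows essentially the same route as the paper: Part 1 is precisely the $z\partial_z$-equivariance of the evolution map (the paper redoes the termwise computation behind Proposition \ref{FlatExtension1}, 5.(i) rather than citing it), and \eqref{eq:Metric3} is obtained, exactly as in the paper, by intersecting the bound coming from \eqref{eq:HRdeg} with the bound coming from \eqref{eq:Metric2}. Your flatness-plus-Leibniz argument for \eqref{eq:Metric2} supplies the step the paper compresses into ``since $K_F$ is $\OO_S$-bilinear'' and is the right way to make that containment precise; the only caveat is that \eqref{eq:HRdeg} as printed gives $z^{n+1}\OO_S[[z]]$ rather than the $z^{n}\check{\OO}_{S,0}[[z]]$ you use, an internal inconsistency of the paper's $z$-exponents (compare the proof of Proposition \ref{Metric}) that your reading resolves in the only way compatible with the asserted target $z^{n}\check{\OO}_{S,0}$.
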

\begin{proof} Those follow formally from the definitions. 

1. Let us express an element of 
${\LL}\hat{\otimes}_{\CC} \check{\OO}_{S,0}$by the limit of a sequence of
elements of the form $\sum_i a_i^k\otimes b_i^k \bmod m^{k+1}$ for $a_i^k\in
\LL$ and $b_i^k\in \OO_S$ for $k=1,2,3,\cdots$. Then, using the first line of
\eqref{eq:Gauss-Manin},
$$
\begin{array}{ll} &\nabla_{z\partial_z}\big(e^{(f-F)/z}\sum_i a_i\otimes b_i
\bmod m^{k+1}\big)\! \\ = & e^{(f-F)/z} \big(\sum_i -((f-F)/z) a_i\otimes b_i
+(z\partial_z a_i)\otimes b_i -z^{-1}F a_i\otimes b_i \big) \bmod m^{k+1}\\ =
&e^{(f-F)/z} \big(\sum_i -(f/z) a_i +(z\partial_z a_i)\big) \otimes b_i \bmod
m^{k+1}\\ = & e^{(f-F)/z} \big(\nabla_{z\partial_z} a_i \big) \otimes b_i \bmod
m^{k+1}\\
\end{array}
$$
By assumption, $\nabla_{z\partial_z} a_i \in \LL$ so that the last line belongs
to $e^{(f-F)/z}({\LL}\hat{\otimes}_{\CC} \check{\OO}_{S,0})$, implying
\eqref{eq:z-stable}.

\smallskip 2.\ Since $K_F$ is $\OO_S$-bilinear \eqref{eq:HRP}, it induces a
$\OO_S/m^{k+1}$-bilnear form such that
$$
K_F\big(e^{(f-F)/z}({\LL}\hat{\otimes}_{\CC} \check{\OO}_{S,0}) \bmod m^{k+1}, \
e^{(f-F)/z}({\LL}\hat{\otimes}_{\CC} \check{\OO}_{S,0}) \bmod m^{k+1}\big)
\subset z^{-2+n} \OO_{S,0}[z^{-1}] \bmod m^{k+1}.
$$
This implies clearly \eqref{eq:Metric2}. Hence, combining with \eqref{eq:HRdeg},
we get \eqref{eq:Metric3}.
\end{proof}
\begin{remark} Suppose the opposite filtration $\LL$ in $\HH_f$ satisfy the
metric condition. Then, Proposition \ref{FlatExtension}, 2.\ implies
$\omega\big(e^{(f-F)/z}({\LL}\hat{\otimes}_{\CC}
\check{\OO}_{S,0}),e^{(f-F)/z}({\LL}\hat{\otimes}_{\CC}
\check{\OO}_{S,0})\big)=0$. Hence the decomposition \eqref{eq:Splitting} is
holonomic decomposition of $\check{\HH}_F$, since
$\omega(\check{\HH}_F^{(0)},\check{\HH}_F^{(0)})=0$ is automatic due to the
formal version of \eqref{eq:HRdeg} (see Remark \ref{simplectic}). But we shall
not use this fact explicitely.
\end{remark}

\begin{remark} As in the latter half of Remark 4.1, one should note again that
the range $z^{-2+n}\big(\OO_S[z^{-1}])^\vee$ (see \eqref{eq:Metric2})) of the
residue pairings $\check{K}_F$ on $e^{(f-F)/z}({\LL}\hat{\otimes}_{\CC}
\check{\OO}_{S,0})$ is strictly larger than
$z^{-2+n}\check{\OO}_{S,0}[z^{-1}]$. That is, the order of the pole in $z$ of
the coefficient of $t^k$ may increase  infinitely as $k\to \infty$.
\end{remark}

\section{Primitive forms without metric structure} In this section, we
introduce the notion of (formal or non-formal) primitive forms and then (formal
or non-formal) primitive forms without metric structure.

Let us recall the definition of primitive forms \cite{S1} in slightly reformed
way.  
\footnote {
Compared with \cite{S1}, in the present formulation as in
\cite{FromPrimitiveToFrobenius} and \cite{LLS}, we included the descendent parameter $z$ explicitly
so that the base space $\CC\times S$ of the family of vanishing cycles is one
dimensional higher than the base space $S$ of the unfolding in \cite{S1}. This
makes essentially two changes of the formulation: 1.\ $F_1$ action is changed to
$z\partial_z$-action in the second condition of (P3) and in (P4). 2.\ To kill
the redundancy of the base space, we need to add one more relation (P1) (which
was missing in \cite{FromPrimitiveToFrobenius} and \cite{LLS}, see \cite{Tcorrection}), formulated
as the triviality of primitive forms in primitive direction (actually, this
formula (P1) is the reason of calling $\partial_1$ primitive derivation or
primitive direction).  }

\begin{definition} {\bf Primitive Form. } An element $\zeta\in
\Gamma(S,\HH_F^{(0)})$ is called a \textit{primitive form} if it satisfies the following
conditions {\rm (P0)-(P4)}.

{\rm (P0) (Primitivity)}: The covariant derivations of $\zeta$ induces an
$\OO_S$-isomorphism:
$$
z\nabla\zeta \ : \ \TT_S \ \rightarrow \ \HH_F^{(0)}/\HH_F^{(-1)}\simeq
\Omega_F, \quad v\mapsto z\nabla_v\zeta \ \bmod \HH_F^{(-1)}.
$$

{\rm (P1) (Triviality in primitive direction)}: The $\zeta$ is trivial in the
primitive direction $\partial_1$. That is,
$$
z \nabla_{\partial_1}\ \zeta\ =\ \zeta .
$$

{\rm (P2) (Purity of higher residue pairings)}: The covariant derivations of $\zeta$ has the
following purity w.r.t. the residues.
$$  
K_F(\nabla_{\TT_S}\zeta,\nabla_{\TT_S}\zeta)\subset z^{n-1}\OO_S \ .
$$

{\rm (P3) (Holonomicity)}: The second covariant derivatives of $\zeta$ satisfies
the followings.
$$
\begin{array}{lccl} K_F(\nabla_{\TT_S}\nabla_{\TT_S}\zeta,\nabla_{\TT_S}\zeta) \
\in \ z^{n-2}\OO_S+ z^{n-1}\OO_S \ \ \\
K_F(\nabla_{z\partial_z}\nabla_{\TT_S}\zeta,\nabla_{\TT_S}\zeta) \ \in \
z^{n-2}\OO_S + z^{n-1}\OO_S .
\end{array}
$$

{\rm (P4) (Homogenity)}: There is a constant $r\in \CC$ such that
$$
\nabla_{z\partial_z+E}\ \zeta = \ r\zeta.
$$
\qquad\quad where $E$ is the Euler vector field \eqref{eq:UnitEuler}.
\end{definition}

\noindent
\begin{remark}\label{rem:Formal}  The definition is analytic in $S$ direction. However, it is still formal in $z$-variable.
\end{remark}
\begin{remark} \label{rem:Section} The condition {\rm (P0)} is equivalent to that the image of covariant
differentiations of $\zeta$
$$
\label{eq:Bzeta} \BB_{\zeta}\ := \ Im\Big( z\nabla \zeta: \TT_S \to
\HH^{(0)}_F\Big) \ \ = \ z\nabla_{ \TT_S}\zeta
$$
form a section of $\HH^{(0)}_F$.  Using this notation, the condition {\rm (P2)}
is equivalent to the following
$$  
\text{\rm (P2)$^*$ (Orthogonality)} \qquad \qquad\qquad
K_F(\BB_{\zeta},\BB_{\zeta})\subset z^{n+1}\OO_S \ . \qquad \qquad \qquad\qquad
\qquad\qquad\qquad
$$

\noindent Then, under the assumptions {\rm (P0)} and {\rm (P2)} ($\simeq$ {\rm
(P2)}$^*$), the condition {\rm (P3)} is equivalent to

\smallskip {\rm (P3)$^*$ (Holonomicity)$^*$}: The second covariant derivatives
of $\zeta$ satisfies the followings.
$$
\begin{array}{rcl}
  \nabla_{\TT_S}\nabla_{\TT_S}\zeta & \subset & z^{-2}\BB_{\zeta}
                                                + z^{-1}\BB_{\zeta} \\ 
\nabla_{z\partial_z}\nabla_{\TT_S}\zeta & \subset & z^{-2}\BB_{\zeta}+ z^{-1}\BB_{\zeta}.
\end{array}
$$

\end{remark}

\medskip
We now, removing the metric condition (P2) from the conditions (P0)-(P4) for a
primitive form,  introduce a concept of \textit{primitive forms without metric
condition} (see Remark 5.4 for this terminology), where we replace the condition
(P3) by (P3)$^*$ (see Remark 5.1 for this replacement).

\begin{definition} {\bf Primitive Form without metric structure. }
\label{PFwithoutmetric} 
An element $\zeta\in \Gamma(S,\HH_F^{(0)})$ is called a
\textit{primitive form without metric structure} if it satisfies the conditions {\rm
(P0)}, {\rm (P1)}, {\rm (P3)}$^{*}$ and {\rm (P4)} but may not necessarily
satisfy {\rm (P2)}$^{*}$.



\end{definition}

Next, let us recall the definition of formal primitive forms \cite{LLS}, Definition
4.6.

\begin{definition} 
\label{FPF}
{\bf Formal Primitive Form. } An element $\zeta\in
\check{\HH}_F^{(0)}$ is called a \textit{formal primitive form}, if it satisfies {\rm
(P0)$^\vee$}, {\rm (P1)}, {\rm (P2)$^\vee$}, {\rm (P3)$^\vee$} and {\rm (P4)},
where

\medskip {\rm (P0)$^\vee$}: Replace $\TT_S$ and $\Omega_F$ in {\rm (P0)} by
$\check{\TT}_S=\TT_S\otimes_{\OO_S}\check{\OO}_{S,0}$ and by
$\check{\Omega}_F=\Omega_F\otimes_{\OO_S}\check{\OO}_{S,0}$, respectively.

{\rm (P2)$^\vee$}: Replace $\TT_S$, $K_F$ and $\OO_S$ in {\rm (P2)} by
$\check{\TT}_S$, $\check{K}_F$ and by $\check{\OO}_{S,0}$, respectively.

{\rm (P3)$^\vee$}: Replace $\TT_S$, $K_F$ and $\OO_S$ in {\rm (P3)} by
$\check{\TT}_S$, $\check{K}_F$ and by $\check{\OO}_{S,0}$, respectively.

\end{definition}

\medskip

Finally, let us introduce a concept of formal primitive forms without metric
structure

\begin{definition}
\label{FPFwithoutmetric} 
{\bf Formal Primitive Form without metric structure. }
An element $\zeta\in \check{\HH}_F^{(0)}$ is called a \textit{formal primitive form
without metric structure}, if it satisfies {\rm (P0)$^\vee$}, {\rm (P1)}, {\rm
(P3)$^{* \vee}$} and {\rm (P4)} but may not necessarily satisfy {\rm
(P2)$^{\vee}$}, where

\smallskip {\rm (P3)$^{*\vee}$}: Replace $\TT_S$ and $\BB_{\zeta}$ in {\rm
(P3)$^*$} by $\check{\TT}_S$ and by $\check{\BB}_{\zeta}$, respectively, where
we set the formal section by
$$
\check{\BB}_{\zeta}\ := \ Im(z\nabla \zeta : \check{\TT}_S\to
\check{\HH}^{(0)}_F) \ \ = \ z\nabla_{ \check{\TT}_S}\zeta.
$$
\end{definition}

\medskip
\begin{remark} An obvious remark is that the embedding \eqref{eq:inclusion} send
a primitive form $\zeta$ without metric structure in $\Gamma(S,
\HH_F^{(0)})$ to a formal primitive form $\zeta$ without metric
structure in $\check{\HH}_F^{(0)}$.
Is the converse, i.e.\ all formal primitive
forms without metric structure are induced from some analytic primitive
form without metric structure, true?
\end{remark}
\begin{remark} Note that the condition {\rm (P3)} alone cannot be equivalent to
{\rm (P3)$^*$} or {\rm (P3)$^{**}$}, but the combination {\rm (P2)} and {\rm
(P3)} is equivalent to the combination {\rm (P2)} and {\rm (P3)$^*$} (see Proof
4. of {\rm Theorem \ref{Theorem}}).
\end{remark}
\begin{remark} In Definitions \ref{PFwithoutmetric} and \ref{FPFwithoutmetric},
the terminology ``without metric structure" is used in a weak sense ``with or
without metric structure". That is, a primitive form $\zeta$ without metric
structure may eventually satisfy the metric condition {\rm (P2)} so that it can
actually be a primitive form, unless it is explicitly stated that $\zeta$ does
not satisfy {\rm (P2)}.
\end{remark}

\section{Construction of formal primitive forms with
or \label{sec:Construction}\\ \qquad\quad {without metric structure}}

In order to formulate the first main result, Theorem 6.1, of the present paper, we
prepare a notion of a {\it good pair} (\cite{LLS} Def. 4.15, see Remark 6.1).

\begin{definition}
\label{GoodPair} A pair $(\LL, \zeta_0)$ of a $z\partial_z$-stable opposite
filtration $\LL$ of $\HH_f$ and an element $\zeta_0 \in \BB:=\HH_f^{(0)}\cap
z\LL$ is called a \textit{good pair} (without metric condition) if it satisfies the
following (1) and (2).

(1) (primitivity) The image of $\zeta_0$ in $\Omega_f$ by $r_f^{(0)}$ (see
\eqref{eq:Filter2}) generates $\Omega_f$ over $\OO_{C_f}$.

(2) (homogeneity) There is a constant $r\in\CC$ such that
$\nabla_{z\partial_z}\zeta_0-r\zeta_0 \in \LL$.

\noindent If $\LL$ satisfies further the metric condition (c.f.\ {\rm
Proposition \ref{Metric}}), the pair $(\LL,\zeta_0)$ is called a good pair with
metric condition (see Remark 3.1).
\end{definition}

\begin{remark} According to the correction of Definition of primitive forms (see
2. of the Footnote 3  in \S5), we correct the definition of a good pair from
\cite{LLS}. Namely, we took $\zeta_0$ from $\HH_f^{(0)}$ in \cite{LLS}, but, in the present paper, 
we assume more strongly that $\zeta_0$ belongs to $\BB:=\HH_f^{(0)}\cap z\LL$.
\end{remark}

Theorem 4.16 in~\cite{LLS} gives a construction of formal primitive forms from a good
pair with metric condition.  We confirm in the following Theorem that the same construction
works for the case without metric condition.

\begin{theorem}
\label{Theorem} Let $F$ be a universal unfolding of a function $f$ with an
isolated critical point at the origin (See Definitions \ref{Unfolding} and
\ref{UniversalUnfolding}). Then, we have bijections 
\ba
\label{eq:PFwithoutmetric} \{\text{good pairs }(\LL,\zeta_0) \text{ on }
\HH_f\} \ \simeq \ \{\text {formal primitive forms in }
\check{\HH}_F^{(0)} \text{ without metric structure}\} \\ 
\ea 
\ba
\label{eq:PFwithmetric} \{\text{good pairs }(\LL,\zeta_0) \text{ on }\HH_f
\text{ with metric structure}\} \ \simeq \ \{\text{formal primitive forms in
}\check{\HH}_F^{(0)}\} 
\ea 
The correspondence $ (\LL,\zeta_0) \ \mapsto
\zeta_+$ from LHS to RHS is given as follows: apply the evolution morphism to
$\zeta_0$ so that we obtain an element $e^{(f-F)/z} \zeta_0$ in $
\check{\HH}_F$.  Then, apply the splitting \eqref{eq:Splitting} to this element
so that we have the decomposition: 
\be
\label{eq:Decomposition} e^{(f-F)/z} \zeta_0 \quad = \quad \zeta_+ \ + \ \zeta_-
\ee 
for $\zeta_+ \in \check{\HH}_F^{(0)}$ and $\zeta_- \in
e^{(f-F)/z}({\LL}\hat{\otimes}_{\CC} \check{\OO}_{S,0})
$.  
\end{theorem}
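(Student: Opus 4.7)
The plan is to check that the forward map $(\LL,\zeta_0)\mapsto\zeta_+$ produces a formal primitive form and then construct its inverse by restriction at $\underline{t}=0$. The key observation is that $e^{(f-F)/z}\zeta_0$ is flat along $\check{\TT}_S$ (Proposition 4.1(3)), so applying $\nabla_v$ to the decomposition \eqref{eq:Decomposition} gives $\nabla_v\zeta_+=-\nabla_v\zeta_-$ for every $v\in\check{\TT}_S$. By Proposition 4.3 the RHS lies in $e^{(f-F)/z}(\LL\hat{\otimes}\check{\OO}_{S,0})$, while transversality (Proposition 2.2.ii) puts $\nabla_v\zeta_+$ in $\check{\HH}_F^{(1)}$, so $z\nabla_v\zeta_+$ lies in the formal section $\BB_F:=\check{\HH}_F^{(0)}\cap ze^{(f-F)/z}(\LL\hat{\otimes}\check{\OO}_{S,0})$ produced by Proposition 4.2. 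The initial-term diagram \eqref{eq:initial} then gives $r_F^{(0)}(z\nabla_v\zeta_+)=(\tilde{v} F|_{C_F})\cdot r_F^{(0)}(\zeta_+)$; since $\zeta_-|_{\underline{t}=0}=0$, we have $r_F^{(0)}(\zeta_+)|_{\underline{t}=0}=r_f^{(0)}(\zeta_0)$, and Nakayama's lemma together with the good-pair primitivity of $\zeta_0$ yields (P0)$^\vee$. The same computation applied to $v=\partial_1$, using $\tilde{\partial}_1 F=1$, shows that $z\nabla_{\partial_1}\zeta_+$ and $\zeta_+$ both lie in $\BB_F$ with equal image under $r_F^{(0)}$, hence coincide, yielding (P1).

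For (P3)$^{*\vee}$ I iterate the argument: $\nabla_u\nabla_v\zeta_+=-\nabla_u\nabla_v\zeta_-$ lies in $\check{\HH}_F^{(2)}\cap e^{(f-F)/z}(\LL\hat{\otimes}\check{\OO}_{S,0})$, and the formal analogue of Proposition 3.2 identifies this intersection with $z^{-2}\BB_F+z^{-1}\BB_F=z^{-2}\check{\BB}_{\zeta_+}+z^{-1}\check{\BB}_{\zeta_+}$; the mixed $\nabla_{z\partial_z}\nabla_v$-case additionally invokes the $z\partial_z$-stability of $e^{(f-F)/z}(\LL\hat{\otimes}\check{\OO}_{S,0})$ from Proposition 4.4.1. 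Property (P4) is obtained by applying $\nabla_{z\partial_z+E}$ to both sides of \eqref{eq:Decomposition}: flatness in the $E$-direction together with the equivariance in Proposition 4.1(5)(i) and the good-pair homogeneity $\nabla_{z\partial_z}\zeta_0-r\zeta_0\in\LL$ force $\nabla_{z\partial_z+E}\zeta_+-r\zeta_+$ to lie simultaneously in $\check{\HH}_F^{(0)}$ (Proposition 2.2.iv) and in $e^{(f-F)/z}(\LL\hat{\otimes}\check{\OO}_{S,0})$, so it vanishes by the splitting \eqref{eq:Splitting}. The inverse assignment is $\zeta_+\mapsto(\LL,\zeta_0)$ with $\zeta_0:=\zeta_+|_{\underline{t}=0}$ and $\LL:=z^{-1}(\check{\BB}_{\zeta_+}|_{\underline{t}=0})[z^{-1}]$; its good-pair axioms are recovered by restricting (P0)$^\vee$, (P1) and (P4) at $\underline{t}=0$, and bijectivity follows from the uniqueness of the splitting \eqref{eq:Splitting} applied to $e^{(f-F)/z}\zeta_0$. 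The metric refinement \eqref{eq:PFwithmetric} is then immediate from Proposition 4.4.2: the metric condition on $\LL$ gives the required bound on $K_F(\BB_F,\BB_F)$, which via $K_F(z^{-1}a,z^{-1}b)=-z^{-2}K_F(a,b)$ rewrites as (P2)$^\vee$ on $\zeta_+$, with the converse obtained by restriction at $\underline{t}=0$.

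The main obstacle I anticipate is the formal identification $\check{\HH}_F^{(k)}\cap e^{(f-F)/z}(\LL\hat{\otimes}\check{\OO}_{S,0})=\sum_{i=1}^k z^{-i}\BB_F$ required repeatedly in (P3)$^{*\vee}$ and in the uniqueness argument, because Remark 4.4 warns that in the formal setting $z^{-1}\BB_F[z^{-1}]$ is strictly contained in the full formal opposite filtration: the pole orders in $z$ of the coefficients of $\underline{t}^k$ can grow unboundedly with $k$. The careful workaround mirrors the proof of Proposition 4.2: verify every identification in each Artinian quotient $\OO_{S,0}/m^{k+1}$, where the finite-pole control makes the decomposition honest and Nakayama-type arguments apply, and then pass to the projective limit as $k\to\infty$.
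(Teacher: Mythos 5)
Your overall strategy coincides with the paper's: flatness of $e^{(f-F)/z}\zeta_0$ plus the splitting \eqref{eq:Splitting} yields (P0)$^\vee$--(P4), and the inverse is restriction at $\underline{t}=0$. Two steps, however, contain genuine gaps. First, for (P1) you need $\zeta_+$ \emph{itself} to lie in the formal section $\check{\HH}_F^{(0)}\cap z\,e^{(f-F)/z}(\LL\hat{\otimes}_{\CC}\check{\OO}_{S,0})$, and you assert this without proof; your ``same computation'' only places $z\nabla_v\zeta_+$ there, not $\zeta_+$. This is precisely where the hypothesis $\zeta_0\in\BB=\HH_f^{(0)}\cap z\LL$ (strengthened in this paper from the $\zeta_0\in\HH_f^{(0)}$ of \cite{LLS}, cf.\ Remark 6.1) must enter: it gives $e^{(f-F)/z}\zeta_0\in z\,e^{(f-F)/z}(\LL\hat{\otimes}_{\CC}\check{\OO}_{S,0})$, and $\zeta_-$ lies there too by the opposite condition \eqref{eq:FormalOpposit}, whence $\zeta_+=e^{(f-F)/z}\zeta_0-\zeta_-$ is in the section. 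Without invoking $\zeta_0\in z\LL$ the argument for (P1) does not close.

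Second, the direction $\Phi\circ\Psi=\mathrm{id}$ does not follow from ``uniqueness of the splitting applied to $e^{(f-F)/z}\zeta_0$.'' Starting from a formal primitive form $\zeta_+$, the splitting defining $\Phi(\LL,\zeta_+|_{\underline t=0})$ has second factor $e^{(f-F)/z}(\LL\hat{\otimes}_{\CC}\check{\OO}_{S,0})$ with $\LL=z^{-1}(\check{\BB}_{\zeta_+}|_{\underline t=0})[z^{-1}]$, whereas the opposite filtration intrinsically attached to $\zeta_+$ is the completion $\check{\LL}_{\zeta_+}^\infty$ of $z^{-1}\check{\BB}_{\zeta_+}[z^{-1}]$. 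One must prove these two submodules of $\check{\HH}_F$ coincide (the paper's identity \eqref{eq:FlatExtension3}, Step 12), and this is established by existence and uniqueness of solutions of the regular linear system $\nabla_{\TT_S}\alpha=0$ with prescribed initial value at $\underline t=0$ --- not by an algebraic uniqueness of splittings; even granting it, one still needs $e^{(f-F)/z}(\zeta_+|_{\underline t=0})-\zeta_+\in\check{\LL}_{\zeta_+}^\infty$, which requires a second such flatness/uniqueness argument (Step 13). Your closing paragraph rightly flags that pole orders in $z$ grow with the $\underline t$-degree and that one should work in the Artinian quotients $\OO_{S,0}/m^{k+1}$, but the missing ingredient is this differential-equation argument rather than the Artinian reduction itself. (A minor point: your claim $\zeta_-|_{\underline t=0}=0$ is correct and can be seen directly from $\HH_f^{(0)}\cap\LL=0$ at the $k=0$ level, though the paper derives it from the sharper Proposition 6.1.)
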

\begin{proof} In the following, the proof is devided in Steps 1-13.  Compared with the proof for Theorem 4.16 in~\cite{LLS}, since we added a new
axiom, say (P1) ($\partial_1$-triviality), on the primitive form side, and an
additional condition $\zeta_0\in z\LL$ on the primitive element side, we need some additional arguments, in particular, the Steps 2, and 13. We
added also Proposition \ref{zetaminus} to complete the Step 13 of the proof.  In
order to clarify the role of metric conditions, we take caution on the treatment
of the conditions (P3) and (P3)$^*$ and distinguish the places where and when
the metric condition is necessary or not.

Before coming to the proof of Theorem \ref{Theorem}, we prepare below a
proposition on $\zeta_-\in e^{(f-F)/z}({\LL}\hat{\otimes}_{\CC}
\check{\OO}_{S,0})$ for a use in the proof. Actually, beyond this use, $\zeta_-$ itself is of interest and we shall come back to $\zeta_-$ in \S10.

\begin{Proposition}
\label{zetaminus} 
Let $(\LL,\zeta_0)$ be a good pair, and consider the decomposition \eqref{eq:Decomposition}.  Set 
\be
\label{eq:zetaminus} 
\zeta_-\quad = \quad e^{(f-F)/z}\big( \sum_{j=1}^\infty
z^{-j} u_j \big) 
\ee 
for $\sum_{j=1}^\infty z^{-j} u_j\in \LL\hat{\otimes}_\CC
\check{\OO}_{S,0}$ with $u_j\in \BB\otimes_\CC \check{\OO}_{S,0}$ $(j\in
\ZZ_{>0})$. 
\footnote{
For the same reason as in Remark 4.1, the elements in
$\LL\hat{\otimes}_\CC \check{\OO}_{S,0}$ may not always be expanded in  Laurent
series in $z$ of bounded order of poles.  Actually, we shall see in \S10 that the coefficients in $u_j$ are non-trivial polynomials of degree $j$ with respect to the flat coordinates introduced in \S8.
 } 
Then, 
\be
 \label{eq:uj} 
 u_j \ \in \ \BB\otimes_\CC m^j\check{\OO}_{S,0}.  
 \ee
 \end{Proposition}
\begin{proof} 
Applying the inversion $e^{(F-f)/z}$ of the evolution map to the
formula \eqref{eq:Decomposition}, we obtain 
\be
\label{eq:Jfunction} 
\quad \zeta_0 \quad = \quad e^{(F-f)/z}\zeta_+ +
\sum_{j=1}^\infty z^{-j} u_j 
\ee 
In $\HH_f\hat{\otimes}_{\CC} \check{\OO}_{S,0}$, we rewrite: $ \sum_{j=1}^\infty z^{-j} u_j = \zeta_0 \ - \ e^{(F-f)/z}\zeta_+$.  
We want to represent the projection image of the RHS of this element in  $\HH_f\hat{\otimes}_{\CC} \check{\OO}_{S,0}/m^{k+1}$ for $k\ge0$ by an element of 
 $\oplus_{l=1}^{k} z^{-l} \hat{\BB}\otimes_\CC m^l \check{\OO}_{S,0}$ 
modulo the image of
$D_f \big(\Gamma(X,\Omega_X^{n})((z))\hat \otimes_{\CC} \check{\OO}_{S,0}\big)$, where we
recall that $\hat\BB\subset \Gamma(X,\Omega_X^{n+1})[[z]]$ is a $\mu$-dimensional
$\CC$-subspace representing $\BB$ 
 (see paragraphs following \S3 Definition 3.1) 
and note that the action of
$D_f$ commutes with the coefficients in $\check{\OO}_{S,0}$.  

Let $\hat
\zeta_0\in \Gamma(X,\Omega_X^n)[[z]]$ and $\hat{\zeta}_+\in
\Gamma(Z,\Omega_{Z/S}^n)[[z]]$ be representatives of $\zeta_0$ and $\zeta_+$,
respectively, where we may assume that $\hat \zeta_0=\hat \zeta_+\mid_X$ (see
Proposition \ref{FlatExtension1} and its Corollary). Therefore, we have an
expression $\hat{\zeta}_+=\pi_X^{-1}(\hat{\zeta}_0)(1+a(\underline x,\underline
t,z))$ where $a=\sum_{|I|>0} a_I(\underline{x},z)\underline{t}^I$ is a 
power series in $\underline{t}$ with respect to the multi-index
$I=(i_1,\cdots,i_\mu)\in (\ZZ_{\ge0})^\mu$ ($|I|=\sum_{j=1}^\mu i_j$) 
and $a_I(\underline{x},z)\in \Gamma(X,\OO_X)[[z]]$ (by shrinking $X$ and $Z$ if necessary). 
Owing to the expression
\eqref{eq:Infinitesimal}, we lift $ \zeta_0 \ - \ e^{(F-f)/z}\zeta_+ \ \bmod  m^{k+1}$ to an
element in $\Gamma(X,\Omega_X^n)[[z]] \hat\otimes_\CC \check{\OO}_{S,0} \bmod m^{k+1}$:
 $$
\hat{\zeta}_0 \ - \ 
\sum_{l=0}^\infty \frac{(F(\underline x,\underline t)-F_0(\underline x))^l}{l!z^l}
\hat{\zeta}_0 (1+a(\underline x,\underline t,z))\qquad \bmod m^{k+1}\\
 $$
 Since $F-F_0$
vanishes along $X$, i.e.\ it is contained in $m\Gamma(Z,\OO_Z)$, the terms of summation index $l$ larger than $k$ vanish $\bmod\ m^{k+1}$. Hence, the expression is a finite sum till $l=k$.  
Let us develop this expression in the form $\sum_{K\in
(\ZZ_{\ge0})^\mu,0<|K|\le k} \alpha_{K}(\underline{x},z) \underline{t}^K$.  
The coefficient of this lifting is given by 
$$
 \alpha_{K}(\underline{x},z) = -\sum_{\substack{I,J\in (\ZZ_{\ge0})^\mu, \\ I+J=K}} \big( \sum_{l=0}^{k-|I|} (\text{the coefficient
of $\underline{t}^J$ in $\frac{(F(\underline x,\underline t)-F_0(\underline x))^l}{l!z^l}\hat{\zeta}_0 $}) \big)  \times a_I(\underline{x},z) 
$$
where we set $a_0:=0$. 
In view that  $a_I$ and $\hat{\zeta}_0$ have no-poles in $z$, we see that the order of pole in $z$ in the term for a fixed $J$ and $I=K-J$ is bounded by $l\le |J|\le |K|$. That is,
$\alpha_{K}(\underline{x},z) \in z^{-|K|}\Gamma(X,\Omega_X^{n+1})[[z]]$.
 We apply the decomposition \eqref{eq:V((z))} to $\alpha_{K}$
 so that it belongs to the direct sum
 $ z^{-|K|}\hat{\BB}[[z]] \
\oplus \ D_f\big(z^{-|K|}\Gamma(X,\Omega^{n}_X)[[z]]\big)$. Separating the principal part of the  Laurent series part in $z^{-|K|}\hat{\BB}[[z]]=\oplus_{m=1}^{|K|}z^{-m}\BB\oplus \BB[[z]]$ of $\al_K$, we decompose 
$ \alpha_{K} \equiv \sum_{m=1}^{|K|}  z^{-m}\beta_{K,m} +\beta_{K,0}  \bmod D_f\big(z^{-|K|}\Gamma(X,\Omega^{n}_X)[[z]]\big)$ for $\beta_{K,m}\in \hat \BB$ ($m=1,\cdots,|K|$) and $\beta_{K,0} \in \hat\BB[[z]]$. 
Multiply the monomial $\underline{t}^K$ to $\alpha_{K}(\underline{x},z)$ and sum up by the index $K$. Since it is a finite sum, we resum it according to the order $l$ of the pole in $z$,  and we obtain the expression 
$$
\sum_{0<|K|\le k} \!\!\! \al_K(\underline{x},z) \underline{t}^K= \sum_{l=1}^k z^{-l} \big( \!\! \sum_{l \le |K|\le k}  \!\! \beta_{K,l} \underline{t}^K \big) \ + \!\! \sum_{0 < |K|\le k} \!\! \beta_{K,0}  \underline{t}^K \ \   \bmod D_f\big(z^{-k}\Gamma(X,\Omega^{n}_X)[[z]]  \hat\otimes_{\CC} \check{\OO}_S/ m^{k+1}\big) 
$$
in the module  $ z^{-k}\Gamma(X,\Omega_X^{n+1})[[z]] \ \hat \otimes_{\CC} \ \check{\OO}_S/m^{k+1}$.  
Here, the $l$-th ($l=1,\cdots,k$)  component of the first term $\sum_{l \le |K|\le k} \beta_{K,l} \underline{t}^K$ belongs to $\sum_{l\le |K|\le k} \BB\otimes_\CC m^{|K|} \check{\OO}_S/ m^{k+1} =
\BB\otimes_\CC m^l \check{\OO}_S/ m^{k+1}$ and the second term 
$\sum_{0 < |K|\le k} \beta_{K,0} \underline{t}^K$ belongs to $\BB[[z]] \hat\otimes_{\CC} \check{\OO}_S/ m^{k+1}$.
On the other hand, we recall that the sum $\sum_{0<|K|\le k} \al_K(\underline{x},z) \underline{t}^K$ is a representative of the class $ \sum_{j=1}^\infty z^{-j} u_j  \bmod m ^{k+1}$. This implies
$$
u_j  \equiv \sum_{j \le |K|\le k} \beta_{K,j} \underline{t}^K 
\quad \text{and} \quad  \sum_{0 < |K|\le k} \beta_{K,0} \underline{t}^K\equiv 0 
\qquad \bmod \ \ m^{k+1}
$$
(actually, this is the constraint on $\hat{\zeta}_+$). 
In particular, we see that, for any infinitesimal neighborhood $\OO_{S,0}/m^{k+1}$
($k\in\ZZ_{\ge0}$), $u_j$ belongs to $\BB\otimes_\CC m^j\check{\OO}_{S,0}$.
\end{proof}

Let us return to the proof of Theorem \ref{Theorem}. The proof consists of the
following steps 0-13.
In the first steps 0 - 6., we show that the correspondence 
\be
\label{eq:Phi} \Phi \ :\ (\LL,\zeta_0)\quad \mapsto \quad \zeta_+ 
\ee 
defines
a map from the LHS to the RHS of \eqref{eq:PFwithoutmetric} and \eqref{eq:PFwithmetric}. In the steps 7 - 13., we  construct the inverse map $\Psi$ of $\Phi$.

Step 0.  {\it For any good pair $(\LL,\zeta_0)$, the associated $\zeta_+$ satisfies
the primitivity (P0)$^\vee$.}
\begin{proof} We need to show that the $\OO_{S,0}/m^{k+1}$-morphism
$z\nabla\zeta_+: \TT_S/ m^{k+1}\TT_S \to \Omega_F/ m^{k+1}\Omega_F$ between the
$\OO_{S,0}/m^{k+1}$-free modules is an isomorphism for any $k\in \ZZ_{\ge
0}$. Recall the description of the Gauss-Manin connection \eqref{eq:Gauss-Manin} so that
we have $z\nabla_v\zeta_+ \bmod \check{\HH}_F^{(-1)} \equiv [\partial_{\tilde
v}F \hat \zeta_+]$ where $\hat \zeta_+ \in
\big(p_*(\Omega^{n+1}_{Z/S}[[z]]\big)^{\vee}$ is a lifting of $\zeta_+$ to a relative
differential form. However, this morphism for $k=0$ is an isomorphism since 1)
$e^{(f-F)/z}\zeta_0 \bmod m \equiv \zeta_0$ implying that $\zeta_+ \bmod m
\equiv \zeta_0$, and 2) $\zeta_0$ satisfies the primitivity (1) in Definition
\ref{GoodPair}. Then, Nakayama Lemma implies the isomorphism over
$\OO_S/m^{k+1}$ for any $k\in \ZZ_{>0}$ and hence over $\check{\OO}_{S,0}$.
\end{proof}

Step 1. {\it For any good pair $(\LL,\zeta_0)$, the associated $\zeta_+$ gives the
following $\check{\OO}_{S,0}$-isomorphism \be
\label{eq:ZetaPlus} z\nabla\zeta_+\ : \quad \check{\TT}_S \quad \simeq \quad
\check{\HH}_F^{(0)} \cap z e^{(f-F)/z}({\LL}\hat{\otimes}_{\CC}
\check{\OO}_{S,0})
 \ee
 where we recall that the RHS is the formal section of $\check{\HH}_F^{(0)}$
given in Proposition \ref{FormalSection}, 2.}
\begin{proof} Let $v \in \check{\TT}_S$ ($i=1,2$) and let
$e^{(f-F)/z}\zeta_0=\zeta_{+}+\zeta_{-}$ be the splitting 
\eqref{eq:Splitting}.  Since $e^{(f-F)/z}\zeta_0$ is flat (Proposition
\ref{FlatExtension1}, 3.), we have $z\nabla_{v}\zeta_+=-z\nabla_{v}\zeta_-$,
where the LHS belongs to $\check{\HH}_F^{(0)}$ (transversality: 2. ii) of
Proposition \ref{GMHR}) and the RHS belongs to $z
e^{(f-F)/z}({\LL}\hat{\otimes}_{\CC} \check{\OO}_{S,0})$ (see Proposition
\ref{t-stability}). That is, \be
 \label{eq:ss} z\nabla_{v}\zeta_+\quad \in\quad \check{\HH}_F^{(0)} \cap z
e^{(f-F)/z}({\LL}\hat{\otimes}_{\CC} \check{\OO}_{S,0})
 \ee
 Combining isomorphisms (P0)$^\vee$ (above Step 0.) and
\eqref{eq:FormalSection}, we obtain the isomorphism \eqref{eq:ZetaPlus}.
 \end{proof}

Step 2.  {\it For any good pair $(\LL,\zeta_0)$, the associated $\zeta_+$ satisfies
the $\partial_1$-triviality} (P1).
\begin{proof} Since $\zeta_0\in \BB\subset z\LL$, one has $\zeta_++\zeta_-
=e^{(f-F)/z}\zeta_0 \in z e^{(f-F)/z}({\LL}\hat{\otimes}_{\CC}
\check{\OO}_{S,0})$. On the other hand, by definition, $\zeta_-\in
e^{(f-F)/z}({\LL}\hat{\otimes}_{\CC} \check{\OO}_{S,0})$. Then, in view of the
opposite condition \eqref{eq:FormalOpposit} of the formal opposite filtration,
we see that $\zeta_-$ and hence $\zeta_+=e^{(f-F)/z}\zeta_0-\zeta_-$ belongs to
$z e^{(f-F)/z}({\LL}\hat{\otimes}_{\CC} \check{\OO}_{S,0})$. Thus, $\zeta_+$
belongs to the RHS of \eqref{eq:ZetaPlus}, implying that there exists an unique
element $v\in \check{\TT}_S$ such that $z\nabla_v\zeta_+=\zeta_+$. This means,
by the morphism $r_F^{(0)}$ \eqref{eq:Filter2}, we obtain the relation
$(\tilde{v}F\mid_{C_F})\cdot \zeta_+= \zeta_+$ in $\check{\Omega}_F$. On the other hand,
since $\zeta_+\bmod m =\zeta_0$ is a generating element in $\Omega_f$
(Definition \ref{GoodPair} (1)), $\zeta_+$ is also a generating element in
$\check{\Omega}_F$. Then the relation $(\tilde{v}F\mid_{C_F}\! -1) \zeta_+=0$ implies
$\tilde{v}F\mid_{C_F}\! -1=0$. That is, by Definition \ref{UniversalUnfolding}, $v$ is
the primitive vector field $\partial_1$.
 \end{proof}
 
 Step 3. {\it Suppose that a good pair $(\LL,\zeta_0)$ satisfies the metric condition
(see Definition \ref{GoodPair} and Proposition \ref{Metric}), then the associated $\zeta_+$ satisfies the
orthogonality (P2)$^\vee$.}
 \begin{proof} This is an immediate consequence of above Step 1. combined with
the purity formula \eqref{eq:Metric3}.
 \end{proof}

Step 4.  {\it For any good pair $(\LL,\zeta_0)$, the associated $\zeta_+$ satisfies
the holonomicity (P3)$^{*\vee}$.}
\begin{proof} For $v_1,v_2\in \check{\TT}_S$, we have
$\nabla_{v_1}\nabla_{v_2}\zeta_+=-\nabla_{v_1}\nabla_{v_2}\zeta_-$. Applying the
transversality of $\nabla$ (2. i) of Proposition \ref{GMHR}) on the LHS and the
$\underline{t}$-stability of $e^{(f-F)/z}({\LL}\hat{\otimes}_{\CC}
\check{\OO}_{S,0})$ (Proposition \ref{t-stability}) on the RHS, we see that it is contained
in $z^{-2}\check{\HH}_F^{(0)}$ and $ e^{(f-F)/z}({\LL}\hat{\otimes}_{\CC}
\check{\OO}_{S,0})$, whose intersection decomposes
$$
 z^{-2}\check{\HH}_F^{(0)}\cap
e^{(f-F)/z}({\LL}\hat{\otimes}_{\CC} \check{\OO}_{S,0}) =
z^{-2}\big(\check{\HH}_F^{(0)}\cap ze^{(f-F)/z}({\LL}\hat{\otimes}_{\CC}
\check{\OO}_{S,0})\big) \oplus z^{-1}\big(\check{\HH}_F^{(0)}\cap
ze^{(f-F)/z}({\LL}\hat{\otimes}_{\CC} \check{\OO}_{S,0})\big)
$$
into two graded pieces. Then, applying Step 1.\ to each piece, this module is expressed as 
$
z^{-1}\nabla_{\check{\TT}_S}\zeta_+\oplus \nabla_{\check{\TT}_S}\zeta_+
$.
Then the first formula of (P3)$^{*\vee}$ follows.

Next, take $v\in \TT_S$. Then,
$\nabla_{z\partial_z}\nabla_{v}\zeta_+=-\nabla_{z\partial_z}\nabla_{v}\zeta_-$
belongs to $z^{-2}\check{\HH}_F^{(0)}\cap e^{(f-F)/z}({\LL}\hat{\otimes}_{\CC}
\check{\OO}_{S,0})$ (recall irregularity of $\nabla_{\partial_z}$: 2.\ iii) of
Proposition \ref{GMHR} and the $z\partial_z$-stability of
$e^{(f-F)/z}({\LL}\hat{\otimes}_{\CC} \check{\OO}_{S,0})$ Proposition
\ref{FlatExtension}). As in the first half, the intersection decomposes, and we obtain the second formula of (P3)$^{*\vee}$.
\end{proof}

Step 5. {\it Suppose that a good pair $(\LL,\zeta_0)$ satisfies the metric condition
(see Definition \ref{GoodPair}), then the associated $\zeta_+$ satisfies the
holonomicity (P3)$^\vee$.}
\begin{proof} The metric condition implies the purity (orthogonality) of higher
residue pairings \eqref{eq:Metric3}. Then, the conditions in (P3)$^{*\vee}$ are
reformulated into those in (P3)$^{\vee}$.
\end{proof}

Step 6. {\it For any good pair $(\LL,\zeta_0)$, the associated $\zeta_+$ satisfies
the homogeneity (P4).}
\begin{proof}
We calculate $\big(\nabla_{z\partial_z+E} -r\big)e^{(f-F)/z}\zeta_0$ by two different ways. 

1.  The flatness of the evolution map and the commutativity of
$\nabla_{z\partial_z}$ with the evolution map (3. and 5. of Proposition
\ref{FlatExtension1}) implies $\big(\nabla_{z\partial_z+E}-r\big)
e^{(f-F)/z}\zeta_0=e^{(f-F)/z}\big( (\nabla_{z\partial_z}-r)\zeta_0\big)$. Then, the homogeneity of $\zeta_0$ ((2) of Definition \ref{GoodPair}) implies that the last term is contained in 
$e^{(f-F)/z} \LL\subset e^{(f-F)/z}({\LL}\hat{\otimes}_{\CC} \check{\OO}_{S,0})$.
That is, the image is contained in the second splitting factor of \eqref{eq:Splitting}.

2. We remark that the splitting \eqref{eq:Splitting} is preserved by the action of $\nabla_{z\partial_z+E} -r$, since (1) the filter
$\check{\HH}_F^{(0)}$ is preserved due to 2. iv) of Proposition \ref{GMHR}, and
(2) $e^{(f-F)/z}({\LL}\hat{\otimes}_{\CC} \check{\OO}_{S,0})$ is preserved due
to the $z\partial_z$-stability (1. of Proposition \ref{FlatExtension}) and
$\underline{t}$-stability (Proposition \ref{t-stability}).  In particular, the action of $\nabla_{z\partial_z+E} -r$.  on the decomposition \eqref{eq:Decomposition} preserves each factor to each factor.

Comparing 1. and 2., we see that the action of $\nabla_{z\partial_z+E} -r$ on the first factor $\zeta_+$ should be zero. 
\end{proof}

\noindent {\it Note.} Acooding to the homogeneity of $\zeta_0$ (Definition
\ref{GoodPair}), set $G:=\nabla_{z\partial_z}\zeta_0-r\zeta_0 \in \LL$. Then,
the above argument in 6. shows that $\nabla_{E+z\partial_z}\zeta_-=r\zeta_- +
e^{(f-F)/z} G$.

\medskip Above 0.-6. altogether show that one direction $\Phi$ \eqref{eq:Phi} of
the correspondences \eqref{eq:PFwithoutmetric} and \eqref{eq:PFwithmetric} are
well-defined.  In the following 7-13., we construct the inverse correspondence
$\Psi: \zeta_+ \mapsto (\LL,\zeta_0)$ for a given formal primitive form $\zeta_+
\in \check{\HH}^{(0)}_F $ which may or may not be equipped with metric
structure (recall Definitions \ref{FPF} and \ref{FPFwithoutmetric}).

\medskip Step 7.  We first consider a set $\check{\BB}_{\zeta_+}$ of covariant
differentiations of $\zeta_+$.  
\be
\label{eq:Bzeta} \check{\BB}_{\zeta_+}\ := \ Im( z\nabla \zeta_+:
\check{\TT}_S\to \check{\HH}^{(0)}_F) \ = \ z\nabla_{ \check{\TT}_S}\zeta_+.
\ee 
By the primitivity (P0)$^\vee$, $\check{\BB}_{\zeta_+}$ is a section to the
projection $\check{\HH}_F^{(0)}\to \check{\Omega}_F$ \eqref{eq:Filter2} in the
sense that $\check{\BB}_{\zeta_+}$ is a $\check{\OO}_{S,0}$-submodule of
$\check{\HH}_F^{(0)}$ which is isomorphic to $\check{\Omega}_F$ by the
projection $\check{r}_F^{(0)}$.

 Therefore, for each $k\in \ZZ_{\ge0}$, we have identifications (c.f.\
Definition 3.1,1) \be
\label{eq:k-th} \HH_F /m^{k+1}\HH_F
=(\check{\BB}_{\zeta_+}/m^{k+1}\check{\BB}_{\zeta_+})((z)) \quad \text{and}
\quad \HH_F^{(0)} /m^{k+1}\HH_F^{(0)}
=(\check{\BB}_{\zeta_+}/m^{k+1}\check{\BB}_{\zeta_+})[[z]] .  \ee

Then, we consider the formal opposite filtration of $\check{\HH}_F$ (this
terminology shall be justified in the following Step 9.) by considering the projective
system of $\OO_S/m^{k+1}\OO_S$-submodule of $\HH_F /m^{k+1}\HH_F$ for
$k\in\ZZ_{\ge0}$ \be
\label{eq:Oppk} \LL_{\zeta_+}^k \ : = \
z^{-1}(\check{\BB}_{\zeta_+}/m^{k+1}\check{\BB}_{\zeta_+})[z^{-1}] \ee which is
a splitting factor in $\HH_F /m^{k+1}\HH_F =\HH_F^{(0)}
/m^{k+1}\HH_F^{(0)}\oplus \LL_{\zeta_+}^k $ for $k\in \ZZ_{\ge0}$.

Before going further to study the limit of this projective system, we formulate,
using the particular case $k=0$ of \eqref{eq:Oppk}, the statement of the proof
of Theorem \ref{Theorem}.

\medskip 
Step 8. {\it Let $\zeta_+$ be a formal primitive form without metric
structure. Consider the pair 
$$
\LL^0_{\zeta_+}:= \eqref{eq:Oppk}
\text{ for  $k=0$ \quad and \quad } 
\zeta_+^0:=\zeta_+\! \bmod\! m \ \in \HH_F^{(0)}/m\HH_F^{(0)}
\simeq \HH_f^{(0)}
$$
(recall {\it Note} after the proof of Theorem
\ref{FlatExtension1}). Then, the correspondence
$$
\Psi\ : \ \zeta_+ \quad \mapsto \quad (\LL^0_{\zeta_+},\zeta_+^0)
$$ 
gives a map from the RHS to the LHS of \eqref{eq:PFwithoutmetric} and \eqref{eq:PFwithmetric}, which is the inverse of $\Phi$.}
\begin{proof} The proof is divided into Steps 9-13.
Precisely, the fact that $\LL^0_{\zeta_+}$ is a $z\partial_z$-stable opposite
filtration of $\HH_f$ without metric condition is a particular case of
Step 9.  The facts that $\zeta_+^0$ belongs to $z\LL_{\zeta_+}^0$ and is primitive and
homogeneous are shown in Steps 10 and 11, respectively. The fact that $\Psi$ is
the inverse to $\Phi$ is shown in Step 13.
\end{proof}

\medskip Step 9. {\it The $\OO_S/m^{k+1}\OO_S$-modules \eqref{eq:Oppk} for
$k\in\ZZ_{\ge0}$ form a projective system so that the limit
$$
\check{\LL}_{\zeta_+}^\infty\quad := \quad \underset{\leftarrow}{\lim}\
\LL_{\zeta_+}^k
$$
is a formal opposite submodule in $\check{\HH}_F$ in the sense of
\eqref{eq:Splitting2} below such that $\check{\BB}_{\zeta_+}=
\check{\HH}_F^{(0)}\cap z\check{\LL}_{\zeta_+}^\infty$. The
$\check{\LL}_{\zeta_+}^\infty$ is $z\partial_z$-stable and
$\underline{t}$-stable.  If, further, $\zeta_+$ is a formal primitive form (that
is, it further satisfies the metric condition (P2)$^\vee$), then the
$\check{\LL}_{\zeta_+}^\infty$ satisfies the formal metric condition (see
Proposition \ref{Metric}).}

\begin{proof} The fact that $\LL_{\zeta_+}^k$ is a splitting factor
$\OO_{S,0}/m^{k+1}[z^{-1}]$-submodule of $\HH_F /m^{k+1}\HH_F$ forming a
projective system is obvious from the descriptions \eqref{eq:k-th}. These imply
that $\LL_{\zeta_+}^\infty$ is a opposite submodule in $\check{\HH}_F$ in the
sense that \be
\label{eq:Splitting2} 
\check{\HH}_F=\check{\HH}^{(0)}_F\oplus
\check{\LL}_{\zeta_+}^\infty \quad \text{ and} \quad
z^{-1}\check{\LL}_{\zeta_+}^\infty \subset \check{\LL}_{\zeta_+}^\infty.  \ee
The $z\partial_z$-stabilty (see Proposition \ref{z-stable}) and the
$\underline{t}$-stability (see Proposition \ref{t-stability}) of $
\check{\LL}_{\zeta_+}^\infty$ follow from those of $ \LL_{\zeta_+}^k$, which
follow from the condition (P3)$^*$
 on the primitive form $\zeta_+$.

Suppose further that $\zeta_+$ stisfies (P2)$^\vee$. This means, by definition
of $\check{\BB}_{\zeta_+}$, immediately \be
\label{eq:Metric4} \check{K}_F(\check{\BB}_{\zeta_+},\check{\BB}_{\zeta_+})\quad
\subset \quad z^{n+1} \check{\OO}_{S,0} \vspace{-0.3cm} \ee
\end{proof}

Step 10.  {\it The initial term $\zeta_+^0$ of $\zeta_+$ belong to
$z\LL_{\zeta_+}^0$.}
\begin{proof} The condition (P1) on $\zeta_+$ implies that $\zeta_+\in
\check{\BB}_{\zeta_+}$ \eqref{eq:Bzeta}. In particular, this means $\zeta_+\bmod
m^{k+1} \in \check{\BB}_{\zeta_+} /m^{k+1}\check{\BB}_{\zeta_+} \subset
z\check{\LL}_{\zeta_+}^k$. Then, taking modulo the maximal ideal $m$, we obtain
the result.
\end{proof}

Step 11.  {\it The initial term $\zeta_+^0$ of $\zeta_+$ is primitive and
homogeneous.}
\begin{proof} Primitivity: Recall that the section $\check{\BB}_{\zeta_+}$ is
given by the covariant differentiation of $\zeta_+$ \eqref{eq:Bzeta}. By
definition of the section, its projection to $\check{\Omega}_F$ is surjective,
i.e.\ $\check{\Omega}_F\subset \{\tilde{v}F\zeta_+\mid v\in \check{\TT}_S\}
\subset \check{\OO}_{C_F}\zeta_+\subset \check{\Omega}_F$.  Specializing this to
$t=0$ (i.e.\ $\bmod m$), we obtain the primitivity of $\zeta_0$.

Homogeneity : The condition (P4) on $\zeta_+$ implies
$$
\nabla_{z\partial_0}\zeta^0_+-r\zeta^0_+= \nabla_{z\partial_z}(\zeta_+|X)-r(\zeta_+|_X)=( -\nabla_E\zeta_+)|_X\in z^{-1}(\check{\BB}_{\zeta_+}\bmod m)\subset \LL^0_{\zeta_+}.
$$
\end{proof}

Step 12. {\it The initial data $ \LL_{\zeta_+}^0$ (i.e.\ \eqref{eq:Oppk} for ${k=0}$)
is a $z\partial_z$-stable opposite filtration of $\HH_f$ such that the formal
opposite filtration $\check{\LL}_{\zeta_+}^\infty$ is recovered from the initial
data $ \LL_{\zeta_+}^0$ by the evolution map 
\eqref{eq:evolve}: }
\be
\label{eq:FlatExtension3} e^{(f-F)/z} ({\LL}_{\zeta_+}^0\hat{\otimes}_{\CC}
\check{\OO}_{S,0}) \quad = \quad \check{\LL}_{\zeta_+}^\infty.  
\ee
\begin{proof} We know already that $ \LL_{\zeta_+}^0$ is a $z\partial_z$-stable
opposite filtration in $\HH_F / m\HH_F = \HH_f$ (see {\it Proof.} of Step
9.). Let us show \eqref{eq:FlatExtension3}. We first note the following.

\medskip a) For any given $a\in \LL_{\zeta_+}^0$, the equation
$\nabla_{\TT_S}\alpha=0$ for $\al\in \check{\HH}_F$ with the initial condition
$\alpha\! \mod\! m=a$ has a unique solution in $\check{\LL}_{\zeta_+}^\infty$.
\begin{proof} This essentially reduces to the existence and the uniqueness of
the (formal) solution to the system of the first order linear partial
differential equation $\nabla_{\TT_S}\alpha=0$ (for the coefficients of $\al$ in
the Laurent expantion in $z$, c.f.\ \eqref{eq:Oppk}) over $S$ for the given
initial value $a$ at $\underline{t}=0$, where one should note that the equation
is non-singular (i.e.\ the coefficient matrices are regular) on $S$. For
details, see \cite{LLS} Claim B in {\it Proof} of {\bf Lemma/Definition 4.18}.
\end{proof}

Let us return to a proof of \eqref{eq:FlatExtension3}. Let $a\in
\LL_{\zeta_+}^0$ as above. The flatness (see \eqref{eq:flatness}) of element
$e^{(f-F)/z}a \in e^{(f-F)/z}\HH_f\subset \check{\HH}_F$ implies that it is a
solution of the equation in a) satisfying the initial condition $e^{(f-F)/z}a
\bmod m=a$. Due to the uniqueness of the solution, we observe that
$e^{(f-F)/z}\LL_{\zeta_+}^0$ is contained in
$\check{\LL}_{\zeta_+}^\infty$. More precisely, for any $k\in\ZZ_{\ge0}$, we
have inclusions $e^{(f-F)/z}\LL_{\zeta_+}^0 / m^{k+1}\subset
(\check{\LL}_{\zeta_+}^\infty/ m^{k+1}\check{\LL}_{\zeta_+}^\infty)$ and hence
$e^{(f-F)/z}(\LL_{\zeta_+}^0\otimes_\CC\OO_{S,0}/m^{k+1}) \subset
(\check{\LL}_{\zeta_+}^\infty/ m^{k+1}\check{\LL}_{\zeta_+}^\infty)$. Then,
taking the limit w.r.t.\ $k\in\ZZ_{>0}$, we obtain an inclusion
$e^{(f-F)/z}({\LL}_{\zeta_+}^0\hat{\otimes}_{\CC} \check{\OO}_{S,0}) \subset
\check{\LL}_{\zeta_+}^\infty$.

On the other hand, we have splittings \eqref{eq:Splitting} and
\eqref{eq:Splitting2} of the same module $\check{\HH}^{(0)}_F$ of the same first
factor $\check{\HH}^{(0)}_F$, so that the inclusion is actually the equality
\eqref{eq:FlatExtension3}.
\end{proof}

\medskip The following Step 13.  is the final step of the proof of Theorem
\ref{Theorem}.

\medskip Step 13. {\it The compositions $\Psi\circ\Phi$ and $\Phi\circ\Psi$ are
identities on each side of \eqref{eq:PFwithoutmetric} and
\eqref{eq:PFwithmetric}, respectively.}
\begin{proof} Let $(\LL,\zeta_0)$ be a good pair. Set
$\zeta_+:=\Phi(\LL,\zeta_0)$ so that $\Psi(\zeta_+)=(\LL_{\zeta_+}^0,\zeta_+
\bmod m)$.  Recall \eqref{eq:Bzeta} and \eqref{eq:ZetaPlus}, and we get
$\check{\BB}_{\zeta_+}=\check{\HH}_F^{(0)} \cap z
e^{(f-F)/z}({\LL}\hat{\otimes}_{\CC} \check{\OO}_{S,0})$.  Restricting this
equality on the subspace $X\subset Z$ (i.e. taking $\bmod\ m$, or setting
$k=0$), we obtain $\BB_{\zeta_+}^0= \HH_f^{(0)}\cap \LL$, where the LHS
$\BB_{\zeta_+}^0:=\check{\BB}_{\zeta_+}/m\check{\BB}_{\zeta_+}$ is the section
of $\HH_f^{(0)}$ corresponding to the opposite filtration $\LL^0_{\zeta_+}$ in
the sense of Proposition \ref{SectOppo} and the RHS is the section corresponding
to $\LL$. Again due to Proposition \ref{SectOppo}, we get $\LL^0_{\zeta_+} =
\LL$.  On the other hand, $\zeta_+ \bmod m= \zeta_+ +\zeta_-\bmod m=
e^{(f-F)/z}\zeta_0 \bmod m =\zeta_0$ (here, we use Proposition \ref{zetaminus}
for the vanishing of $\zeta_- \bmod m$). That is, $\Psi(\zeta_+)=(\LL,\zeta_0)$,
and hence $\Psi\circ\Phi$ is the identity on the set of good pairs.
 
 Conversely, let us start with a formal primitive form $\zeta_+$. Then
$\Phi\circ\Psi(\zeta_+)=\Phi(\LL_{\zeta_+}^0,\zeta_+^0)$ where
$\zeta_+^0:=\zeta_+\mod m$, by definition, is the first splitting factor of
$e^{(f-F)/z}\zeta_+^0$ in $\check{\HH}_F=\check{\HH}_F^{(0)}\oplus e^{(f-F)/z}
({\LL}_{\zeta_+}^0\hat{\otimes}_{\CC} \check{\OO}_{S,0})$, and $e^{(f-F)/z}
({\LL}_{\zeta_+}^0\hat{\otimes}_{\CC} \check{\OO}_{S,0}) =
\check{\LL}_{\zeta_+}^\infty$ \eqref{eq:FlatExtension3}.  Therefore, in order to
show that it is actually equal to $\zeta_+$, i.e.\ $\Phi\circ\Psi$ is an
identity, we have to show that $\xi:=e^{(f-F)/z}\zeta_+^0-\zeta_+ \in
\check{\LL}_{\zeta_+}^\infty$. We know that $\xi$ is an element of $
z\check{\LL}_{\zeta_+}^\infty\subset \check{\HH}_F$ since $\zeta_+^0\in
z\LL_{\zeta_+}^0$ and $\zeta_+\in \check{\BB}_{\zeta_+}$ (see Step 10. and its
proof). On the other hand, we see 1) $\xi$ satisfies differential equation
$\nabla_{\TT_S} \xi \ (= - \nabla_{\TT_S} \zeta_+) \subset
z^{-1}\check{\BB}_{\zeta_+}\subset \check{\LL}_{\zeta_+}^\infty$ (recall
\eqref{eq:flatness} and \eqref{eq:Bzeta}), and 2) the initial value: $\xi \bmod
m$ is equal to zero (by the definition of $\zeta_+^0$). The solution for such
system of equations belongs to $\check{\LL}_{\zeta_+}^\infty$ ({\it Proof.} this
can be shown similar to a) in the proof of Step 12. Namely, let $\xi_0 \in
\check{\BB}_{\zeta_+}$ be the degree zero term of the Laurent expansion of $\xi$
in $z$, where we note that there is positive degree term in $z$. Then it
satisfies 1) a regular linear equation $\nabla_{\TT_S} \xi_0=0 \bmod
\check{\LL}_{\zeta_+}^\infty $, and 2) the initial value $\xi_0 \bmod m$ is
equal to $0$, and then the unique solution is $\xi_0=0$.).
\end{proof} 
\vspace{-0.4cm}
This completes a proof of Theorem (6.1).  
\end{proof}

\section{Flat structure without metric structure}

It is well-known that a primitive form induces a flat structure (Frobenius
manifold structure~\cite{Hertling,Dubrovin, Manin}) on $S$ (\cite{FromPrimitiveToFrobenius}). In the present section we confirm
that a primitive form without a metric structure induces a flat structure
without a metric structure in the sense of Sabbah (cf.~\cite{KMS1, KMS2, KMS3}), in particular, flat coordinate system on $S$ are introduced. This
is achieved by writing down explicitly the holonomicity property (P3)$^{*}$ of a primitive form $\zeta$ as in the following proposition.

\begin{Proposition} 
\label{FlatStructure} {\bf (Flat structure without metric structure.)} 

{\bf 1.}  Let $\zeta$ be a primitive form without higher residue structure (Definition \ref{PFwithoutmetric}). Then it satisfies the following {\bf (P3)}$^{**}$, which is an explicit form of {\rm (P3) and (P3)$^* $(Holonomicity)}.

\smallskip
\noindent {\rm (P3)$^{**}$ (Holonomicity)$^{**}$}: The second covariant
derivatives of $\zeta$ have the followings expressions.

\quad 1.  There exists an affine torsion free connection
$\nabla\!\!/: \TT_S\times \TT_S \to \TT_S$ on $S$ such that 
\be
\label{eq:FlatConnection} \nabla_{v_1}\nabla_{v_2}\zeta =z^{-1}\nabla_{v_1*v_2}
\zeta + \nabla_{\nabla\!/_{v_1}v_2}\zeta \quad \text{for }\ v_1,v_2 \in\TT_S .
\ee 
\qquad \qquad where ``$*$" means the product structure on $\TT_S$ given in
\eqref{eq:*}.

\smallskip \quad 2.  There is an $\OO_S$-endomorphism $N\in
End_{\OO_S}(\Omega_F)$ such that 
\be
\label{eq:Exponents} \nabla_{z\partial_z}\nabla_{v}\zeta= -z^{-1}\nabla_{E*v} \zeta + 
\nabla_{N(v)}\zeta \quad \text{for }\ v \in\TT_S , 
\ee \qquad\qquad 
where $E$ is
the Euler vector field (see Definition 2.2).

\smallskip \quad 3.  The connection $\nabla\!/$ and the endomorphism $N$ satisfy
the following relations.

\qquad i) Flatness and charge of Primitive vector field: \quad  $\nabla\!/ \partial_1 =0$ \quad and \quad $N(\partial_1)= r\partial_1$.

\qquad ii)  Integrability of $\nabla\!/$:  \quad   $[\nabla\!/ ,\nabla\!/] =0$,

\qquad iii) Symmetry of $\nabla\!/ *$:  \ Set $T(u,v,w):=\nabla\!/_u(v*w)-(\nabla\!/_uv)*w-v*(\nabla\!/_u w)$ for $u,v,w\in\TT_S$.

\qquad \quad   Then, $T$ is a symmetric $\OO_S$-tensor of type $(3,1)$, i.e.\ $T\in Hom_{\OO_S}(S^3(\TT_S), \TT_S)$.

\qquad iv) Horizontality of $N$:  \quad  $\nabla\!/N=0$.

\qquad v) Exponents:  \quad   $N(v) =(r+1)\cdot v -\nabla\!/ _v E \quad \text{for} \ v\in \TT_S$,

\qquad vi) Flatness of $E$:  \quad $\nabla\!/_w\nabla\!/_v E \ = \ \nabla\!/_{\nabla\!/_w v} E$ \quad for $v,w\in \TT_S$.

\qquad vii) Homogeneity of $*$-product:  \quad $[E,v*w]\ = \ v*[E,w] \ + \ [E,v]*w+ v*w$ \quad for $v,w\in \TT_S$.

\medskip {\bf 2.}  Let $\zeta$ be a formal primitive form without metric structure (Definition \ref{FPFwithoutmetric}). Then it satisfies the following  {\rm (P3)}$^{**\vee}$ (a explicit form of {\rm (P3)$^\vee$ ) and (P3)$^{* \vee}$ (Holonomicity)}), where

\medskip
\noindent {\rm (P3)$^{**\vee}$ (Holonomicity)$^{**\vee}$}: Replace $\TT_S$ in
{\rm (P3)$^{**}$} by $\check{\TT}_S$.
\end{Proposition}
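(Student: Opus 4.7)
The plan is to extract the product $*$, the affine connection $\nabla\!/$, and the endomorphism $N$ directly from the two inclusions in the holonomicity condition (P3)$^{*}$, and then to derive the seven auxiliary axioms i)--vii) by combining the flatness of the Gauss-Manin connection with differentiations of (P1), (P4), and \eqref{eq:FlatConnection}--\eqref{eq:Exponents} themselves.

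First I would note that primitivity (P0) of $\zeta$ yields an $\OO_S$-linear isomorphism $\TT_S\xrightarrow{\sim}\BB_\zeta=z\nabla_{\TT_S}\zeta\subset\HH_F^{(0)}$ via $v\mapsto z\nabla_v\zeta$. By the transversality of $\nabla_{\TT_S}$ (2.\ ii)\ of Proposition \ref{GMHR}) and (P3)$^{*}$, the second covariant derivative $\nabla_{v_1}\nabla_{v_2}\zeta$ lies in $z^{-2}\BB_\zeta+z^{-1}\BB_\zeta$, so this isomorphism extracts unique vector fields $v_1*v_2$ and $\nabla\!/_{v_1}v_2$ for which \eqref{eq:FlatConnection} holds; the same argument applied to the second line of (P3)$^{*}$ produces the $\OO_S$-linear endomorphism $N$ satisfying \eqref{eq:Exponents}. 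Applying the quotient $r_F^{(0)}$ and iterating the leading-term diagrams \eqref{eq:initial} will identify $\widetilde{v_1*v_2}F|_{C_F}=(\tilde v_1 F)(\tilde v_2 F)|_{C_F}$, so that $*$ agrees with the product of \eqref{eq:*}; the same diagrams, combined with $\tilde E F|_{C_F}=F|_{C_F}$, identify the $z^{-1}$-coefficient of \eqref{eq:Exponents} as $-E*v$.

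Next I would turn to the derived identities i)--vii). Commutativity of $*$ and torsion-freeness of $\nabla\!/$ will drop out of $[\nabla_{v_1},\nabla_{v_2}]\zeta=\nabla_{[v_1,v_2]}\zeta$ upon separating the $z^{-1}$- and $z^{0}$-parts, and applying $\nabla_{v_3}$ to \eqref{eq:FlatConnection} and antisymmetrising will give both the tensor symmetry iii) and the integrability ii). For i), differentiating (P1) by any $v\in\TT_S$ and using $v*\partial_1=v$ in \eqref{eq:FlatConnection} forces $\nabla\!/_v\partial_1=0$; the value $N(\partial_1)\in\CC\partial_1$ then emerges from the flat commutation $[\nabla_{z\partial_z},\nabla_{\partial_1}]=0$ combined with (P1) and (P4). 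Statement v) will be obtained by applying $[\nabla_v,\,\cdot\,]$ to $\nabla_{z\partial_z+E}\zeta=r\zeta$ and invoking torsion-freeness to rewrite $[E,v]=\nabla\!/_E v-\nabla\!/_v E$; properties iv), vi), and vii) then follow by a single further differentiation of \eqref{eq:Exponents}, resp.\ \eqref{eq:FlatConnection}, coupled with v) and the homogeneity (P4).

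The main difficulty is purely bookkeeping: for each axiom one must multiply the relevant covariant derivative by the correct power of $z$ so that both sides land in $\BB_\zeta+z\BB_\zeta$, after which (P0) allows one to strip $\nabla$ and compare vector fields, with care taken about the constant shifts that arise from $\nabla_{z\partial_z}(z^{-1}\zeta)=-z^{-1}\zeta+z^{-1}\nabla_{z\partial_z}\zeta$. Finally, part \textbf{2} will follow verbatim: every step above is $\OO_S$-linear and compatible with $m$-adic completion, so replacing $(\TT_S,\OO_S,\HH_F)$ throughout by $(\check{\TT}_S,\check{\OO}_{S,0},\check{\HH}_F)$ yields (P3)$^{**\vee}$ from (P3)$^{*\vee}$ with no new ingredients.
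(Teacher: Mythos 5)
Your overall strategy is the paper's: both of you extract $v_1*v_2$, $\nabla\!/_{v_1}v_2$ and $N$ from (P3)$^*$ via the isomorphism $v\mapsto z\nabla_v\zeta$ furnished by (P0), identify the $z^{-1}$-coefficients with the Kodaira--Spencer product and with $-E*v$ through the leading-term diagrams \eqref{eq:initial}, and then obtain i)--vii) by differentiating (P1), (P4) and the expansions \eqref{eq:FlatConnection}--\eqref{eq:Exponents} and separating powers of $z$ with the help of the integrability of $\nabla$. Items ii), iii), iv) and the torsion-freeness are handled in the paper exactly as you propose (the commutators $[\nabla_{v_1},\nabla_{v_2}]=\nabla_{[v_1,v_2]}$ and $[\nabla_{z\partial_z},\nabla_v]=0$ applied to $\nabla_w\zeta$), and part {\bf 2} is indeed dispatched there in one line. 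One small caution: vii) is not quite ``a single further differentiation'' of \eqref{eq:FlatConnection}; the paper needs to combine the identity \eqref{eq:*nabla} specialized at $v_1=E$ with the consequence \eqref{eq:deg*1} of the $z\partial_z$-commutator, so both families of commutators enter.

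The one step where you genuinely deviate is v), and there your route produces a constant that disagrees with the printed statement. Applying $\nabla_v$ to (P4) and using $[\nabla_v,\nabla_{z\partial_z}]=0$ gives $-z^{-1}\nabla_{E*v}\zeta+\nabla_{N(v)}\zeta+z^{-1}\nabla_{v*E}\zeta+\nabla_{\nabla\!/_vE}\zeta=r\nabla_v\zeta$; the $z^{-1}$-terms cancel by commutativity of $*$, and (P0) then forces $N(v)=r\,v-\nabla\!/_vE$, not $(r+1)v-\nabla\!/_vE$. The paper instead derives $N(v*w)-v*N(w)=E*\nabla\!/_vw-\nabla\!/_v(E*w)+v*w$ from $[\nabla_{z\partial_z},\nabla_v]\nabla_w\zeta=0$ (the extra $+v*w$ is exactly the term $\nabla_{z\partial_z}(z^{-1}\eta)=-z^{-1}\eta+z^{-1}\nabla_{z\partial_z}\eta$ you warn about), sets $w=\partial_1$, and feeds in $N(\partial_1)=r\partial_1$. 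The two schemes are mutually consistent only if $N(\partial_1)=(r-1)\partial_1$: computing $\nabla_{z\partial_z}(z\nabla_{\partial_1}\zeta)=z\nabla_{\partial_1}\zeta+z\nabla_{z\partial_z}\nabla_{\partial_1}\zeta$ and comparing with (P4) yields $(r-1)\partial_1$, and the $A_1$ example ($r=\tfrac12$, $N(\partial_1)=-\tfrac12\partial_1$) confirms this, whereas the paper's displayed intermediate equation drops the first of these two terms. So your derivation of v) is internally sound but, as written, proves i) and v) with the constant shifted by one relative to the statement; you must either locate a normalization that restores $(r+1)$ or note that the printed constants in i) and v) need correction. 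All other steps go through as you describe.
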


\begin{proof} We prove only {\bf 1.}, since {\bf 2.} can be proven parallery to
the case for {\bf 1.}


\noindent 
{\it Proofs of }1. {\it and} 2.:  According to (P3)$^*$, we set two expansions:
$$
\begin{array}{rlll} \nabla_{v_1}\nabla_{v_2}\zeta & =& z^{-1}\nabla_{A(v_1,v_2)}
\zeta + \nabla_{B(v_1,v_2)}\zeta \quad & \text{for }\ v_1,v_2 \in\TT_S \\ \\
\nabla_{z\partial_z}\nabla_{v}\zeta & = & z^{-1}\nabla_{M(v)} \zeta +
\nabla_N(v)\zeta \quad & \text{for }\ v \in\TT_S ,
\end{array}
$$
where $A, B:\TT_S\times \TT_S \to \TT_S$ are $\CC$-bilinear morphism in the variables $v_1,v_2$ and $\OO_S$-linear morphisms in the variable $v_1$, and $M,N:\TT_S\to \TT_S$ are
$\OO_S$-linear morphisms in the variable $v$.

In view of \eqref{eq:initial} and Kodaira-Spencer morphism \eqref{eq:KS2} with the $*$-product \eqref{eq:*}, the first terms of the above expansions are calculated as  $A(v_1,v_2)=v_1* v_2$ and $M(v)= - E* v$, respectively.  
Further more inserting the above expansions to the integrability relation and
the Leibniz rule of the Gauss-Manin connection $\nabla$, 
we see that $B(v_1,v_2)$ is given as the covariant differentiation $\nabla\!/_{v_1}v_2$ of the tangent vector $v_2$ by $v_1$  for an affine torsion free connection $\nabla\!/$ on $S$:
\begin{equation*}
  \nabla\!/v w-\nabla\!/w v=[v,w] \quad \text{\quad  for \ }v,w\in \TT_S.
\end{equation*}
Thus we obtain the expressions \eqref{eq:FlatConnection} and \eqref{eq:Exponents}.

\medskip
\noindent 
{\it Proof of } 3. : 
The covariant derivation by $v\in \TT_S$ of the $\partial_1$-triviality (P1) $z\nabla_{\partial_1}\zeta=\zeta$  implies the horizontality $\nabla\!/_v\partial_1=0$ of the primitive vector field. 
The covariant derivation by $z\partial_z$ of the  $\partial_1$-triviality (P1) $z\nabla_{\partial_1}\zeta=\zeta$ implies the equality $z\nabla_{\partial_z}\zeta=-\nabla_E\zeta+z\nabla_{N(\partial_1)}\zeta$. Comparing this with the homogeneity axiom (P3) of $\zeta$, we get $\nabla_{r\partial_1}\zeta=\nabla_{N(\partial_1)}\zeta$ and, therefore, $N(\partial_1)=r\partial_1$. That is, $\partial_1$ is an eigenvector of $N$ belonging to the eigenvalue $r$, and ({\bf 1.} 3. i)) is shown.
\be 
\label{eq:partial1}
\nabla\!/ \partial_1=0  \quad \text{and}\quad N(\partial_1)=r\partial_1
\ee

The integrability relation $[\nabla_{v_1},\nabla_{v_2}]=\nabla_{[v_1,v_2]}$ acting  
 on
$\nabla_{w}\zeta$ for $w\in \TT_S$ implies
 the integrability
$[\nabla\!/_{v_1},\nabla\!/_{v_2}]=\nabla\!/_{[v_1,v_2]}$ of $\nabla\!/$ ({\bf 1.} 3. ii))
together with the relation:
\begin{equation}
\label{eq:*nabla}
[v_1,v_2]*v_3= \nabla\!/_{v_1}(v_2*v_3)+v_1*\nabla\!/_{v_2}v_3 -
\nabla\!/_{v_2}(v_1*v_3)-v_2*\nabla\!/_{v_1}v_3.
\end{equation}
Let us show that the expression:
\be
\label{eqq:Tensor}
T(u,v,w):=\nabla\!_u(v*w)-(\nabla\!/_uv)*w-v*(\nabla\!/_u w) \quad \text{for} \quad u,v,w\in\TT_S
\ee
 is a symmetric $\OO_S$-tensor of type $(3,1)$ (i.e.\ a $\mathfrak{S}_3$-symmetric $\OO_S$-tri-linear map $\TT_S^{\otimes3}\to \TT_S$). The $\OO_S$-linearity on the variable $u$ is obvious. The symmetry by the permutation of $v$ and $w$  is also obvious. Therefore, we have only to show that the symmetry by the permutation of $u$ and $v$. However, it is immediate to see 
 $T(v_1,v_2,v_3)-T(v_2,v_1,v_3)=0$ due to \eqref{eq:*nabla}.  
 Thus,  ({\bf 1.} 3. iii)) is shown.

\medskip
The integrability (commutativity) relation $[\nabla_{z\partial_z},\nabla_v]=0$ acting on $\nabla_w\zeta$ for $w\in \TT_S$ implies
 the horizontality $\nabla\!/ N=0$ of $N$ ({\bf 1.} 3. iv)) together with the relation: 
\be
\label{eq:*N}
N(v*w)-v*N(w)= E*\nabla\!/_vw -  \nabla\!/_v(E*w) + v*w
\ee

Inserting $w=\partial_1$ in \eqref{eq:*N} and applying \eqref{eq:partial1}, we obtain the expression ({\bf 1.} 3. v))  of $N$:
\be
\label{eq:N}
N(v) \quad = \quad (r+1)\cdot v \ - \ \nabla\!/_v E .
\ee

Applying the expression \eqref{eq:N} back to \eqref{eq:*N}, we obtain the relation
\be
\label{eq:deg*1}
\nabla\!/_{v*w} E-v*\nabla\!/_w E+ E*\nabla\!/_v w  +  v*w \quad = \quad \nabla\!/_v(E*w).
\ee

In \eqref{eq:*nabla}, insert $v_1=E$, $v_2=v$ and $v_3=w$ and take the difference  with \eqref{eq:deg*1} so that we obtain
$$
\big(\nabla\!/_E (v*w)-\nabla\!/_{v*w}E\big) -v*\big(\nabla\!/_Ew-\nabla\!/_w E\big) - [E,v]*w \quad = \quad v*w.
$$
Remembering that $\nabla\!/$ is torsion free, this means that the homogeneity  ({\bf 1.} 3. vii)) of the $*$-product.
\be
\label{eq:deg*}
[E,v*w]\quad = \quad v*[E,w] \ + \ [E,v]*w + v*w.
\ee
This completes a proof of Proposition \ref{FlatStructure}.
\end{proof}

\begin{remark} \label{biFlatStructure}
  Let the Euler vector field $E$ be generically invertible as an element of the Frobenius algebra $\TT_S$ and
  we denote the inverse element $E^{-1} \in \TT_S$. Then
  the structure in the Proposition \ref{FlatStructure} is actually a bi-flat structure in a sense that there is another
  affine torsion free compatible connection $\nablab^* \; : \; \TT_S \times \TT_S \to \TT_S$ (defined by the Euler vector field) such that
  $\nablab^*_v w = \nablab_v E^{-1} * w$. 
\end{remark}

\begin{remark} Including the case when $\zeta_+$ is a formal primitive form (with or without higher residue structure),
recall \eqref{eq:FormalNabla} that the Gauss-Manin connection $\check{\nabla}$ acts on the trivial bundle $\check{\HH}_F$. However, the formulae \eqref{eq:FlatConnection} and \eqref{eq:Exponents} imply that the $\OO_S[z^{\pm1}]$-free submodule $\BB_{\zeta_+}[z^{\pm1}]$, where we set $\BB_{\zeta_+}:= Im( z\nabla \zeta_+:
\TT_{S,0}\to \check{\HH}^{(0)}_F) = z\nabla_{\TT_{S,0}}\zeta_+$, is invariant under the actions of $\nabla_{\partial_z}$ and $\nabla_v$ for $v\in\TT_{S,0}$.  In this sense, the connection is extended to a meromorphic connection on $\mathbf{P}^1\times S'$ where $S'$ is a neighborhood of the origin $0\in S$ and $\mathbf{P}^1$ is the projective line of inhomogeneous coordinate $z$. For brevity, we shall call $S'$ again $S$. 
So we obtain
\be
\label{eq:Flat Structure}
\nabla\ :\  \TT_{\mathbf{P}^1\times S}  \times \BB_{\zeta_+}[z^{\pm1}] \quad \longrightarrow \quad \BB_{\zeta_+}[z^{\pm1}] 
\ee
where we may regard $\BB_{\zeta_+}[z^{\pm1}]$ as a finite rank bundle $\pi^*\TT_S$ on $\mathbf{P}^1\times S$  for $\pi: \mathbf{P}^1\times S \to S$.
\end{remark}

\noindent
{\bf Flat coordinates of $S$.}  The integrability of the connection $\nabla\!/$ implies that the unfolding parameter space $S$ carries an affine linear structure (\cite{S1}). Namely, {\it torsion-freeness of $\nabla\!/$ implies that the horizontal sections of the dual connection $\nabla\!/^*$ of $\nabla\!/$ on the cotangent bundle $\Omega_S^1$ of $S$ are closed forms. Therefore, the following second order differential equation on a function $u$ on $S$:
\footnote{
Using tangent-cotangent duality, the equation is equivalent to a system of equations
$ (vw-\nabla\!/_vw) u=0$ for $v,w\in \TT_S$.
}
\be
\label{eq:F-Coordinate1}
\nabla\!/ ^* du\ =\quad 0
\ee
is integrable and has (locally) $\mu+1$ linearly independent solution (including constant functions)}. We shall call them the (local affine linear) {\it flat coordinate system} of $S$ w.r.t. the primitive form $\zeta$. Using the natural pairing between tangent and cotangent bundles, the equation \eqref {eq:F-Coordinate1} is equivalent to the vanishing of $\langle v_2,\nabla\!/^*_{v_1} du\rangle= v_1\langle v_2, du\rangle- \langle \nabla\!/_{v_1} v_2, du\rangle$ for all $v_1,v_2\in \TT_S$. That is,
\be
\label{eq:F-Coordinate2}
 v_1 v_2 u-  (\nabla\!/_{v_1} v_2) u \ = \ 0   \qquad \forall v_1,v_2\in \TT_S .
 \ee
Actually, the system \eqref{eq:F-Coordinate2} is integrable due to the torsion freeness of $\nabla\!/$:  
$\big( v_1 v_2 -  (\nabla\!/_{v_1} v_2)\big) -\big(  v_2 v_1 -  (\nabla\!/_{v_2} v_1) \big)=[v_1,v_2]- (\nabla\!/_{v_1} v_2-\nabla\!/_{v_2} v_1) =0$.

\medskip
For a sake of completeness, we add a few classical well-known results on the case when $\zeta$ is a primitive form with metric structure. Recall that Kodaira-Spencer morphism \eqref{eq:KS} has identified the tangent bundle of $S$ with $\pi_*\OO_C$. A choice of a primitive form (without metric structure) identify them with the vector bundle $\Omega_F$ over $S$ due to (P0), where we recall that the bundle $\Omega_F$ admits an inner product $\eta$ \eqref{eq:HRdeg}. We denote the pull back metric $\eta_{\zeta}:=(z\nabla\zeta)^{-1}\zeta$ again simply by $\eta$ and regard it as a metric on  the tangent bundle $\TT_S$ of $S$.

\begin{Proposition} 
\label{MetricCondition}
Let $\zeta$ be a primitive form satisfying the metric condition {\rm (P2)} and let $\eta$ be the metric on $\TT_S$ induced from the residue pairing \eqref{eq:HRdeg}. Then, the flat connection $\nabla\!/$ and the endomorphism $N$ in Proposition \ref{FlatStructure} satisfy further the  following relations:

\qquad i) Horizontality of $\eta$: \quad  $\nabla\!/ \eta =0$.

\qquad ii) Self-adjointness of $*$-product: \quad  $\eta(u*v,w)\ =\ \eta(u,v*w)$ \ for $u,v,w,\in \TT_S$.

\qquad  iii) Duality of exponents: \quad $N\ +\  N^*\ =\ 1$ where $N^*$ is the adjoint of $N$ w.r.t.\ $\eta$.
\end{Proposition}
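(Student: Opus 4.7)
The plan is to read all three items as $z$-coefficient identities extracted from the single purity statement
\begin{equation*}
K_F(\nabla_v\zeta,\nabla_w\zeta) \ = \ \eta(v,w)\, z^{n-1},\qquad \eta(v,w)\in\OO_S,
\end{equation*}
which for $v,w\in\TT_S$ is the content of (P2) together with the residue pairing diagram \eqref{eq:HRdeg}. I would then differentiate this equality by each of the two Leibniz rules \eqref{eq:HRv} and \eqref{eq:HRz} and substitute the holonomicity expressions \eqref{eq:FlatConnection} and \eqref{eq:Exponents} on the right-hand side; the resulting Laurent polynomials in $z$ have nontrivial coefficients only in degrees $z^{n-2}$ and $z^{n-1}$, and matching them against the purely-$z^{n-1}$ left-hand side should produce (ii) and (i) from the $\TT_S$-rule and (iii) from the $z\partial_z$-rule.

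For $u\in\TT_S$, applying \eqref{eq:HRv} to $K_F(\nabla_v\zeta,\nabla_w\zeta)$ and plugging in \eqref{eq:FlatConnection} on both arguments --- while carefully pulling the scalar $z^{-1}$ through $K_F$ using $K_F(\alpha,z^{-1}\beta)=-z^{-1}K_F(\alpha,\beta)$ from the $*$-involution --- rewrites the right-hand side as
\begin{equation*}
\bigl(\eta(u*v,w)-\eta(v,u*w)\bigr)z^{n-2} \ + \ \bigl(\eta(\nablab_u v,w)+\eta(v,\nablab_u w)\bigr)z^{n-1},
\end{equation*}
whose $z^{n-2}$-part must vanish (giving (ii) self-adjointness) and whose $z^{n-1}$-part must equal $\partial_u\eta(v,w)$ (giving (i) horizontality). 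Exactly the same procedure with $z\partial_z$ in place of $\partial_u$ turns the left-hand side into $(n-1)\eta(v,w)z^{n-1}$ and splits the right-hand side into a $z^{n-2}$-piece that vanishes by (ii) applied to $u=E$, together with a $z^{n-1}$-piece $\eta(N(v),w)+\eta(v,N(w))$; comparing coefficients gives the duality relation (iii), up to the scalar normalization of $N$ fixed by \eqref{eq:Exponents}.

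The main obstacle is the consistent bookkeeping of the $*$-involution $z\mapsto -z$ when the arguments of $K_F$ carry $z^{\pm 1}$; a sign slip there would make the $z^{n-2}$-terms fail to cancel and would scramble (i)--(iii) with spurious torsion contributions. Once the involution is handled, the whole argument reduces to matching coefficients in the two Leibniz identities, and no input beyond (P2) and the holonomicity already recorded in Proposition \ref{FlatStructure} is needed.
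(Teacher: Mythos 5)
Your proposal is correct and follows essentially the same route as the paper's proof: write (P2) together with \eqref{eq:HRdeg} as $K_F(\nabla_v\zeta,\nabla_w\zeta)=\eta(v,w)z^{k}$, differentiate via the two Leibniz rules \eqref{eq:HRv} and \eqref{eq:HRz}, substitute \eqref{eq:FlatConnection} and \eqref{eq:Exponents} while tracking the sign coming from the $*$-involution, and match the two surviving $z$-coefficients to extract (i)--(iii). The only divergence is the normalization of the power of $z$ (you use $z^{n-1}$, consistent with the stated axiom (P2), while the paper's proof works with $z^{n-2}$), which does not affect the structure of the argument.
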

\begin{proof} According to the above convention of the metric $\eta$, the condition (P2) implies the equality: 
$K_F(\nabla_v\zeta,\nabla_w\zeta)= \eta(v,w) z^{n-2}$ for $v,w\in \TT_S$. Let us apply this to the formulae \eqref{eq:HRv} and \eqref{eq:HRz}, where we choose $\zeta_i=\nabla_{v_i}\zeta$ for $v_i\in \TT_S$ ($i=1,2$) and  use the expansion formulae \eqref{eq:FlatConnection} and \eqref{eq:Exponents}.  Then, for the first case  \eqref{eq:HRv}, we obtain:
\begin{equation}
v \eta(v_1,v_2) z^{n-2}=\eta(v*v_1,v_2)z^{n-3}+\eta(\nabla\!/_vv_1,v_2)z^{n-2}-\eta(v_1,v*v_2)z^{n-3}+\eta(v_1,\nabla\!/_v v_2)z^{n-2}. \label{metric1}
\end{equation}
The relation of coefficients of $z^{n-2}$ implies i) the horizontality $\nabla\!/\eta=0$, and the relation of coefficients of $z^{n-3}$ implies ii) the self-adjointness of the $*$-product.  
For the second case  \eqref{eq:HRz}, we obtain:
\begin{equation} \label{metric2}
(n-2 +*)  \eta(v_1,v_2) z^{n-2}=\eta(E*v_1,v_2)z^{n-3}+\eta(N(v_1),v_2)z^{n-2}-\eta(v_1,E*v_2)z^{n-3}+\eta(v_1,N(v_2))z^{n-2}.
\end{equation}
The relation of coefficients of $z^{n-2}$ implies iii) the duality $N+N^*=1$, and the relation of coefficients of $z^{n-3}$ is a special case of ii) and does not give any new information.  
\end{proof}

Actually, the residue pairing defines a metric on $\TT_S$ via Kodaira-Spencer map even when the metric condition is not satisfied. But this
metric is not flat in general.

\begin{Proposition}
  Let $\zeta$ be a primitive form without metric structure. Let $\eta$ be a metric induced on $\TT_S$ via the map $v \to z\nabla_v \zeta
  \in \BB \simeq \Omega_F$. Let $(\TT_S)^0 \subset \TT_S$ be a subbundle of vector fields for which the condition $(P2)$ is satisfied,
  that is if $v,w \in (\TT_S)^0$ then $K_F(\nabla_v \zeta, \nabla_w \zeta) \in \CC z^{n-2}$.
 Then the $*$-product is symmetric with respect to $\eta$:\quad  $\eta(u*v,w)\ =\ \eta(u,v*w)$ \ for $u,v,w,\in \TT_S$.
 
\end{Proposition}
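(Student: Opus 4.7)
The plan is to make the metric $\eta$ completely explicit via the residue symbol on $\Omega_F$, after which the desired symmetry follows immediately from the commutativity of multiplication in the Jacobian ring $\OO_{C_F}$. The hypothesis introducing $(\TT_S)^0$ will in fact not be needed for the stated conclusion; it is presumably relevant for companion statements (e.g.\ partial horizontality of $\eta$ restricted to $(\TT_S)^0$) not addressed here.

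Write $\zeta_0 := r_F^{(0)}(\zeta) = [\phi_0 \, dx_1\wedge\cdots\wedge dx_{n+1}] \in \Omega_F$ for the leading term of $\zeta$ in the descendent variable. From the explicit formula \eqref{eq:Gauss-Manin} I read off that, for any $v \in \TT_S$,
$$
z\nabla_v\zeta \ = \ [\,z\,\tilde v(\phi_0)\,dx + (\tilde{v}F)\,\phi_0\,dx\,] + O(z),
$$
and the first summand lies in $\HH_F^{(-1)} = z\,\HH_F^{(0)}$; hence
$$
z\nabla_v \zeta \ \equiv \ (\tilde{v}F)\cdot\zeta_0 \pmod{\HH_F^{(-1)}}.
$$
By the very definition of $\eta$ as the pullback of the residue pairing \eqref{eq:HRdeg} along the isomorphism $\BB \overset{\sim}{\to} \Omega_F$, this yields the explicit formula
$$
\eta(v,w) \ = \ \mathrm{Res}\!\left[\frac{(\tilde{v}F)(\tilde{w}F)\,\phi_0^{\,2}\,dx_1\wedge\cdots\wedge dx_{n+1}}{\partial_{x_1}F,\ldots,\partial_{x_{n+1}}F}\right].
$$

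Next I invoke the definition of the $*$-product \eqref{eq:*} via the Kodaira--Spencer isomorphism \eqref{eq:KS2}: one has $(\widetilde{u*v})F \equiv (\tilde{u}F)(\tilde{v}F)$ in $\OO_{C_F}$. Substituting,
$$
\eta(u*v,w) \ = \ \mathrm{Res}\!\left[\frac{(\tilde{u}F)(\tilde{v}F)(\tilde{w}F)\,\phi_0^{\,2}\,dx_1\wedge\cdots\wedge dx_{n+1}}{\partial_{x_1}F,\ldots,\partial_{x_{n+1}}F}\right],
$$
which is manifestly symmetric in $u,v,w$. Performing the analogous substitution for $\eta(u, v*w)$ produces the same right-hand side, and the identity $\eta(u*v,w) = \eta(u,v*w)$ follows for all $u,v,w \in \TT_S$.

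The only step of substance is the identification $r_F^{(0)}(z\nabla_v\zeta) = (\tilde{v}F)\,\zeta_0$, and it is a direct unwinding of \eqref{eq:Gauss-Manin}. I expect no genuine obstacle: the assertion reduces to the classical associativity of the Grothendieck residue pairing with respect to multiplication in the Jacobian ring, which is independent of any metric hypothesis on $\zeta$. In particular, no use is made of the bilinear identities \eqref{eq:HRv}--\eqref{eq:HRz} that were required in the proof of Proposition~\ref{MetricCondition}, and this is precisely why the full subbundle $(\TT_S)^0$ assumption drops out.
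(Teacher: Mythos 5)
Your proof is correct, but it takes a genuinely different route from the paper's. The paper differentiates the pairing: it reuses the identity \eqref{metric1} (obtained from \eqref{eq:HRv} together with the holonomicity expansion \eqref{eq:FlatConnection}), observes that in general $K_F(\nabla_v\zeta,\nabla_w\zeta)=\eta(v,w)z^{n-2}+O(z^{n-1})$, and extracts the coefficient of $z^{n-3}$, where only the two $*$-product terms survive and must cancel. You instead work entirely at the level of the associated graded: the leading-term diagram \eqref{eq:initial} gives $r_F^{(0)}(z\nabla_v\zeta)=(\tilde{v}F|_{C_F})\cdot\zeta_0$, so $\eta$ becomes an explicit Grothendieck residue, and the symmetry reduces to the fact that the residue pairing of \eqref{eq:HRdeg} is a trace form on the Jacobian ring $\OO_{C_F}$ (hence $\OO_{C_F}$-bilinear), combined with the definition \eqref{eq:*} of the $*$-product; in fact you obtain the full $\mathfrak{S}_3$-symmetry of the tensor $\eta(u*v,w)$, which is slightly stronger than the stated identity. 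Your route is more elementary --- it uses neither \eqref{eq:HRv} nor any part of the flat structure --- and it makes transparent why the hypothesis on $(\TT_S)^0$ is never used (it is not used in the paper's proof either). What it buys less of is portability: it leans on the concrete residue-symbol realization of the pairing on $\Omega_F$ and its $\OO_{C_F}$-bilinearity (standard, but nowhere stated explicitly in the paper), whereas the paper's argument would survive in any abstract setting where only the axioms of $K_F$ and the holonomicity expansion are available.
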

\begin{proof}
  In general we have $K_F(\nabla_v \zeta, \nabla_w \zeta) = \eta(v,w)z^{n-2}+O(z^{n-1})$. We consider the relation~\eqref{metric1} modulo the
  terms of order $O(z^{n-2})$:
  \begin{equation}
    0=\eta(v*v_1,v_2)z^{n-3}-\eta(v_1,v*v_2)z^{n-3},
  \end{equation}
  which proves the statement.
\end{proof}

\noindent
{\bf Flat potential on S.}
As a consequence the metric property  of $\nabla\!/$, we are now able to define a potential function \cite{S1}. Namely, {\it the following third order differential equation on a function $F$ on $S$ 
\be
\label{eq:FlatPotential}
uvw F \ =\quad  \eta(u*v,w) \quad  \text{ for } \ u,v,w \in  Kernel(\nabla\!/)
\ee
is solvable.} Here we denote by $Kernel(\nabla\!/)$ the space of horizontal sections of $\nabla\!/$ (note that elements in $Kernel(\nabla\!/)$ are commutative as derivations due to the torsion-free-ness of $\nabla\!/$).  ({\it Proof of the integrability of \eqref{eq:FlatPotential}:} The 3-form in RHS of the equation \eqref{eq:FlatPotential} is symmetric in three vector fields  $u,v,w$ on $S$ (Proposition \ref{MetricCondition} ii)). Let show that the 4-form $x \, \eta(u*v,w)$ for 4 flat vector fields $u,v,w,x \in Kernel(\nabla\!/)$ is symmetric. Let us specialize the symmetric tensor $T$ (Proposition \ref{FlatStructure} {\bf 1.} 1. iii)) to flat vector fields $u,v,w\in Kernel(\nabla\!/)$.  Then, we see that $\nabla\!/_u(v*w)$ is symmetric in the three varibles. On the other hand, using the horizontality of $\eta$ (Proposition 
\ref{MetricCondition} i)), we see $x \, \eta(u*v,w)=\eta(\nabla\!/_x(u*v),w)+\eta(u*v,\nabla\!/_xw)=\eta(\nabla\!/_x(u*v),w)$.  LHS of this equality was already symmetric in the variables $u,v,w$, and RHS of the equality is now symmetric in three variables $x,u,v$. These together imply that the 4-tensor on $u,v,w,x \in Kernel(\nabla\!/)$ is symmetric. $\Box$)

\begin{remark} 
  Morihiko Saito \cite{MSa1} introduced the concept of very good section.
   Recently, Todor Milnov gave a characterization of a very good section in terms of the flat structure as follows.

\begin{Proposition} (\cite{M})
  A section $\LL$
   is very good if and only if 
\be
\label{eq:Verygood}
[N, E]\quad  =\quad -E.
\ee
\end{Proposition}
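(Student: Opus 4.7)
The plan is to pass between the section $\LL$ and the endomorphism $N$ via the formal primitive form $\zeta_+$ furnished by Theorem~\ref{Theorem}, and then reduce the claimed equivalence to a purely algebraic identity on the tangent bundle $\TT_S$. First I will fix conventions: regard $N$ as an $\OO_S$-endomorphism of $\TT_S$ via the Kodaira--Spencer identification \eqref{eq:KS2} composed with the primitivity isomorphism $z\nabla\zeta_+\colon \TT_S \xrightarrow{\sim} \BB_{\zeta_+}$, so that Proposition~\ref{FlatStructure} supplies the formula $N(v) = (r+1)v - \nablab_v E$ in $\TT_S$. Regard the Euler field as the multiplication endomorphism $E\colon v \mapsto E * v$ on $\TT_S$. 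With these conventions the commutator is the $\OO_S$-linear map $[N,E](v) = N(E*v) - E*N(v)$.

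Next I will compute $[N,E]$ by direct substitution. Using the formula for $N$, the $(r+1)\,E*v$ terms cancel and one obtains
\begin{equation*}
  [N,E](v) \;=\; E * \nablab_v E \;-\; \nablab_{E * v}\, E,
\end{equation*}
so the identity $[N,E] = -E$ is equivalent to the algebraic identity on $\TT_S$
\begin{equation}\label{eq:vgalg}
  \nablab_{E * v}\, E \;-\; E * \nablab_v E \;=\; E * v \qquad (v \in \TT_S).
\end{equation}
Everything has now been translated from operator language to a relation purely in $\TT_S$ involving only the $*$-product, the flat connection $\nablab$, and the Euler field $E$.

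The main step will be to identify \eqref{eq:vgalg} with M.~Saito's very good property of $\LL$. Via Theorem~\ref{Theorem}, $\LL$ corresponds to the formal primitive form $\zeta_+$, and as in the extended-connection discussion at the end of \S7, the Gauss--Manin connection restricts to a meromorphic flat connection on $\BB_{\zeta_+}[z^{\pm 1}]$ over $\mathbf{P}^1 \times S$ with a regular singularity at $z = \infty$. Saito's very good condition amounts to a specific prescription on the residue endomorphism of this connection at $z = \infty$. Expanding $\nabla_{z\partial_z + E}$ near $z = \infty$ and invoking (P4) together with \eqref{eq:FlatConnection}--\eqref{eq:Exponents} expresses this residue condition as a relation among $N$, multiplication by $E$, and $\nablab$; a short calculation using the homogeneity of the $*$-product and the flatness of $E$ (both parts of Proposition~\ref{FlatStructure}) will identify this relation with \eqref{eq:vgalg}, supplying the dictionary in both directions.

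The hard part is precisely this translation: $[N,E] = -E$ is a finite-dimensional algebraic identity on $\TT_S$, whereas the very good condition is intrinsically an asymptotic statement about the extended connection at $z = \infty$. Making the dictionary precise is the delicate step; once it is established, running it forward yields \eqref{eq:vgalg} (hence $[N,E] = -E$) from very goodness, and running it backward reconstructs the residue datum characterising very goodness from \eqref{eq:vgalg}. The remaining verifications are then essentially mechanical applications of Proposition~\ref{FlatStructure}.
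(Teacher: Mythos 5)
First, a point of comparison: the paper does not actually prove this Proposition --- it is quoted from Milanov \cite{M}, and the proof environment that follows it in \S 7 establishes only the subsequent reformulation \eqref{eq:Verygood2} of the algebraic condition $[N,E]=-E$ in terms of $E$, $*$ and $\nablab$ alone. So the question is whether your proposal is a self-contained proof of the equivalence, and it is not. Your opening computation, namely that $N(v)=(r+1)v-\nablab_v E$ gives
\begin{equation*}
[N,E](v)\;=\;E*\nablab_v E\;-\;\nablab_{E*v}E ,
\end{equation*}
is correct (the $(r+1)E*v$ terms do cancel), but it is only a mechanical unwinding of Proposition \ref{FlatStructure} 3.~v) and says nothing yet about very goodness. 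It plays the same role as the paper's own passage from \eqref{eq:Verygood} to \eqref{eq:Verygood2}, i.e.\ it is the easy cosmetic half.

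The entire mathematical content of the Proposition is the equivalence of that tangent-bundle identity with Morihiko Saito's very good condition on the section $\LL$, and this is precisely the step you flag as ``the delicate step'' and then do not carry out. You assert that expanding $\nabla_{z\partial_z+E}$ near $z=\infty$ and ``a short calculation'' will supply ``the dictionary in both directions,'' but neither the definition of a very good section (which the paper also never states) nor the claimed residue computation at $z=\infty$ appears anywhere in your text. Without recalling what ``very good'' means --- concretely, the compatibility of the splitting $\HH_f=\HH_f^{(0)}\oplus\LL$ with the eigenspace decomposition of the residue of the connection at $z=\infty$ --- and actually performing the translation into the pair $(N,\,E*)$, nothing has been established in either direction of the ``if and only if.'' As written, the proposal is a plan whose essential step is deferred; the Proposition remains unproved.
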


One observes that the formula \eqref{eq:Verygood} does not involve the metric structure (either the higher or the first residue pairings). Therefore, one can formally define a very good section without metric structure, if a good section without the metric condition satisfies \eqref{eq:Verygood}. Using the flat structure in Proposition \ref{FlatStructure}, we can rewrite the condition \eqref{eq:Verygood} only in terms of $E$ as follows.
\be
\label{eq:Verygood2}
\nabla\!/_E(E*a)-E*\nabla\!/_Ea \ =\ -E*a   \quad \text{for } \  a\in \TT_S .
\ee
\begin{proof}
Applying Proposition {\bf 1.} 3. v), LHS of the  condition \eqref{eq:Verygood} implies
$$
N(E*a)-E*N(a) = \nabla\!/_{E*a}E -E*\nabla\!/_aE= [E*a,E]+\nabla\!/_E(E*a) - E*([a,E]+\nabla\!/_E a) =\nabla\!/_E(E*a)-E*\nabla\!/_E a 
$$
\end{proof}
\end{remark}




\paragraph{Saito structures without metric and flat F-manifolds}

Now we can compare our construction with the definition of the Saito structure
without metric by Sabbah~\cite{Sabbah}.

Let $M$ be a complex analytic manifold, $TM$ its tangent bundle, $\Theta_M$ the
sheaf of holomorphic vector fields and $\Omega^1_M$ the sheaf of holomorphic
1-forms.

\begin{definition}(Sabbah) A \textit{Saito structure without metric} on $M$ consists of
the following data:
    \begin{enumerate}
        \item A torsionless flat connection $\nabla$ on $TM$.
        \item A 1-form $\Phi$ with values in $\mathrm{End}(TM)$, namely a
section of the sheaf $\mathrm{End}(\Theta_M)\otimes_{\OO_M}\Omega^1_M$, which is
symmetric when considered as a bilinear map $\Theta_M\otimes_{\OO_S}\Theta_M \to
\Theta_M$;
        \item Two global sections of $\Theta_M$ (vector fields) $e$ and
$\mathfrak{E}$ respectively called unit field and Euler field such that the following two conditions are satisfied:
    \end{enumerate}
    \begin{enumerate}
        \item The meromorphic connection $\boldsymbol{\nabla}$ defined on the vector bundle
$\pi^* TM$ on $\mathbb{A}^1 \times M$ by 
\begin{equation}
  \boldsymbol{\nabla} = \pi^* \nabla +
  \frac{\pi^* \Phi}\tau - \left(\frac{\Phi(\mathfrak{E})}\tau + 
  \nabla \mathfrak{E}  \right) \frac{\dd \tau}\tau 
\end{equation}
is integrable.
        \item The vector field $e$ is $\nablab $-horizontal and $\Phi(e)= -Id$.
    \end{enumerate}
\end{definition} We consider a tuple $(M, \nablab, \Phi, e, \mathfrak{E})$
satisfying the conditions above as a Saito structure without metric.

Given a universal unfolding $F$ over a frame $(Z,S,p)$ the multiplication
structure $* \; : \; \TT_S \times \TT_S \to \TT_S$ defines a symmetric tensor
$C$ in $\mathrm{End}(\TT_S) \otimes_{\OO_S} \Omega_S^1$ via $C(v,w) = v*w, \; v,w \in \TT_S$,
since the multiplication is $\OO_S$-linear.

\begin{Proposition} Let $\zeta$ be a primitive form without metric in a sense of
Definition~\ref{PFwithoutmetric} .Then the tuple $(S, \nablab, -C, \pd_1, E)$ defines a Saito structure without
metric.
\end{Proposition}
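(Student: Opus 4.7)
First I match up the data. The torsion-free flat connection is the $\nablab$ produced in Proposition~\ref{FlatStructure}; torsion-freeness is built into its construction and flatness is part {\bf 1.}3 ii). The endomorphism-valued $1$-form is $\Phi:=-C$, where $C(v,w)=v*w$; its symmetry as a bilinear map $\TT_S\otimes_{\OO_S}\TT_S\to\TT_S$ is just commutativity of the Frobenius product $*$. The vector field $e:=\pd_1$ is $\nablab$-horizontal by part {\bf 1.}3 i), and $\Phi(\pd_1)(v)=-\pd_1*v=-v$ because $\pd_1$ is the identity of $*$, so $\Phi(e)=-\mathrm{Id}$. Finally $\mathfrak{E}:=E$ is the Euler field from \eqref{eq:UnitEuler0}.

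The main task is then integrability of the meromorphic connection $\boldsymbol{\nabla}$ on $\pi^*\TT_S$ over $\mathbf{A}^1_\tau\times S$. The plan is to recognize $\boldsymbol{\nabla}$ as a gauge twist of (the pullback of) the Gauss--Manin connection. Namely, by the remark following Remark~\ref{biFlatStructure}, the map $v\mapsto z\nabla_v\zeta$ extends $\OO_S[z^{\pm1}]$-linearly to an isomorphism $\pi^*\TT_S\overset{\sim}{\to}\BB_{\zeta}[z^{\pm1}]\subset\check\HH_F$, and the submodule $\BB_{\zeta}[z^{\pm1}]$ is stable under the Gauss--Manin connection in both $\TT_S$- and $z\pd_z$-directions. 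Substituting $\tau=-z$ and using the explicit formulas \eqref{eq:FlatConnection} and \eqref{eq:Exponents} together with the expression $N(v)=(r+1)v-\nablab_vE$ of part {\bf 1.}3 v), one checks that the pullback connection matches Sabbah's $\boldsymbol{\nabla}$ up to a scalar gauge in $\tau$ absorbing the weight $r$ of $\zeta$ from the homogeneity axiom (P4). Integrability of the Gauss--Manin connection (Proposition~\ref{GMHR} {\bf 2.}i)) then yields integrability of $\boldsymbol{\nabla}$, since integrability is invariant under gauge transformations.

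The main obstacle I expect is the careful bookkeeping in that identification: tracking the sign change $\tau=-z$, the shift in the $\tau\pd_\tau$-component produced by the extra factor $z$ in $z\nabla_v\zeta$, and the scalar gauge transformation absorbing $r$. As a more hands-on alternative, one can prove integrability of $\boldsymbol{\nabla}$ directly by expanding its curvature $\boldsymbol{\nabla}^2$ in powers of $\tau^{-1}$ and matching term by term with the relations of Proposition~\ref{FlatStructure}: the $\tau^{-2}$-coefficient vanishes by associativity and commutativity of $*$; the $\tau^{-1}$-coefficient by the tensorial symmetry {\bf 1.}3 iii) of $T$; the $\tau^0$-coefficient by flatness {\bf 1.}3 ii) of $\nablab$; and the mixed $(\tau\pd_\tau,\TT_S)$-commutators by horizontality {\bf 1.}3 iv) of $N$ together with the homogeneity {\bf 1.}3 vii) of $*$. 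This direct check in fact retraces, in reverse, the derivation in the proof of Proposition~\ref{FlatStructure} of those very relations from the integrability of Gauss--Manin.
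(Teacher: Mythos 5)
Your proposal is correct, but your primary route is genuinely different from the paper's. The paper proves integrability of $\boldsymbol{\nabla}$ by the direct, hands-on method you offer only as an alternative: it writes out the components \eqref{eq:SabbahConnection1}, expands the curvature, and matches the coefficients term by term against the relations of Proposition~\ref{FlatStructure} --- the $\tau^{-2}$ terms against associativity/commutativity of $*$, the $\tau^{-1}$ terms against the symmetry of $T$ ({\bf 1.}3 iii)), the $\tau^0$ terms against flatness of $\nablab$, and the mixed $(\tau\pd_\tau,\TT_S)$ compatibility against the single identity \eqref{eq:deg*1} (which, as you note, is recoverable from {\bf 1.}3 v) and vii) together with \eqref{eq:*nabla}; listing iv) and vii) alone slightly understates what is needed, since v) enters to eliminate $N$). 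Your preferred route --- transporting integrability from the Gauss--Manin connection through the isomorphism $v\mapsto z\nabla_v\zeta$ onto $\BB_{\zeta}[z^{\pm1}]$, with $\tau=-z$ and a scalar gauge $\tau^{r}$ absorbing the weight from (P4) --- is precisely the content of the Remark the authors place \emph{after} their proof, where they compute \eqref{eq:SabbahConnection2} and observe that the two connections differ exactly by the summand $r\cdot\nabla_w\zeta$ in the $z\pd_z$-component; they state this as an a posteriori interpretation rather than using it as the proof. Your route is more conceptual and makes the origin of Sabbah's axioms transparent, at the cost of the bookkeeping you yourself flag (the sign $\tau=-z$, the extra $z$ in $z\nabla_v\zeta$ shifting the $\tau\pd_\tau$-component by the identity, and the gauge factor); the paper's route is longer but entirely self-contained at the level of the tangent bundle of $S$ and does not require revisiting the $\nabla$-stability of $\BB_\zeta[z^{\pm1}]$. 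Both are valid, and your closing observation that the direct check retraces the derivation of Proposition~\ref{FlatStructure} in reverse is accurate and shows there is no circularity.
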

\begin{proof} 

  1. The connection $\nablab \; : \; \TT_S \times \TT_S \to \TT_S$ is flat and torsionless
  by the proposition~\ref{FlatStructure}. \\

  2. The 1-form $-C$ is symmetric since the multiplication $*$ is commutative. \\

  Condition 1: The connection $\boldsymbol{\nabla}$ reads:
  \begin{equation} \label{eq:SabbahConnection}
    \boldsymbol{\nabla} = \pi^* \nablab -\frac{\pi^* C}{\tau}+\left(\frac{C(E)}{\tau} + \nablab E \right) \frac{\dd \tau}{\tau}.
  \end{equation}
  The formula~\eqref{eq:SabbahConnection} means that for $v,w \in \TT_S \otimes_{\OO_S} \OO_{S \times \mathbb{A}^1}$:
  \begin{equation} \label{eq:SabbahConnection1}
    \begin{aligned}
      &\boldsymbol{\nabla}_v w =  \nablab_v w - \tau^{-1} \, v*w, \\
      &\boldsymbol{\nabla}_{\tau \pd_{\tau}} w = \tau \pd_{\tau} w - \nablab_w E + \tau^{-1} \, E * w.
    \end{aligned}
  \end{equation}
  The integrability conditions read for all $v,u,w \in \TT_S \subset \TT_S \otimes_{\OO_S} \OO_{S \times \mathbb{A}^1},
  \; [v,w]=0$:
  \begin{multline}
    (\nablab_v \nablab_w-\nablab_w\nablab_v) u = 0, \;\;\;  \nablab_v (w*u) + v*\nablab_w u =  \\
    = \nablab_w (v*u) + w*\nablab_v u, \;\;\; v*(w*u)-w*(v*u)=0
  \end{multline}
  from the compatibility of the first line of~\eqref{eq:SabbahConnection1} with itself.
  The first equation holds since $\nablab$ is flat (1. from the proposition~\ref{FlatStructure}),
  the second equation is the symmetry condition 3. iii) from~\ref{FlatStructure} and the third one
  is associativity and commutativity of the $*$-product.
  
  The following condition (together with the already checked) imply the compatibility of the first line
  of~\eqref{eq:SabbahConnection1} with the second one:
  \begin{equation}
    w*\nabla_u E + \nablab_w (E*u) = E*\nablab_w u + \nablab_w (E*u)+w*u+\nablab_{w*u} E.
  \end{equation}
  This equation is nothing but the equation~\eqref{eq:deg*1}.

  To show property 2 we note that $-C(\pd_1)=-Id$ by the primitivity property
  of $\pd_1$. Flatness of $\pd_1$ is 3. i) of the proposition~\ref{FlatStructure}.
\end{proof}

\begin{remark}
  The connection~\eqref{eq:SabbahConnection} from the Saito structures without metric is essentially the Gauss-Manin connection.
  More precisely, a primitive form $\zeta$ defines a $\OO_S$-module isomorphism $\TT_S \to \BB$
  via
  $v \to z v \zeta$. Let $\tau = -z$. Then the connection~\eqref{eq:SabbahConnection} defines a connection on $\BB((z)) \simeq \HH_F$ via
  the isomorphism $\TT_S \to \BB$. Using the formula~\eqref{eq:SabbahConnection1} we compute
  \begin{equation} \label{eq:SabbahConnection2}
    \begin{aligned}
      &\boldsymbol{\nabla}_v \nabla_w \zeta =  \nabla_{\nablab_v w} \zeta + z^{-1} \, \nabla_{v*w} \zeta, \\
      &\boldsymbol{\nabla}_{z \pd_{z}} \nabla_w \zeta = z \pd_{z} \nabla_w \zeta - \nabla_{\nablab_w E} \zeta - z^{-1} \, \nabla_{E * w} \zeta.
    \end{aligned}
  \end{equation}
  The first line is~\eqref{eq:FlatConnection} if we identify $\boldsymbol{\nabla} = \nabla$. The right hand side of the second line
  differs from the right hand side of~\eqref{eq:Exponents} by the summand $r \cdot \nabla_w \zeta$.
\end{remark}

A flat structure without metric also defines a structure
of the flat F-manifold~\cite{Manin} on $S$ which is a similar notion.

\begin{definition}(Manin)
  An \textit{analytic F-manifold with compatible flat structure} (or \textit{flat F-manifold}) is a tuple $(M, \circ, \nabla, e)$,
  where $M$ is a complex manifold, $\circ \; : \; \TT_M \times_{\OO_M} \TT_M \to TM$ is a product on the tangent
  spaces depending holomorphiclly on the point in $M$, $\nabla$ is a connection on $TM$ and $e$
  is a distinguished vector field satisfying the properties:
  \begin{enumerate}
  \item The one-parameter family of connections
    $\nabla - \lambda \circ$ is flat and torsionless for any $\lambda \in \CC$.
  \item The vector field $e$ is a unit of the product and is flat $\nabla e = 0$.
  \end{enumerate}

  An F-manifold $(M, \circ, \nabla, e)$ is called conformal if it is equipped with
  a vector field $E \in \TT_M$ called the Euler vector field such that $\nabla\nabla E = 0$,
  $[e, E] = e$ and $\LL_E(\circ) = \circ$.
\end{definition}

\begin{lemma}
  Let $\zeta$ be a primitive form without metric structure. Then $(S, *, \nablab, \pd_1, E)$
  where $*$ is the multiplication structure on $\TT_S$ defined in~\eqref{eq:*} is a
  conformal flat F-manifold.
\end{lemma}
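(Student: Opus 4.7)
The claim is essentially a translation of Proposition 7.1 into the language of flat F-manifolds, so the plan is to match each axiom of a conformal flat F-manifold with the corresponding statement already established for the flat structure without metric.

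First I would verify the one-parameter flatness and torsion-freeness of $\nabla\!/ - \lambda *$. Torsion-freeness is immediate since $*$ is commutative and $\nabla\!/$ is torsion-free. For flatness, a direct expansion of the curvature of $\nabla\!/ - \lambda *$ separates into three pieces by powers of $\lambda$: the $\lambda^0$ piece is the curvature of $\nabla\!/$ (zero by Proposition~\ref{FlatStructure} 3.\ ii)), the $\lambda^2$ piece is $v*(w*u) - w*(v*u)$ (zero by associativity/commutativity of the Frobenius multiplication coming from \eqref{eq:*}), and the $\lambda^1$ piece is exactly the identity \eqref{eq:*nabla} proved inside the proof of Proposition~\ref{FlatStructure}. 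The plan is to write out this three-piece decomposition and cite the three facts.

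Next I would handle the unit axiom: $\partial_1 * v = v$ holds because $\partial_1$ corresponds to $1\in p_*\mathcal{O}_{C_F}$ under the Kodaira--Spencer isomorphism \eqref{eq:KS2} (cf.\ \eqref{eq:UnitEuler}), and $\nabla\!/\partial_1 = 0$ is Proposition~\ref{FlatStructure} 3.\ i).

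The conformality conditions I would verify one at a time. The identity $\nabla\!/\nabla\!/ E = 0$ as a $(2,1)$-tensor is literally the content of Proposition~\ref{FlatStructure} 3.\ vi). The relation $[\partial_1, E] = \partial_1$ follows by specializing Proposition~\ref{FlatStructure} 3.\ v) to $v=\partial_1$: combined with $N(\partial_1) = r\partial_1$ from 3.\ i), this forces $\nabla\!/_{\partial_1} E = \partial_1$, and then torsion-freeness together with $\nabla\!/\partial_1=0$ gives $[\partial_1, E] = \nabla\!/_{\partial_1}E - \nabla\!/_E\partial_1 = \partial_1$. Finally, $\mathcal{L}_E(*)=*$ is exactly the rewriting of Proposition~\ref{FlatStructure} 3.\ vii), since $(\mathcal{L}_E *)(v,w) = [E,v*w] - [E,v]*w - v*[E,w] = v*w$.

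There is no real obstacle here; the statement is an unpacking of already-proved identities. The only point requiring a small amount of care is the $\lambda^1$-term in the curvature of $\nabla\!/ - \lambda *$, where one must recognize the combination appearing as precisely \eqref{eq:*nabla} rather than reprove it from scratch.
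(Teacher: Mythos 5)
Your proposal is correct and follows essentially the same route as the paper: the paper likewise reduces flatness and torsion-freeness of $\nabla\!\!/ - \lambda\,*$ to the three identities (curvature of $\nabla\!\!/$, the relation \eqref{eq:*nabla}, and associativity of $*$) already extracted in the Saito-structure computation, cites Proposition~\ref{FlatStructure} 3.\ vi) and vii) for $\nabla\!\!/\nabla\!\!/E=0$ and $\mathcal{L}_E(*)=*$, and derives $[\partial_1,E]=\partial_1$ from $N(\partial_1)=r\partial_1$ together with 3.\ v) exactly as you do. Your explicit decomposition of the curvature by powers of $\lambda$ merely spells out what the paper leaves as a cross-reference.
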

\begin{proof}
  Let us first that a primitive form without metric defines a flat F-structure.

  1. Flatness and torsionless of the connection $\nablab - \lambda *$ follow from
  the flatness and torsionless of the $\TT_S$-part of the Gauss-Manin connection
  or from the first line of~\eqref{eq:SabbahConnection1}.

  2. Vector field $\pd_1$ is a unit of the product $*$ and is flat due to the properties
  of the flat structure without metric.\\

  Properties $\nablab \nablab E = 0$ and homogeneity of the $*$-product are is 3. vi) and 3. vii)
  of the proposition~\ref{FlatStructure} correspondingly.

  To show the property $[\pd_1, E] = \pd_1$ we notice that $[\pd_1,E] = \nablab_{\pd_1} E-\nablab_E \pd_1 = \nablab_{\pd_1} E$.
  Now we use
  \begin{equation}
    r \pd_1 = N(\pd_1) = (r+1) \pd_1 - \nablab_{\pd_1} E \implies \nablab_{\pd_1} E = \pd_1.
  \end{equation}
    
\end{proof}

\section*{Acknowledgements}

This work was partially supported by 
JP18H01116 (the second named author),
OIST unit budget (Mathematical and Theoretical Physics Unit).
The first named author is grateful to IPMU, OIST and Shinobu Hikami for hospitality.

\printbibliography

\end{document}